\crefname{equation}{}{}
\crefname{lemma}{Lemma}{Lemmas}
\crefname{example}{Ex.}{Ex.}
\crefname{section}{Sect.}{Sect.}
\crefname{appendix}{App.}{App.}
\crefname{definition}{Def.}{Def.}
\crefname{theorem}{Thm.}{Thm.}
\crefname{corollary}{Cor.}{Cor.}
\crefname{algorithm}{Alg.}{Alg.}
  \spnewtheorem{theorem}{Theorem}[section]{\bfseries}{\itshape}
  \spnewtheorem{corollary}[theorem]{Corollary}{\bfseries}{\itshape}
  \spnewtheorem{lemma}[theorem]{Lemma}{\bfseries}{\itshape}
  \spnewtheorem{definition}[theorem]{Definition}{\bfseries}{\normalfont}
  \spnewtheorem{example}[theorem]{Example}{\itseries}{\normalfont}
  \spnewtheorem{remark}[theorem]{Remark}{\bfseries}{\itshape}
  \spnewtheorem{algorithm}[theorem]{Algorithm}{\bfseries}{\itshape}
\newcommand{\oldcomment}[1]{}
\newcommand{\superfluous}[1]{}
\newcommand{\IZ}{\mathbb{Z}}
\newcommand{\IN}{\mathbb{N}}
\newcommand{\IR}{\mathbb{R}}
\newcommand{\IC}{\mathbb{C}}
\newcommand{\IE}{\mathbb{E}}
\newcommand{\II}{\mathbb{I}}
\newcommand{\OO}{\mathcal{O}}
\renewcommand{\epsilon}{\varepsilon}
\renewcommand{\emptyset}{\varnothing}
\newcommand{\0}{\mathfrak{0}}
\newcommand{\Cyl}{\mathit{Cyl}}
\newcommand{\IP}{\mathbb{P}}
\newcommand{\F}[1]{\mathfrak{#1}}
\newcommand{\C}[1]{\mathcal{#1}}
\newcommand{\dashed}[1]{#1^{\prime}}
\newcommand{\expecP}[2]{\IE^\PP_{#1}\left(#2\right)}
\newcommand{\expecM}[2]{\IE^{\C{M}}_{#1}\left(#2\right)}
\newcommand{\lfp}{\mathrm{lfp}}
\newcommand{\BE}{\mathbf{E}}
\newcommand{\BP}{\mathbf{P}}
\newcommand{\PP}{\mathcal{P}}
\newcommand{\run}[1]{\left\langle #1 \right\rangle}
\newcommand{\uni}{\mathit{rdw}}
\newcommand{\column}[2]{(#1, #2)}
\newcommand{\startvec}[1]{\vec{#1}_0}
\newcommand{\start}[1]{#1_0}
\newcommand{\unisup}{{\mathit{rdw}}}
\newcounter{def-univariate-program}
\newcounter{Univariate Transformation}
\newcommand{\cameraready}[1]{#1}
\newcommand{\techrep}[1]{#1}
\renewcommand{\cameraready}[1]{}
\begin{document}	
\title{Computing
Expected Runtimes for Constant Probability Programs\thanks{Supported by the
  DFG Research 
  Training Group 2236 UnRAVeL and
  the London Mathematical Society (Grant 41662, Research in Pairs).}} 

\titlerunning{Computing
Expected Runtimes}

\author{J\"urgen Giesl\inst{1}
  \and Peter Giesl\inst{2}
  \and Marcel Hark\inst{1}
}

\authorrunning{J.\ Giesl, P.\ Giesl, and M.\ Hark}

\institute{LuFG Informatik 2, RWTH Aachen University,  Germany\\ \email{\{giesl,marcel.hark\}@cs.rwth-aachen.de} \and Department of Mathematics, University of Sussex, UK\\ \email{p.a.giesl@sussex.ac.uk} }
	\date{\today}

        \maketitle
        
        \begin{abstract}
 We introduce the class of
 \emph{constant probability (CP) programs} and show that 
 classical results from probability theory
 directly yield a\linebreak[4] simple decision procedure for 
(positive) almost sure termination of  programs in this
  class.
Moreover, asymptotically tight bounds on their expected runtime can always be
computed easily. 
Based on this,
 we present an algorithm to infer the \emph{exact}
 expected runtime of any CP program.
\keywords{Probabilistic Programs \and Expected Runtimes \and (Positive) Almost Sure Termination  \and Complexity \and Decidability}
\end{abstract}

\section{Introduction}\label{Introduction}

Probabilistic programs are used to describe randomized algorithms and probabil-\linebreak
distributions, with applications in many areas.  
For example, consider the well-\linebreak known program which
models the race between a tortoise and a hare (see, e.g.,\linebreak
\cite{Chakarov13,DBLP:conf/esop/KaminskiKMO16, DBLP:conf/pldi/NgoC018}). As long as  the
tortoise
 (variable $t$) is not behind the hare (variable $h$), 
 \begin{wrapfigure}[5]{r}{5.7cm}
   \centering
   \vspace*{-.75cm}
  \hspace*{-.3cm}\fbox{\begin{minipage}{5cm}\begin{tabbing}
  \= \hspace*{.2cm}\=\kill
  \>\texttt{while} $(h \leq t) \; \{$\\[-.05cm]
  \>\>$t=t+1$; \hspace*{1cm}\\[-.05cm]
  \>\>$\{h = h + \mathit{Unif}(0,10)\}\oplus_{\tfrac{1}{2}} \{h=h\};$\\[-0.2cm]
  \>$\}$
  \end{tabbing}\end{minipage}}
 \end{wrapfigure}
 it does one step in each iteration. With
probability $\tfrac{1}{2}$, the hare stays at its position and with probability $\tfrac{1}{2}$ it does
a random number of steps uniformly chosen between $0$ and $10$. The race ends
when\linebreak the hare is in front of the tortoise.
Here, the hare wins with probability 1  and the\linebreak technique of
\cite{DBLP:conf/pldi/NgoC018} infers
the upper bound
$\tfrac{2}{3}\cdot \max(t-h+9,0)$
on the
expected num-\linebreak ber of loop iterations. Thus, the program is
positively almost surely terminating.

\cref{ConnectionSection} recapitulates
 preliminaries on probabilistic programs and on the
 connection between their expected runtime and 
 their
corresponding recurrence equation.
 Then we show 
 in \cref{BoundsSection,sec:random_walk_programs}
that classical results on random walk theory directly yield a very simple decision
procedure for (positive) almost sure termination  of \emph{CP programs} like the
tortoise and hare
example.
In this way, we also obtain
 asymptotically
tight bounds on the expected runtime of any CP program. \pagebreak
Based on these bounds,  in \cref{sec:exact} we develop 
the first algorithm to compute  closed forms for 
the \emph{exact}
expected runtime of such programs. 
In \cref{Conclusion}, we present its implementation
in our tool \textsf{KoAT} \cite{KoAT}
and discuss
related and future work.
We refer to \techrep{the appendix}\cameraready{\cite{TECHREP}} 
for a collection of examples to illustrate the application of our algorithm
and for all proofs.

\section{Expected Runtimes of Probabilistic Programs}\label{ConnectionSection}
\begin{example}[Tortoise and Hare]\label{exmpl:tortoise_and_hare}
 \sl \hspace*{.08cm} The  \hspace*{.08cm} pro-
{\makeatletter
\let\par\@@par
\par\parshape0
\everypar{}
  \begin{wrapfigure}[9]{r}{4.9cm}
    \centering
    \vspace*{-1.3cm}
    \hspace*{-.1cm}\fbox{\begin{minipage}{5cm}\begin{tabbing}
      \= \hspace*{.3cm}\=\kill
      \>\mbox{\rm \texttt{while}} $\left(\column{1}{-1}\bullet \column{t}{h} > -1\right) \; \{$\\
      \>\>$\column{t}{h} = \column{t}{h} + \column{1}{0}$ \hspace*{.5cm}\=$[\tfrac{6}{11}]$;\\[0.1cm]
      \>\>$\column{t}{h} = \column{t}{h} + \column{1}{1}$\>$[\tfrac{1}{22}]$;\\[0.1cm]
      \>\>$\column{t}{h} = \column{t}{h} + \column{1}{2}$\>$[\tfrac{1}{22}]$;\\[0.1cm]
      \>\>$\column{t}{h} = \column{t}{h} + \column{1}{3}$\>$[\tfrac{1}{22}]$;\\[-0.1cm]
      \>\>\hspace*{.82cm} $\vdots$\\[-0.1cm]
      \>\>$\column{t}{h} = \column{t}{h} + \column{1}{10}$\>$[\tfrac{1}{22}]$;\\
      \>$\}$
    \end{tabbing}\end{minipage}}
  \end{wrapfigure}
  \noindent
  gram $\PP_{race}$ on the right formulates the  race of the tortoise and the hare 
  as a  
  \emph{CP program}.
In the loop guard, we use the  scalar product $\column{1}{-1}\bullet \column{t}{h}$ which
stands for $t - h$. 
  Exactly one of
  the instructions with numbers in brackets $[\ldots]$ is executed in each loop iteration and the
  number indicates the probability that the corresponding instruction is chosen.
\par}
\end{example}

\noindent
We now define the kind of 
probabilistic programs considered in this paper.

\begin{definition}[Probabilistic Program]\label{def-program}
  \hspace*{.02cm} A \hspace*{.02cm} \emph{pro-}
  {\makeatletter
\let\par\@@par
\par\parshape0
\everypar{}\begin{wrapfigure}[6]{r}{3.65cm}
   \centering
   \vspace*{-1.25cm}
 \hspace*{-.15cm}\fbox{\begin{minipage}{5cm}\begin{tabbing}
  \= \hspace*{.2cm}\=\kill
    \>\emph{\texttt{while}} $(\vec{a} \bullet \vec{x} > b) \; \{$\\
    \>\>$\vec{x}=\vec{x} + \vec{c}_1$ \hspace*{.2cm}\=$[p_{\vec{c}_1}(\vec{x})];$\\[-0.2cm]
    \>\>\hspace*{.4cm}$\vdots$\>\\[-0.1cm]
    \>\>$\vec{x}=\vec{x} + \vec{c}_{n}$\>$[p_{\vec{c}_{n}}(\vec{x})];$\\[0.1cm]
    \>\>$\vec{x} = \vec{d}$\>$[\dashed{p}(\vec{x})];$\\
    \>$\}$
\end{tabbing}
   \end{minipage}}
\end{wrapfigure}
\noindent{}\emph{gram} has the form on the right,
   where $\vec{x}=(x_1,\ldots,x_r)$ for some $r \geq 1$
is a tuple of pairwise different program va\-riables,
$\vec{a},\vec{c}_1,\ldots,\vec{c}_{n}\in \IZ^{r}$ are tuples
of integers, the 
$\vec{c}_j$ are pairwise distinct, $b\!\in\!\IZ$, $\bullet$ is the scalar
product (i.e., $(a_1,\ldots,a_{r}) \bullet (x_1,\ldots,x_{r}) =
a_1 \cdot x_1 + \ldots + a_{r} \cdot x_{r}$), and $\vec{d} \in\IZ^{r}$ 
with $\vec{a} \bullet \vec{d} \leq b$. We require $p_{\vec{c}_1}(\vec{x}), \ldots,
p_{\vec{c}_{n}}(\vec{x}), \dashed{p}(\vec{x}) \in \IR_{\geq 0}= \{r \in \IR \mid r \geq
0\}$ and $\sum\nolimits_{1 \leq j \leq n} p_{\vec{c}_j}(\vec{x}) + \dashed{p}(\vec{x}) =
1$ for all $\vec{x} \in \IZ^{r}$. It is a program with \emph{direct termination} if there is an
$\vec{x} \in \IZ^{r}$ with
$\vec{a} \bullet \vec{x} > b$ and $p'(\vec{x}) > 0$.
If all probabilities are constant, i.e., if there are
$p_{\vec{c}_1}, \ldots,p_{\vec{c}_n}, p' \in \IR_{\geq 0}$ such that $p_{\vec{c}_j}(\vec{x}) = p_{\vec{c}_j}$ and $p'(\vec{x}) =p'$ for all
$1 \leq j \leq n$ and all $\vec{x} \in \IZ^{r}$, we call it a \emph{constant probability
  (CP) program}.
\par}

\end{definition}
\noindent
Such a 
program means that the integer variables $\vec{x}$ are changed to $\vec{x}+\vec{c}_j$
with\linebreak[4]
probability $p_{\vec{c}_j}(\vec{x})$. For inputs $\vec{x}$ with $\vec{a} \bullet \vec{x} \leq b$ the program terminates
immediately. Note that the program in \cref{exmpl:tortoise_and_hare} has \emph{no} direct
termination (i.e., $p'(\vec{x}) = 0$ for all $\vec{x} \in \IZ^{r}$).
Since the values of the program variables only depend on their values
in the previous loop iteration, our programs correspond to \emph{Markov Chains}
\cite{MDPsPuterman}
and they are related to \emph{random walks}
\cite{probabilityGrimmett,randomWalkSpitzer,feller50}, cf.\
\cameraready{\cite{TECHREP}}\techrep{the appendix} for details.

Clearly, in
general
termination is undecidable and closed forms for the runtimes of programs  are not
computable.
Thus, 
decidability results can only be obtained for suitably  restricted forms of programs.
Our class nevertheless includes many
examples that are often regarded in the literature on 
probabilistic programs. 
So while
other approaches are concerned with \emph{incomplete} 
techniques to analyze
termination and complexity,
we investigate classes of probabilistic programs where one can
\emph{decide} 
the termination behavior, \emph{always} find complexity bounds, and even \pagebreak
compute the expected runtime \emph{exactly}.
Our decision
procedure could be integrated into general tools for termination and complexity analysis
of probabilistic programs:
As soon as one has to investigate a sub-program that falls
into our class, one can use the decision procedure to compute its exact runtime.
Our contributions  provide a starting point for such results and
the considered class of programs can be extended further
in future work.

In probability theory (see, e.g., \cite{probabilityAsh}), given a set $\Omega$ of possible events, the goal is to measure the
probability that events are in certain subsets of $\Omega$. To this end,
one regards a set $\F{F}$ of subsets of $\Omega$, such that $\F{F}$ contains the full set $\Omega$ and is closed under
complement and countable unions. Such a set $\F{F}$ is called a \emph{$\sigma$-field}, and
a pair of $\Omega$ and a corresponding
$\sigma$-field  $\F{F}$ is called a \emph{measurable space}.

A \emph{probability space} $(\Omega, \F{F}, \IP)$ extends 
a measurable space $(\Omega, \F{F})$ by 
a \emph{probabil-}\linebreak \emph{ity measure} $\IP$ which
maps every set
from $\F{F}$ to a number between 0 and 1, with\linebreak  $\IP(\Omega) = 1$, $\IP(\varnothing) =
0$, and $\IP(\biguplus\nolimits_{j \geq 0} A_j) = \sum\nolimits_{j \geq 0}
\IP(A_j)$ for any pairwise disjoint sets $A_0, A_1,
\ldots \in \F{F}$.
So $\IP(A)$ is the probability that an event from $\Omega$ is in
the subset $A$. 
In our setting, we use the probability space $((\IZ^{r})^\omega, \F{F}^{\IZ^{r}},
\IP_{\startvec{x}}^\PP)$ arising from the standard cylinder-set construction of MDP
theory\techrep{, cf.\ \cref{MDP}}\cameraready{, cf.\ \cite{TECHREP}}. Here,
$(\IZ^{r})^\omega$\linebreak corresponds to
all
infinite sequences of program states and
$\IP_{\startvec{x}}^\PP$ is the
probability measure induced by the program $\PP$ when starting in the state
$\startvec{x}\in \IZ^{r}$. For example, if $A \subseteq (\IZ^{2})^\omega$ consists of all
infinite sequences starting with $\column{5}{1}$, $\column{6}{1}$,\linebreak $\column{7}{6}$,
then $\IP_{\column{5}{1}}^{\PP_{race}}(A) = \tfrac{6}{11} \, \cdot \, 
\tfrac{1}{22} = \tfrac{3}{121}$. So, if one starts with $\column{5}{1}$, then
$\tfrac{3}{121}$ is the
probability that the next two states are $\column{6}{1}$ and $\column{7}{6}$. Once a
state is reached that violates the loop guard, then the probability to remain
in this state is 1. Hence,  if $B$ contains all infinite sequences starting with
$\column{7}{8}$, $\column{7}{8}$, then $\IP_{\column{7}{8}}^{\PP_{race}}(B) = 1$. In the following, for any set of numbers $M$ let
$\overline{M} = M \cup \{ \infty \}$.

\begin{definition}[Termination Time]\label{def:termination_time}
 For a program $\PP$ as in \cref{def-program}, its \emph{termination time} is the random
 variable 
 $T^{\PP}:(\IZ^{r})^\omega \to \overline{\IN}$ that maps
  every infinite sequence $\langle \vec{z}_0, \vec{z}_1, \ldots \rangle$
   to the first index $j$ where $\vec{z}_j$ violates $\PP$'s loop guard.
\end{definition}

\noindent
Thus, $T^{\PP_{race}}(\langle \column{5}{1},\column{6}{1},\column{7}{8},\column{7}{8},
\ldots\rangle) 
= 2$ 
and
$T^{\PP_{race}}(\langle \column{5}{1},\column{6}{1},
\column{5}{6},\linebreak \column{8}{6}, \column{9}{6},\ldots\rangle)
= \infty$ (i.e., this sequence always satisfies $\PP_{race}$'s loop guard as the $j$th entry is $\column{5+j}{6}$ for $j\geq 3$).
Now we can define the different notions of termination and the expected runtime of
a probabilistic program. As usual, for any random variable
$X$ on a probability space $(\Omega, \F{F}, \IP)$, $\IP(X = j)$ stands for
$\IP(X^{-1}(\{j\}))$. So $\IP_{\startvec{x}}^\PP(T^{\PP} =
j)$ is the probability that a sequence has termination time $j$. Similarly,
$\IP_{\startvec{x}}^\PP(T^{\PP} < \infty) = \sum\nolimits_{j \in \IN} \IP_{\startvec{x}}^\PP(T^{\PP} =
j)$. The \emph{expected value}
$\IE(X)$ of a random variable $X: \Omega \to \overline{\IN}$
for a probability space $(\Omega, \F{F}, \IP)$
is the weighted average under the probability measure
$\IP$, i.e., $\IE(X) = \sum\nolimits_{j \in \overline{\IN}} \;
j \cdot
\IP(X = j)$, where
$\infty \cdot 0 = 0$ and $\infty \cdot u = \infty$ for all $u \in \IN_{>0}$.

\begin{restatable}[Termination and Expected Runtime]{definition}{defastpast}\label{AST Def}
A program $\PP$ as in \cref{def-program} is \emph{al\-most surely
    terminating (AST)} if 
  $\IP^{\PP}_{\startvec{x}}(T^{\PP}\!\!<\infty)=1$ for any initial value $\startvec{x} \in \IZ^{r}$.
For any $\startvec{x} \in \IZ^{r}$, its \emph{expected runtime} $rt_{\startvec{x}}^\PP$
(i.e., the expected number of loop iterations) 
is defined as the expected
value of the random variable $T^{\PP}$ \pagebreak
under the probability measure $\IP^{\PP}_{\startvec{x}}$,
i.e.,
$rt_{\startvec{x}}^\PP = \expecP{\startvec{x}}{T^{\PP}} =\linebreak \sum\nolimits_{j \in \IN  } \;j \cdot
\IP_{\startvec{x}}^\PP(T^{\PP}\!\!=\!j)$ if $\IP_{\startvec{x}}^\PP(T^{\PP}\!\!<\!\infty)=1$, and  $rt_{\startvec{x}}^\PP\!= \expecP{\startvec{x}}{T^{\PP}}=\infty$
otherwise.

\noindent{}The program $\PP$ is \emph{positively almost surely terminating (PAST)} if
for any initial value $\startvec{x} \in \IZ^{r}$, the expected runtime of $\PP$ is finite, i.e.,
if  $rt_{\startvec{x}}^\PP = \expecP{\startvec{x}}{T^{\PP}} < \infty$.  
\end{restatable}

\begin{example}[Expected Runtime for $\PP_{race}$]
 {\sl By the observations
   in \cref{sec:random_walk_programs} we will infer
 that $\tfrac{2}{3}\cdot (t-h+1) \leq rt_{(t,h)}^{\PP_{race}} \leq \tfrac{2}{3}\cdot (t-h+1)
   + \tfrac{16}{3}$ holds whenever $t-h > -1$,
   cf.\ \cref{exmpl:bounds_tortoise_and_hare_reduced}. So the expected number of steps
   until termination is finite (and 
   linear in the input variables) and thus,  $\PP_{race}$ is PAST. The algorithm in \cref{sec:exact} will
  even be able to compute $rt_{(t,h)}^{\PP_{race}}$ exactly, cf.\ \cref{Exact
    Expected Runtime of Tortoise and Hare}.}
\end{example}

\noindent
If the initial values $\startvec{x}$ violate the loop guard, 
then the runtime is trivially 0.

\begin{restatable}[Expected Runtime for Violating Initial Values]{corollary}{coroviolatinginitialvalues}\label{Expected_Runtime_Violating}
  For any program $\PP$ as in \cref{def-program} and any
$\startvec{x} \in \IZ^{r}$ with
  $\vec{a} \bullet \startvec{x} \leq b$, we have
  $rt^\PP_{\startvec{x}} = 0$.
\end{restatable}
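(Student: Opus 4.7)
The plan is to observe that the condition $\vec{a} \bullet \startvec{x} \leq b$ is precisely the negation of the loop guard $\vec{a} \bullet \vec{x} > b$, so the initial state already violates the guard. Consequently, for every infinite sequence $\langle \vec{z}_0, \vec{z}_1, \ldots\rangle$ whose initial entry equals $\startvec{x}$, the smallest index $j$ with $\vec{z}_j$ violating the guard is $j = 0$. By \cref{def:termination_time}, this means $T^{\PP}$ evaluates to $0$ on every such sequence.

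Next, I would invoke the fact (implicit in the cylinder-set construction referenced after \cref{def-program}) that the probability measure $\IP^{\PP}_{\startvec{x}}$ is supported on infinite sequences whose first entry is $\startvec{x}$. Combining this with the previous observation, the event $\{T^{\PP} = 0\}$ has full measure: $\IP^{\PP}_{\startvec{x}}(T^{\PP} = 0) = 1$ and $\IP^{\PP}_{\startvec{x}}(T^{\PP} = j) = 0$ for every $j \geq 1$ (including $j = \infty$). In particular $\IP^{\PP}_{\startvec{x}}(T^{\PP} < \infty) = 1$, so the first clause of \cref{AST Def} applies.

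Finally, I would plug these values into the definition of the expected runtime:
\[
rt^{\PP}_{\startvec{x}} \;=\; \sum_{j \in \IN} j \cdot \IP^{\PP}_{\startvec{x}}(T^{\PP} = j) \;=\; 0 \cdot 1 \;=\; 0.
\]
There is no real obstacle here; the only ``non-trivial'' ingredient is the standard fact that the measure $\IP^{\PP}_{\startvec{x}}$ is concentrated on sequences beginning with $\startvec{x}$, which comes directly from the cylinder-set construction cited earlier. The corollary is thus essentially a one-line consequence of unfolding the definitions.
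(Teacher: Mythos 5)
Your proposal is correct and matches the paper's own argument: the paper likewise first establishes $\IP^\PP_{\startvec{x}}(X_0^{\IZ^{r}} = \startvec{x}) = \IP^\PP_{\startvec{x}}(\Cyl^{\IZ^{r}}(\startvec{x})) = \delta_{\startvec{x},\startvec{x}} = 1$ from the cylinder-set construction, then concludes $\IP^\PP_{\startvec{x}}(T^{\PP}=0)=1$ because the guard is already violated at index $0$, and hence $rt^\PP_{\startvec{x}} = \expecP{\startvec{x}}{T^{\PP}} = 0$. The only difference is cosmetic: you argue pointwise on sequences, while the paper phrases the same step as an inclusion of events.
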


\noindent{}To obtain our results, we use
an alternative, well-known characterization of the expected
runtime, cf.\ e.g.,
\cite{Karp94,Bazzi2003,Tassarotti18,DBLP:conf/focs/Kozen79,DBLP:series/mcs/McIverM05,Esparza2005,Brazdil2013,DBLP:conf/esop/KaminskiKMO16,MDPsPuterman}. To
this end, we search for the \emph{smallest} (or ``\emph{least}'') solution 
of the
recurrence equation that describes the runtime of the program as 1 plus the sum
of the runtimes in the next loop iteration, multiplied with the corresponding
probabilities.
Here,
functions are compared pointwise, i.e.,
for $f,g: \IZ^{r} \to 
\overline{\IR_{\geq_0}}$ we have $f \leq g$ if $f(\vec{x}) \leq g(\vec{x})$ holds for all $\vec{x} \in \IZ^{r}$.  
So we search for
the smallest function $f:\IZ^{r} \to \overline{\IR_{\geq 0}}$  that satisfies
\begin{equation}
   \label{recur} \!\!f(\vec{x}) =  \sum\nolimits_{1 \leq j \leq n} p_{\vec{c}_j}(\vec{x})
   \cdot f(\vec{x}+\vec{c}_j)+p'(\vec{x}) \cdot f(\vec{d})+1 \;\;\; \text{for all $\vec{x}$
     with $\vec{a} \bullet \vec{x} > b$}. \;
\end{equation}
Equivalently,
we can search for the least fixpoint of
the  ``expected runtime trans-\linebreak former'' $\C{L}^\PP$ which
transforms the left-hand side of \eqref{recur} into its right-hand side.

\begin{definition}[$\C{L}^\PP$\protect{\textnormal{, cf.~\cite{MDPsPuterman}}}]\label{expected_runtime_transformer}
  For $\PP$ as in \cref{def-program}, we define the \emph{expected
    runtime transformer}
    $\C{L}^\PP\!: (\IZ^{r}\!\to \overline{\IR_{\geq 0}}) \to
(\IZ^{r}\!\to \overline{\IR_{\geq 0}})$, where for any $f\!: \IZ^{r}\!\to
  \overline{\IR_{\geq 0}}$: 
  \[\C{L}^\PP(f)(\vec{x})=\begin{cases}\sum\nolimits_{1 \leq j \leq n} p_{\vec{c}_j}(\vec{x}) \cdot f(\vec{x}+\vec{c}_j)+\dashed{p}(\vec{x})\cdot f(\vec{d}) +
  1,&\text{if } \vec{a} \bullet \vec{x} > b\\
  f(\vec{x}),&\text{if } \vec{a} \bullet \vec{x} \leq b 
  \end{cases}  
  \]
\end{definition}

\begin{example}[Expected Runtime Transformer  for $\PP_{race}$]
  {\sl For $\PP_{race}$ from \cref{exmpl:tortoise_and_hare}, $\C{L}^{\PP_{race}}$ maps any function  $f: \IZ^{2} \to
    \overline{\IR_{\geq 0}}$ to $\C{L}^{\PP_{race}}(f)$, where
$\C{L}^{\PP_{race}}(f)(t,h) =$}  
  \begin{equation}
    \label{rhs of ert example}
\begin{cases}
    \tfrac{6}{11} \cdot f(t+1, h) + \tfrac{1}{22} \cdot \sum\nolimits_{1 \leq
      j \leq 10} f(t+1, h+j) + 1, &\text{if $t-h > -1$}\\
    f(t,h), &\text{if $t-h \leq -1$}
    \end{cases}
  \end{equation}
 \end{example}

\noindent{}\cref{correctness_of_ert}
recapitulates that the least fixpoint of $\C{L}^\PP$
indeed yields an equivalent characterization of the expected runtime.
In the following, let
$\0:\IZ^{r} \to  \overline{\IR_{\geq 0}}$ be the function with $\0(\vec{x}) = 0$ for all
$\vec{x} \in \IZ^{r}$.

\begin{restatable}[Connection Between Expected Runtime and Least Fixpoint of
    $\C{L}^\PP$\protect{\textnormal{, cf.~\cite{MDPsPuterman}}}]{theorem}{theoremcorrectnessert}\label{correctness_of_ert}
For any $\PP$ as in \cref{def-program}, the expected runtime transformer
$\C{L}^\PP$ is continuous.  Thus, it
has a least fixpoint  $\lfp(\C{L}^P): \IZ^{r} \to \overline{\IR_{\geq_0}} $ with $\lfp(\C{L}^\PP) = \sup \{ \0, \C{L}^\PP(\0),  (\C{L}^\PP)^2(\0), \ldots
\}$.
Moreover, the least fixpoint of $\C{L}^\PP$ is the expected runtime of $\PP$, i.e., for any $\startvec{x} \in \IZ^{r}$, we have
 $\lfp(\C{L}^\PP)(\startvec{x})=rt^\PP_{\startvec{x}}$.
 \end{restatable}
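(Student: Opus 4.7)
I would handle the three claims in sequence.

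\emph{Continuity.} The pointwise-ordered set of functions $\IZ^{r} \to \overline{\IR_{\geq 0}}$ is a complete lattice with bottom $\0$. For any ascending chain $f_0 \leq f_1 \leq \ldots$ with supremum $f$, I would verify $\C{L}^\PP(f) = \sup_k \C{L}^\PP(f_k)$ pointwise. At $\vec{x}$ with $\vec{a} \bullet \vec{x} \leq b$ both sides equal $f(\vec{x})$. At $\vec{x}$ with $\vec{a} \bullet \vec{x} > b$ the value $\C{L}^\PP(g)(\vec{x})$ is a \emph{finite} nonnegative combination of evaluations of $g$ at the fixed points $\vec{x}+\vec{c}_1,\ldots,\vec{x}+\vec{c}_n,\vec{d}$ plus the constant $1$, and such combinations commute with directed suprema in $\overline{\IR_{\geq 0}}$. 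Kleene's fixpoint theorem then yields $\lfp(\C{L}^\PP) = \sup_n (\C{L}^\PP)^n(\0)$.

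\emph{Kleene iterates as truncated expectations.} Writing $f_n := (\C{L}^\PP)^n(\0)$, the key step is the identity
\[
f_n(\vec{x}) \;=\; \expecP{\vec{x}}{\min(T^\PP, n)} \qquad \text{for every } \vec{x} \in \IZ^{r}, \; n \in \IN,
\]
which I would prove by induction on $n$. The case $n=0$ is immediate, and when $\vec{a} \bullet \vec{x} \leq b$ we have $T^\PP = 0$ almost surely under $\IP^\PP_{\vec{x}}$, matching $f_n(\vec{x}) = 0$ for all $n$. For the inductive step at $\vec{x}$ with $\vec{a} \bullet \vec{x} > b$, I would condition on the first loop iteration using the cylinder-set description of $\IP^\PP_{\vec{x}}$: the first successor is $\vec{x}+\vec{c}_j$ with probability $p_{\vec{c}_j}(\vec{x})$ and $\vec{d}$ with probability $p'(\vec{x})$, and the tail is distributed as $\IP^\PP_{\cdot}$ started in that successor. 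Writing $\sigma$ for the one-step shift we have $\min(T^\PP, n+1) = 1 + \min(T^\PP \circ \sigma, n)$ on the whole space (both sides are $0$ is ruled out since the guard holds at $\vec{x}$). Splitting the expectation along the first step and invoking the induction hypothesis yields $1 + \sum_j p_{\vec{c}_j}(\vec{x}) f_n(\vec{x}+\vec{c}_j) + p'(\vec{x}) f_n(\vec{d}) = \C{L}^\PP(f_n)(\vec{x}) = f_{n+1}(\vec{x})$.

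\emph{Passage to the limit.} As $n \to \infty$, $\min(T^\PP, n) \nearrow T^\PP$ pointwise on $(\IZ^{r})^\omega$ (on paths with $T^\PP < \infty$ the sequence is eventually constant equal to $T^\PP$, and on paths with $T^\PP = \infty$ it tends to $\infty$). Monotone convergence applied under $\IP^\PP_{\vec{x}}$ gives
\[
\lfp(\C{L}^\PP)(\vec{x}) \;=\; \sup_n f_n(\vec{x}) \;=\; \sup_n \expecP{\vec{x}}{\min(T^\PP, n)} \;=\; \expecP{\vec{x}}{T^\PP} \;=\; rt^\PP_{\vec{x}},
\]
where the last equality is \cref{AST Def}, valid both in the AST case and in the non-AST case (where both sides are $\infty$).

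The main obstacle is the inductive step of the key identity: it is where one actually uses the Markov structure of $\IP^\PP_{\vec{x}}$ inherited from the cylinder-set construction to split the expectation of $\min(T^\PP, n+1)$ along the first iteration. Once this decomposition is justified, the continuity argument and the final monotone-convergence step are routine.
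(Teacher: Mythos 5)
Your proof is correct, but it takes a genuinely different route from the paper's. The paper does not argue directly on the program: it translates $\PP$ into a Discrete Time Markov Chain $\C{M}_\PP$ with reward $1$ on states satisfying the guard and $0$ otherwise, shows that the total reward of a run (constant on violating states) equals $T^{\PP}$, verifies that $\C{L}^\PP$ coincides with the standard total-reward transformer $\C{L}^{\C{M}_\PP}$, and then imports from Puterman the facts that the expected total reward is a fixpoint of $\C{L}^{\C{M}}$ and is dominated by every $f$ with $f \geq \C{L}^{\C{M}}(f)$; continuity is proved by the same finite-nonnegative-combination argument you give, and Kleene's theorem closes the argument. You instead stay entirely inside the program semantics and prove the explicit identity $(\C{L}^\PP)^n(\0)(\vec{x}) = \expecP{\vec{x}}{\min(T^{\PP},n)}$ by induction, using the one-step Markov decomposition of $\IP^\PP_{\vec{x}}$, and then pass to the limit by monotone convergence (which correctly covers both the AST case and the case $\IP^\PP_{\vec{x}}(T^{\PP}=\infty)>0$, where $\expecP{\vec{x}}{\min(T^{\PP},n)} \geq n \cdot \IP^\PP_{\vec{x}}(T^{\PP}=\infty) \to \infty$). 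Your route is more self-contained and gives each Kleene iterate a concrete probabilistic meaning as a truncated expected runtime, at the cost of having to justify the first-step decomposition of $\IP^\PP_{\vec{x}}$ by hand from the cylinder-set construction — which is exactly the step you correctly identify as the crux, and which in the paper's route is absorbed into the cited MDP results. The paper's route buys reuse of standard total-reward theory and a DTMC correspondence that is conceptually convenient elsewhere, but requires the auxiliary translation layer (probability measures, reward functions, and the identification of total reward with termination time) that your argument avoids.
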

 So the expected runtime $rt^{\PP_{race}}_{(t,h)}$ can also be characterized as the smallest
  function $f\!: \IZ^{2}\!\to
  \overline{\IR_{\geq 0}}$ satisfying $f(t,h)\!=\!\eqref{rhs of ert example}$, i.e.,
 as the least fixpoint of $\C{L}^{\PP_{race}}$.

\section{\hspace*{-0.2cm}Expected Runtime of Programs with Direct Termination}\label{BoundsSection}
We start with stating a decidability
result for the case where for all $\vec{x}$ with
 $\vec{a}\bullet\vec{x} > b$,
the probability
$p'(\vec{x})$ for
direct termination is at least $p'$ for some $p' > 0$. Intuitively, these programs have a termination time whose distribution is closely related to the geometric distribution with parameter $p'$ (which has expected value $\tfrac{1}{p'}$). By using the alternative characterization of $rt^\PP_{\startvec{x}}$
from
\cref{correctness_of_ert},
one obtains that
such programs
are \emph{always PAST} and their expected runtime
is indeed \emph{bounded by the constant} $\tfrac{1}{p'}$.
This result will be used in \cref{sec:exact} when computing the \emph{exact} expected
runtime of such programs.
The more involved case where $p'(\vec{x}) = 0$ is
considered in \cref{sec:random_walk_programs}. 

\begin{restatable}[PAST and Expected Runtime for Programs With Direct Termination]{theorem}{theoremPAST}\label{PAST}
  Let $\PP$ be a program as in \cref{def-program}
  where there is a $p' > 0$ such that
  $p'(\vec{x}) \geq p'$  for all $\vec{x} \in \IZ^{r}$ with $\vec{a}\bullet \vec{x} > b$.
Then $\PP$ is PAST and its expected runtime is at most $\tfrac{1}{p'}$, i.e.,
$rt^\PP_{\startvec{x}} \leq \tfrac{1}{p'}$ if $\vec{a}\bullet\startvec{x} > b$, and $rt^\PP_{\startvec{x}} = 0$ if $\vec{a}\bullet\startvec{x} \leq b$.
\end{restatable}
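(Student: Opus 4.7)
My plan is to exploit \cref{correctness_of_ert}: since $rt^\PP_{\startvec{x}} = \lfp(\C{L}^\PP)(\startvec{x})$ and the least fixpoint equals the Kleene iteration $\sup_n (\C{L}^\PP)^n(\0)$, it suffices to produce a pre-fixpoint $f$ of $\C{L}^\PP$ with $f(\startvec{x}) \leq \tfrac{1}{p'}$ whenever $\vec{a}\bullet\startvec{x} > b$, and $f(\startvec{x}) = 0$ otherwise. Monotonicity of $\C{L}^\PP$ (which follows from its continuity stated in \cref{correctness_of_ert}) together with $\0 \leq f$ then gives $(\C{L}^\PP)^n(\0) \leq f$ for every $n$ by induction, hence $\lfp(\C{L}^\PP) \leq f$.

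The natural candidate, mirroring the expected value $\tfrac{1}{p'}$ of a geometric variable with parameter $p'$, is
\[
 f(\vec{x}) = \begin{cases} \tfrac{1}{p'}, & \text{if } \vec{a}\bullet\vec{x} > b \\ 0, & \text{if } \vec{a}\bullet\vec{x} \leq b. \end{cases}
\]
For the non-trivial case $\vec{a}\bullet\vec{x} > b$, I would plug $f$ into $\C{L}^\PP$. Crucially, $\vec{a}\bullet\vec{d} \leq b$ by \cref{def-program}, so $f(\vec{d}) = 0$, and each $f(\vec{x}+\vec{c}_j) \leq \tfrac{1}{p'}$. Using $\sum_j p_{\vec{c}_j}(\vec{x}) = 1 - p'(\vec{x})$ and the hypothesis $p'(\vec{x}) \geq p'$, the computation reduces to
\[
 \C{L}^\PP(f)(\vec{x}) \;\leq\; \tfrac{1}{p'}\bigl(1 - p'(\vec{x})\bigr) + 1 \;=\; \tfrac{1}{p'} - \tfrac{p'(\vec{x})}{p'} + 1 \;\leq\; \tfrac{1}{p'},
\]
so $\C{L}^\PP(f)(\vec{x}) \leq f(\vec{x})$. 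The case $\vec{a}\bullet\vec{x} \leq b$ is immediate since $\C{L}^\PP(f)(\vec{x}) = f(\vec{x}) = 0$.

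Combining these steps yields $rt^\PP_{\startvec{x}} = \lfp(\C{L}^\PP)(\startvec{x}) \leq f(\startvec{x}) \leq \tfrac{1}{p'} < \infty$ for any $\startvec{x}$, which gives both the stated bound and, consequently, that $\PP$ is PAST. For $\vec{a}\bullet\startvec{x}\leq b$, $rt^\PP_{\startvec{x}}=0$ already follows from \cref{Expected_Runtime_Violating}. There is no genuine obstacle; the only subtle point is the algebraic verification of the pre-fixpoint inequality, where one must carefully track that the jump $\vec{x}\mapsto\vec{d}$ lands in the guard-violating region so that $f(\vec{d})=0$, which is exactly the role played by the condition $\vec{a}\bullet\vec{d}\leq b$ in \cref{def-program}.
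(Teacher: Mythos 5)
Your proof is correct and follows essentially the same route as the paper's: both exhibit the function $f$ with $f(\vec{x})=\tfrac{1}{p'}$ on guard-satisfying states and $0$ elsewhere as a pre-fixpoint of $\C{L}^\PP$ (using $\vec{a}\bullet\vec{d}\leq b$ so that $f(\vec{d})=0$, and $\sum_j p_{\vec{c}_j}(\vec{x}) = 1 - p'(\vec{x}) \leq 1 - p'$), and conclude $rt^\PP_{\startvec{x}} = \lfp(\C{L}^\PP)(\startvec{x}) \leq f(\startvec{x})$ via monotonicity and the Kleene characterization from \cref{correctness_of_ert}. No gaps.
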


\begin{example}[\cref{exmpl:tortoise_and_hare}
    with Direct Termination]\label{exmpl:tortoise_and_hare_direct_catch_up}
{\makeatletter
\let\par\@@par
\par\parshape0
\everypar{}
  \begin{wrapfigure}[4]{r}{4.9cm}
    \centering
    \vspace*{-1.3cm}
    \hspace*{-.1cm}\fbox{\begin{minipage}{5cm}
    \begin{tabbing}
      \= \hspace*{.3cm}\=\kill
      \>\texttt{while} $\left(\column{1}{-1}\bullet \column{t}{h} > -1\right) \; \{$\\
      \>\>$\column{t}{h} = \column{t}{h} + \column{1}{0}$ \hspace*{.4cm}\=$[\tfrac{9}{10}];$\\[0.1cm]
      \>\>$\column{t}{h} = \column{7}{8}$\>$[\tfrac{1}{10}];$\\
      \>$\}$
    \end{tabbing}
        \end{minipage}}
  \end{wrapfigure}
\sl  \noindent{}Consider the variant $\PP_{direct}$ of $\PP_{race}$ on the right,
where in each iteration, the hare
  either does nothing with probability  $\tfrac{9}{10}$ or one directly reaches a 
  configuration where  the hare is ahead of the tortoise.
  By \cref{PAST} the program is PAST and its expected runtime is at most
  $\tfrac{1}{\frac{1}{10}}=10$,
  i.e., independent of the initial state it
  takes at most 10 loop iterations on average.
  In \cref{sec:exact}
it will turn
out that
 10 is indeed the
 exact expected runtime, cf.\ \cref{exmpl:tortoise_and_hare_direct_catch_up_cont}.\par}
\end{example}

\section{Expected Runtimes of Constant Probability Programs}\label{sec:random_walk_programs}
Now we present a very simple decision procedure for  termination
of CP programs (\cref{sec:decidability_of_termination}) and show how to infer their asymptotic expected runtimes 
(\cref{sec:computability_of_asymptotic_expected_runtimes}). This will be needed
for the  computation of  exact expected runtimes
in \cref{sec:exact}.

\input{generalisation.tex}

\subsection{Deciding Termination}\label{sec:decidability_of_termination}

We now present a simple decision procedure for (P)AST of random walk programs $\PP$.
By the results of
\cref{sec:Restriction_to_Random_Walk_Programs}, this also yields a 
decision procedure for arbitrary CP programs.
If $p' > 0$, then \cref{PAST} already shows that $\PP$ is PAST
and its expected runtime is bounded by the constant $\tfrac{1}{p'}$. Thus, in the
rest of \cref{sec:random_walk_programs} we regard
\emph{random walk programs without direct termination}, i.e.,   $p' = 0$.

\Cref{Drift} introduces the \emph{drift} of a random walk program, i.e., the expected
value of the change of the program variable in one loop iteration, cf.\ \cite{BournezGarnier05}.

\begin{definition}[Drift]\label{Drift}
  Let $\PP$ be a random walk program $\PP$ as in
  Def.\ \hyperlink{def-univariate-program}{\arabic{def-univariate-program}}.
 Then
  its \emph{drift} is $\mu_\PP = \sum\nolimits_{-k \leq j \leq m} j \cdot p_j$.
\end{definition}

\noindent{}\cref{termination_decidable}
shows that
to decide (P)AST, 
one just has to compute the 
drift.

\begin{restatable}[Decision Procedure for (P)AST  of Random Walk Programs]{theorem}{thmdecidability}\label{termination_decidable}
  Let $\PP$ be a non-trivial random walk program without direct termination.
  \begin{enumerate}
    \item[$\bullet$] If $\mu_\PP > 0$,  then the program is \emph{not} AST.
    \item[$\bullet$] If $\mu_\PP = 0$,
      then the program is AST but  \emph{not} PAST.
    \item[$\bullet$] If $\mu_\PP < 0$, then the program is PAST.
  \end{enumerate}
\end{restatable}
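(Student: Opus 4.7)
The plan is to reduce the theorem to classical results on random walks on $\IZ$. Since $\PP$ is a random walk program (so $\PP = \PP^\unisup$), let $Y_1, Y_2, \ldots$ be i.i.d.\ with $\IP(Y_i = j) = p_j$ for $-k \leq j \leq m$, put $X_n = x_0 + \sum_{i=1}^n Y_i$ for an initial value $x_0 > 0$, and consider the hitting time $T = \inf\{n \geq 0 : X_n \leq 0\}$. By construction, $T$ is distributed like $T^\PP$ from \cref{def:termination_time} when starting from $x_0$. Moreover, the jumps $Y_i$ take values in $\{-k,\ldots,m\}$, hence are bounded; in particular $\IE[Y_1] = \mu_\PP$ and $Y_1$ has finite variance.

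For $\mu_\PP > 0$, I would invoke the Strong Law of Large Numbers to conclude $X_n/n \to \mu_\PP > 0$ almost surely, hence $X_n \to \infty$ a.s. This forces $M := \inf_{n \geq 0}(X_n - x_0)$ to be a.s.\ finite, so $\IP(M > -L) \to 1$ as $L \to \infty$. Picking $x_0 > L$ sufficiently large then gives $\IP(T = \infty) > 0$ for that starting value, which violates AST (since AST requires $\IP(T^\PP < \infty) = 1$ for \emph{all} initial values).

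For $\mu_\PP < 0$, the plan is a Foster--Lyapunov / ranking-function argument. Set $V(x) = (x + k)/|\mu_\PP|$. Since every jump is $\geq -k$ and $X_{T-1} \geq 1$, the overshoot satisfies $X_T \geq 1 - k$, so $V(X_n) \geq 0$ for every $n$. A direct computation yields $\IE[V(X_{n+1}) \mid X_n = x, T > n] = V(x) + \mu_\PP/|\mu_\PP| = V(x) - 1$, whence $V(X_{n \wedge T}) + (n \wedge T)$ is a non-negative supermartingale. Optional stopping together with monotone convergence then produces $rt_{x_0}^\PP = \IE[T] \leq V(x_0) = (x_0 + k)/|\mu_\PP| < \infty$, proving PAST.

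The $\mu_\PP = 0$ case is the main obstacle, since AST and non-PAST must hold simultaneously, which rules out any single deterministic drift argument. The plan is first to establish AST by recurrence: since $(Y_i)$ is i.i.d., centred, and has finite variance, the Chung--Fuchs theorem (see e.g.~\cite{probabilityGrimmett,randomWalkSpitzer,feller50}) shows that $X_n$ is recurrent on $\IZ$; in particular it hits $(-\infty, 0]$ almost surely, so $T < \infty$ a.s. To see that $\PP$ is not PAST, I would argue by contradiction: assuming $\IE[T] < \infty$, Wald's identity applies and yields $\IE[X_T] - x_0 = \mu_\PP \cdot \IE[T] = 0$, i.e.\ $\IE[X_T] = x_0 > 0$; but $X_T \leq 0$ a.s., a contradiction, so $\IE[T] = \infty$. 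The subtlety is that Wald's identity requires $\IE[T] < \infty$ as a hypothesis, which is precisely why the argument must be run for contradiction.
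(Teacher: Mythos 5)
Your proof is correct, but it takes a genuinely different route from the paper's. The paper funnels all three cases through a single classical lemma of Spitzer \cite[Thm.~17.1, Prop.~18.1]{randomWalkSpitzer}: its real work is to construct an auxiliary probability measure $\BP^{\PP}_{x_0}$ under which the increments are honestly i.i.d.\ (under the program's own measure $\IP^{\PP}_{x_0}$ they are \emph{not}, since the state is frozen once the guard fails) and to show that $T^{\PP}$ has the same distribution under both measures; the three bullet points are then read off directly from the cited lemma. You instead assemble the cases from separate classical ingredients: the strong law of large numbers for $\mu_\PP>0$, a Foster--Lyapunov/optional-stopping argument for $\mu_\PP<0$, and Chung--Fuchs plus Wald's identity run for contradiction for $\mu_\PP=0$. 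All three of your arguments go through --- the paper itself sketches essentially this alternative in the discussion after the theorem, attributing the pieces to the (ranking/repulsing) supermartingale literature --- and your $\mu_\PP<0$ case has the bonus of re-deriving an explicit linear bound $\IE[T]\le (x_0+k)/|\mu_\PP|$, which is essentially the content of the paper's later \cref{bounds_runtime_constant_probability_programs} (obtained there from Wald's lemma rather than optional stopping). Two small points worth tightening: the step ``$T$ is distributed like $T^{\PP}$ by construction'' is precisely where the paper needs a lemma, namely the coupling between the frozen program dynamics and the free walk, so it deserves at least a sentence of justification; and in the $\mu_\PP=0$ case the inference from recurrence to ``hits $(-\infty,0]$ almost surely'' uses $p_0\neq 1$, i.e.\ the non-triviality hypothesis, which you should invoke explicitly.
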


\begin{example}[$\PP_{race}$ is PAST]\label{Prace PAST}
\sl
The drift of   $\PP^\unisup_{race}$ in \cref{exmpl:tortoise_and_hare_reduced}
 is $\mu_{\PP^\unisup_{race}}=1\cdot \tfrac{6}{11}+ \tfrac{1}{22} \cdot\sum\nolimits_{-9 \leq j \leq
    0} j=-\tfrac{3}{2}<0$.
 So on average the distance $x$ between
the  tortoise and the hare decreases in each loop iteration. Hence  by
  \cref{termination_decidable},  $\PP^\unisup_{race}$ is PAST and the following \cref{coro1} implies that
  $\PP_{race}$ is PAST as well.
  \end{example}

\begin{restatable}[Decision Procedure for (P)AST of CP programs]{corollary}{corohelp}\label{coro1}
  \noindent{}For a non-trivial CP program $\PP$, $\PP$ is (P)AST iff
$\PP^\unisup$ is (P)AST. 
  Hence, \cref{transformation_preserves_behavior,termination_decidable} yield a decision
  procedure for
AST and PAST of CP programs.
\end{restatable}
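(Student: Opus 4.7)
The plan is to reduce the question about $\PP$ to the same question about the random walk program $\PP^\unisup$ via \cref{transformation_preserves_behavior}, and then to apply the drift-based classification of \cref{termination_decidable} (falling back on \cref{PAST} when $p' > 0$). The equivalence splits into two directions, but only one is immediate.

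For the ``if'' direction, if $\PP^\unisup$ is (P)AST then it is (P)AST from every $x_0 \in \IZ$, and by \cref{transformation_preserves_behavior} the distribution of $T^\PP$ under $\IP^\PP_{\startvec{x}}$ equals that of $T^{\PP^\unisup}$ under $\IP^{\PP^\unisup}_{\uni_\PP(\startvec{x})}$, so (P)AST transfers to $\PP$ from every $\startvec{x}\in\IZ^r$. The converse is the main obstacle: since $\uni_\PP(\vec{z}) = \vec{a}\bullet\vec{z} - b$ is affine with image $g\IZ - b$ where $g = \gcd(a_1,\ldots,a_r)$, this image need not equal $\IZ$, so \cref{transformation_preserves_behavior} a priori only delivers (P)AST of $\PP^\unisup$ from initial states in the image of $\uni_\PP$, not from every $x_0 \in \IZ$.

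To bridge this gap I would use that $\PP$ is non-trivial and hence $\vec{a}\neq\vec{0}$, so the image of $\uni_\PP$ is unbounded above. Thus one can pick $\startvec{x}$ with $\uni_\PP(\startvec{x}) = x_0^\star > 0$, giving (P)AST of $\PP^\unisup$ from the single positive state $x_0^\star$. If $p' > 0$, \cref{PAST} already yields PAST of $\PP^\unisup$ from every $x_0 \in \IZ$. Otherwise, the standard one-dimensional random walk classification (which underpins \cref{termination_decidable}) tells us that $\mu_{\PP^\unisup} > 0$ would force failure of AST from \emph{every} positive starting state, and $\mu_{\PP^\unisup} = 0$ would force failure of PAST from every positive starting state, each contradicting (P)AST from $x_0^\star$. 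Hence $\mu_{\PP^\unisup}$ lies in the range demanded by \cref{termination_decidable}, which then gives (P)AST of $\PP^\unisup$ on all of $\IZ$. The decision procedure is now immediate: construct $\PP^\unisup$ from $\PP$; apply \cref{PAST} if $p' > 0$, otherwise compute the finite sum $\mu_{\PP^\unisup} = \sum_{-k_\PP \leq j \leq m_\PP} j \cdot p^\unisup_j$ and read off (P)AST from \cref{termination_decidable}, transporting the conclusion back to $\PP$ through the established iff.
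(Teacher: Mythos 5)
Your proposal is correct and follows essentially the same route as the paper's proof: reduce to $\PP^\unisup$ via \cref{transformation_preserves_behavior}, observe that non-triviality gives $\vec{a}\neq\vec{0}$ and hence a positive value in the image of $\uni_\PP$, and then use the fact that the drift-based classification of \cref{termination_decidable} (resp.\ \cref{PAST} when $p'>0$) is uniform over all positive starting states. Your write-up is merely more explicit than the paper's about why a single positive witness $x_0^\star$ in the image suffices to propagate (P)AST to all of $\IZ$; the paper compresses this into the remark that the termination behavior of $\PP^\unisup$ is the same for all $x>0$.
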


\noindent{}In \techrep{the appendix}\cameraready{\cite{TECHREP}}, we show that
\cref{termination_decidable} 
follows from classical results on random
 walks \cite{randomWalkSpitzer}.
Alternatively, \cref{termination_decidable} could also be proved
by combining several recent results on probabilistic programs: The approach of
\cite{DBLP:journals/pacmpl/McIverMKK18} could be used to show that  $\mu_\PP = 0$ implies
AST. Moreover, one could prove that $\mu_\PP < 0$ implies PAST by
showing that $x$ is a \emph{ranking supermartingale} of the program
\cite{BournezGarnier05,Chakarov13,ChatterjeeTOPLAS18,DBLP:conf/popl/FioritiH15}.
That the program is not PAST if $\mu_\PP
\geq 0$ and not AST if $\mu_\PP > 0$ could be proved by showing that $-x$
is a \emph{$\mu_\PP$-repulsing supermartingale} \cite{ChatterjeePOPL2017}.

While the proof of \cref{termination_decidable} is based on known results, the formulation of
\cref{termination_decidable} shows that there is an extremely \emph{simple} decision procedure
for (P)AST of CP programs, i.e., 
 checking the sign of the drift is much simpler than applying existing (general)
techniques for termination analysis of probabilistic
programs.

\subsection{Computing Asymptotic Expected Runtimes}\label{sec:computability_of_asymptotic_expected_runtimes}

It turns out that for random walk programs (and thus   by
\cref{transformation_preserves_behavior}, also for CP programs), 
one can not only decide termination, but one can also infer
tight bounds on the expected runtime. \cref{bounds_runtime_constant_probability_programs}
shows that the computation of the bounds is again very 
\emph{simple}.

\begin{restatable}[Bounds on the Expected Runtime of CP
    Programs]{theorem}{thmboundsgeneralized}\label{bounds_runtime_constant_probability_programs}
  
  \noindent{}Let $\PP$ be a non-trivial CP program as in
\cref{def-program}
  without direct
 termination which is PAST (i.e., 
 $\mu_{\PP^\unisup}<0$).
 Moreover, let $k_\PP$ be obtained according to the transformation from Def.\ \hyperlink{Univariate Transformation}{\arabic{Univariate Transformation}}. 
If $\uni_\PP(\startvec{x}) \leq 0$, then
$rt_{\startvec{x}}^\PP = 0$. If  $\uni_\PP(\startvec{x}) >
 0$, then $\PP$'s expected runtime is asymptotically linear
and
 we have  
  \[-\tfrac{1}{\mu_{\PP^\unisup}}\cdot \uni_{\PP}(\startvec{x}) \quad \leq \quad rt_{\startvec{x}}^\PP
  \quad \leq \quad
  -\tfrac{1}{\mu_{\PP^\unisup}}\cdot \uni_{\PP}(\startvec{x}) +
  \tfrac{1-k_\PP}{\mu_{\PP^\unisup}}. \]
\end{restatable}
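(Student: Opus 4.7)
The plan is to first reduce to the random walk program $\PP^\unisup$ via \cref{transformation_preserves_behavior}: writing $x_0 := \uni_{\PP}(\startvec{x})$ and $\mu := \mu_{\PP^\unisup} < 0$, one has $rt^\PP_{\startvec{x}} = rt^{\PP^\unisup}_{x_0}$. If $x_0 \leq 0$, the initial state already violates the guard $x > 0$ of $\PP^\unisup$, so \cref{Expected_Runtime_Violating} gives $rt^{\PP^\unisup}_{x_0} = 0$ and the first part of the theorem is settled. It therefore suffices to prove the two-sided bound for $x_0 > 0$.

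I would model a run of $\PP^\unisup$ from $x_0$ as a standard random walk: let $Y_1, Y_2, \ldots$ be i.i.d.\ with $\IP(Y_i = j) = p^\unisup_j$ for $-k_\PP \leq j \leq m_\PP$, so that $\IE[Y_1] = \mu$ and the $Y_i$ are bounded. Set $X_n = x_0 + \sum_{i=1}^n Y_i$ and $T = \inf\{n \geq 0 : X_n \leq 0\}$, so that $T$ is precisely the termination time of $\PP^\unisup$ on input $x_0$ and $\IE[T] = rt^{\PP^\unisup}_{x_0} < \infty$ by PAST. The core of the argument is Wald's identity: since $T$ is a stopping time with $\IE[T] < \infty$ with respect to the natural filtration of $(Y_i)_i$, and the $Y_i$ are i.i.d.\ and integrable,
\begin{equation*}
  \IE[X_T] - x_0 \;=\; \mu \cdot \IE[T], \qquad \text{equivalently} \qquad \IE[T] \;=\; -\tfrac{x_0}{\mu} + \tfrac{\IE[X_T]}{\mu}.
\end{equation*}

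It remains to bracket $\IE[X_T]$. By definition of $T$ we have $X_T \leq 0$; conversely, $X_{T-1} \geq 1$ (the guard still held at step $T-1$) and the smallest possible decrement $Y_T \geq -k_\PP$ together give $X_T \geq 1 - k_\PP$. Taking expectations yields $1 - k_\PP \leq \IE[X_T] \leq 0$; dividing by $\mu < 0$ flips the inequalities, and substituting into the expression for $\IE[T]$ reproduces exactly the two claimed bounds. Asymptotic linearity in $\startvec{x}$ is then immediate, since $\uni_\PP$ is affine. The main obstacle is the clean justification of Wald's identity, which is where the boundedness of the increments and $\IE[T] < \infty$ (both true in our setting, the latter via \cref{termination_decidable}) enter crucially. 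A more ``syntactic'' route would bypass Wald by exhibiting $f(x) := -x/\mu + (1-k_\PP)/\mu$ as a pre-fixpoint of $\C{L}^{\PP^\unisup}$, so that \cref{correctness_of_ert} together with Park's theorem delivers the upper bound, and the lower bound is obtained from an optional stopping argument applied to the martingale $X_n - n\mu$.
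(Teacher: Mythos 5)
Your proof is correct and follows essentially the same route as the paper: reduce to the random walk program $\PP^\unisup$ via \cref{transformation_preserves_behavior}, apply Wald's identity to the i.i.d.\ increments with the termination time as a stopping time, and bracket $\IE[X_T]$ between $1-k_\PP$ and $0$. The only point the paper treats with more care is the identification of the program's termination time with the hitting time of a genuinely i.i.d.\ walk --- under the program's own measure the increments are \emph{not} i.i.d.\ (the state freezes once the guard fails), so the paper introduces an adapted measure under which the walk continues and shows the termination time is unaffected; you assert this identification implicitly.
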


\begin{example}[Bounds on the Runtime of $\PP_{race}$]\label{exmpl:bounds_tortoise_and_hare_reduced} 
 {\sl In \cref{Prace PAST} we saw that the  program
  $\PP^\unisup_{race}$ from \cref{exmpl:tortoise_and_hare_reduced}
  is PAST as it has the drift $\mu_{\PP^\unisup_{race}}=-\tfrac{3}{2}<0$. Note that here $k=9$. Hence by
  \cref{bounds_runtime_constant_probability_programs} we get that whenever
  $\uni_{\PP_{race}}\column{t}{h} = t-h+1$ is positive, the 
  expected runtime $rt^{\PP_{race}}_{\column{t}{h}}$ is between $-\tfrac{1}{\mu_{\PP^\unisup_{race}}}\cdot \uni_{\PP_{race}}\column{t}{h}=\tfrac{2}{3}\cdot (t-h+1)$ and $-\tfrac{1}{\mu_{\PP^\unisup_{race}}}\cdot \uni_{\PP_{race}}\column{t}{h} + \tfrac{1-k}{\mu_{\PP^\unisup_{race}}}=\tfrac{2}{3}\cdot (t-h+1)
  + \tfrac{16}{3}$.\linebreak
The same upper bound $\tfrac{2}{3}\cdot (t-h+1)
  + \tfrac{16}{3}$ 
was inferred in \cite{DBLP:conf/pldi/NgoC018}
by an incomplete technique
based on several inference rules and linear programming solvers. In contrast, \cref{bounds_runtime_constant_probability_programs} allows us to
read off such bounds directly from the program.}
\end{example}

\noindent{}Our proof of \cref{bounds_runtime_constant_probability_programs}\techrep{ in
  the appendix}\cameraready{ in \cite{TECHREP}}
again uses the connection to random
walks and shows that
the classical Lemma of Wald
\cite[Lemma 10.2(9)]{probabilityGrimmett}
 directly yields both the upper and the lower bound for the expected
 runtime.
Alternatively, the upper bound in \cref{bounds_runtime_constant_probability_programs} could also be proved by considering that $\uni_{\PP}(\startvec{x})+(1-k_{\PP})$ is a
\emph{ranking supermartingale} \cite{BournezGarnier05,Chakarov13,ChatterjeeTOPLAS18,DBLP:conf/popl/FioritiH15,DBLP:journals/pacmpl/AgrawalC018} whose expected decrease in each loop iteration is $\mu_\PP$.
The lower bound could also be inferred  by considering the \emph{difference-bounded submartingale} $-\uni_{\PP}(\startvec{x})$ \cite{DBLP:conf/vmcai/FuC19,DBLP:conf/icalp/BrazdilKNW12}.

\section{Computing Exact Expected Runtimes}\label{sec:exact}

\noindent{}While \cref{PAST,bounds_runtime_constant_probability_programs}
state how to deduce the \emph{asymptotic} expected runtime, we
now show that based on these results one can compute the runtime of CP programs
\emph{exactly}.
In general, whenever it is possible, then inferring the exact
runtimes of programs is preferable to asymptotic 
runtimes which ignore the 
``coefficients'' of the runtime.

Again, we first consider \emph{random walk} programs and 
generalize our technique to CP programs
using \cref{transformation_preserves_behavior} afterwards.
Throughout \cref{sec:exact},
for any random walk program $\PP$
as in  Def.\ \hyperlink{def-univariate-program}{\arabic{def-univariate-program}},
we require that $\PP$ is PAST,
i.e., that $p'>0$ (cf.\ \cref{PAST}) or that the drift $\mu_\PP$ is
negative if $p'=0$ (cf.\ \cref{termination_decidable}). Note that whenever $k=0$ and $\PP$
is PAST, then $p'>0$.\footnote{If $p' = 0$ and $k = 0$ then $\mu_\PP \geq 0$.}

To compute $\PP$'s expected runtime
exactly, we use its characterization as the least fixpoint of \pagebreak
the expected runtime transformer $\mathcal{L}^\PP$ (cf.~\cref{correctness_of_ert}), i.e., $rt^\PP_{x}$ is the smallest 
function $f:\IZ \to \overline{\IR_{\geq 0}}$  satisfying the 
constraint
\begin{equation}
   \label{constraint1OLD} f(x) \;= \; \sum\nolimits_{-k \leq j \leq m} p_j \cdot f(x+j)+p' \cdot f(d)+1 \quad \text{for all $x > 0$},
\end{equation}
cf.\ \cref{recur}.
Since  $\PP$ is PAST, $f$ never returns $\infty$, i.e.,
$f: \IZ \to \IR_{\geq 0}$.
Note that the smallest function $f: \IZ \to \IR_{\geq 0}$ that
satisfies \eqref{constraint1OLD} also satisfies
\begin{equation}
  \label{constraint2}  f(x) \;= \; 0 \quad \text{for all $x \leq 0$}.
\end{equation}

\noindent
Therefore, as $d \leq 0$, the constraint \eqref{constraint1OLD} can be  simplified to
\begin{equation}
   \label{constraint1} f(x) \;= \; \sum\nolimits_{-k \leq j \leq m} p_j \cdot f(x+j)+1 \quad \text{for all $x > 0$}.
\end{equation}

\noindent{}In \cref{sec:all solutions} we recapitulate how to compute all solutions of
such inhomogeneous
recurrence equations (cf., e.g., \cite[Ch.~2]{differenceEquationsSaber}).
However, to compute $rt^\PP_{x}$, the challenge is 
to find the \emph{smallest} solution $f: \IZ \to \IR_{\geq 0}$
of the  equation \eqref{constraint1}. 
Therefore, in \cref{sec:smallest solution} we will exploit the knowledge gained in
\cref{PAST,bounds_runtime_constant_probability_programs} to show that there is only a
\emph{single} function $f$ that satisfies both \eqref{constraint2} and
\eqref{constraint1} \emph{and} is bounded by a constant (if $p' > 0$, cf.\ \cref{PAST})
resp.\ by a linear function (if $p' = 0$,
cf.\ \cref{bounds_runtime_constant_probability_programs}).
This observation then allows us to compute $rt^\PP_x$
exactly. 
So the crucial prerequisites for this result are \cref{correctness_of_ert} (which
characterizes the expected runtime as the smallest solution of the equation
\eqref{constraint1}), \cref{termination_decidable} (which allows the restriction
to negative drift if $p' = 0$), and in particular
\cref{PAST,bounds_runtime_constant_probability_programs} (since \cref{sec:smallest
  solution} will show that the results of
\cref{PAST,bounds_runtime_constant_probability_programs} on the asymptotic runtime can be
translated into suitable conditions on the solutions of \eqref{constraint1}).

\subsection{Finding All Solutions of the Recurrence Equation}\label{sec:all solutions}

\begin{example}[Modification of $\PP^\unisup_{race}$]\label{exmpl:truncated_race}
  \sl
\hspace*{-.09cm}  To \hspace*{-.09cm} illustrate \hspace*{-.09cm} our \hspace*{-.09cm} ap-
 {\makeatletter
\let\par\@@par
\par\parshape0
\everypar{}\begin{wrapfigure}[7]{r}{2.8cm}
   \centering
   \vspace*{-1.2cm}
   \hspace*{-.2cm}\fbox{\begin{minipage}{3.1cm}
 \begin{tabbing}
 \= \hspace*{.2cm}\=\kill
    \>\mbox{\rm \texttt{while}} $(x > 0) \; \{$\\
    \>\>$x=x+1$ \hspace*{.2cm}\=$[\tfrac{6}{11}]$;\\[0.1cm]
    \>\>$x=x$\>$[\tfrac{1}{11}]$;\\[0.1cm]
    \>\>$x = x-1$\>$[\tfrac{1}{22}]$;\\[0.1cm]
    \>\>$x=x-2$\>$[\tfrac{7}{22}]$;\\
    \>$\}$
  \end{tabbing}
        \end{minipage}}
\end{wrapfigure}
\noindent{}proach, we use a modified version of
   $\PP^\unisup_{race}$
   from \cref{exmpl:tortoise_and_hare_reduced} to ease readability.
  In \cref{Conclusion}, we will consider the original program  $\PP^\unisup_{race}$ resp.\
   $\PP_{race}$ from
\cref{exmpl:tortoise_and_hare_reduced}
resp.\ \cref{exmpl:tortoise_and_hare}  again and
show its exact expected runtime inferred by the implementation of our approach.
 In the  modified program $\PP^{mod}_{race}$ on the right, the distance
between the tortoise and the hare still increases with probability $\tfrac{6}{11}$, but the
probability of decreasing by more than two is distributed to the cases where it stays the same and
where it decreases by two. 
  We have $p'=0$ and
the drift is $\mu_{\PP^{mod}_{race}} =1 \cdot \tfrac{6}{11} + 0 \cdot \tfrac{1}{11} -1 \cdot
\tfrac{1}{22} -2 \cdot \tfrac{7}{22}=-\tfrac{3}{22}<0$. So by \cref{termination_decidable},
$\PP^{mod}_{race}$
 is PAST.
By \cref{correctness_of_ert}, $rt^{\PP^{mod}_{race}}_{x}$ is the smallest function $f : \IZ \to \IR_{\geq 0}$ satisfying
  \begin{equation}
    \label{recurrence_tortoise_hare} f(x) \;= \;
    \tfrac{6}{11}\cdot f(x+1) +
\tfrac{1}{11}\cdot f(x) + 
     \tfrac{1}{22}\cdot f(x-1) +
    \tfrac{7}{22}\cdot f(x-2)
    +1  \; \text{
     for all $x > 0$}.
  \end{equation}\par}
\end{example}

\noindent{}Instead of searching for the \emph{smallest} $f:\IZ \to \IR_{\geq 0}$ 
satisfying \eqref{constraint1}, 
we first cal\-culate the set of \emph{all} functions $f:\IZ \to \IC$  that satisfy
\eqref{constraint1}, i.e.,
we also consider functions returning negative 
or complex numbers. Clearly, \cref{constraint1} is equivalent to
\begin{equation}
  \label{receq}
  \begin{array}{r@{\;\;}c@{\;\;}l}
	0&=&p_m \cdot f(x+m)+\ldots+p_1 \cdot f(x+1)+(p_0-1) \cdot f(x) \;+\\
	&&p_{-1} \cdot f(x-1)+\ldots+p_{-k} \cdot f(x-k)+1 \hspace*{2.1cm} \text{for all $x>0$.}
  \end{array}
  \end{equation}
The set
of solutions on $\IZ \to \IC$
of this
linear, inhomogeneous recurrence equation
 is an affine space which can be written as an arbitrary particular 
solution   of the inhomogeneous equation plus any linear combination of $k+m$
linearly independent solutions of the corresponding homogeneous recurrence equation.

We start with computing a solution to the inhomogeneous equation
\cref{receq}. To this end, we use the bounds for $rt^\PP_{x}$ from \cref{PAST,bounds_runtime_constant_probability_programs} (where we
take the upper bound
$\tfrac{1}{p'}$ if $p' > 0$ and the lower bound $-\tfrac{1}{\mu_\PP} \cdot x$ if $p' =
0$). So we define
\[ C_{const}=\tfrac{1}{p'}, \; \text{if $p' > 0$ \qquad and \qquad}
C_{lin}=-\tfrac{1}{\mu_\PP}, \; \text{if $p' = 0$.}\]
One easily shows that if 
$p' > 0$,  then $f(x) = C_{const}$
is a solution
of the  inhomo\-geneous recurrence equation \cref{receq} and if $p' = 0$, then
$f(x)=C_{lin} \cdot x$ solves   \cref{receq}.

\begin{example}[\cref{exmpl:truncated_race} cont.]\label{exmpl:truncated_race_clin}
 {\sl In the program $\PP^{mod}_{race}$ of \cref{exmpl:truncated_race}, we have $p'
   = 0$ and
   $\mu_{\PP^{mod}_{race}} =-\tfrac{3}{22}$. Hence $C_{lin}= 
    \tfrac{22}{3}$ and
  $C_{lin}\cdot x$ is a solution of  \cref{recurrence_tortoise_hare}.}
\end{example}

\noindent{}After having determined one particular solution of the inhomogeneous recurrence
equation
\cref{receq}, now we compute the solutions of the
\emph{homogeneous} recurrence equation which
results from \cref{receq} by replacing the add-on ``+ 1'' with 0. To this end, we
consider the corresponding \emph{characteristic polynomial} $\chi_\PP$:\footnote{If $m=0$ then
  $\chi_\PP(\lambda)=(p_0-1) \cdot \lambda^k+p_{-1} \cdot \lambda^{k-1}+\ldots+p_{-k}$, and if
  $k=0$ then $\chi_\PP(\lambda)=p_m \cdot \lambda^{m}+\ldots+p_1 \cdot
  \lambda+(p_0-1)$.
  Note that
  $p_0 \neq 1$ since $\PP$ is PAST
  and in  Def.\ \hyperlink{def-univariate-program}{\arabic{def-univariate-program}}
  we required
  that $m> 0$ implies $p_m > 0$ and  $k > 0$ implies $p_{-k} > 0$.
 Hence, the characteristic polynomial has exactly the degree $k + m$, even if $m = 0$ or
 $k = 0$.}
\begin{eqnarray}
\label{chi} \chi_\PP(\lambda)=p_m \cdot \lambda^{k+m}+\ldots+p_1 \cdot \lambda^{k+1}+(p_0-1) \cdot \lambda^k
+ p_{-1} \cdot \lambda^{k-1}+\ldots+p_{-k} \;\;\;\;
\end{eqnarray}
Let $\lambda_1,\ldots,\lambda_c$
denote the pairwise different (possibly complex) roots of the cha\-rac\-te\-ris\-tic
polynomial $\chi_\PP$. 
For all $1 \leq j \leq c$, let
 $v_j\in \mathbb N\setminus \{0\}$ be the multiplicity of the root
$\lambda_j$.
Thus, we have $v_1 + \ldots + v_c = k + m$.

Then we obtain the following $k+m$ linearly independent solutions of the homogeneous
recurrence equation resulting from \eqref{receq}:
\[ \lambda_j^x \cdot x^u \quad \text{ for all $1 \leq j \leq c$ and all $0 \leq u \leq v_j-1$}\]
So $f\!:\!\IZ\!\to\!\IC$ is a solution of 
\cref{constraint1} (resp.\ \eqref{receq})  iff there exist coefficients $a_{j,u}\!\in\!\IC$
with
\begin{equation}
  \label{sol} f(x) \;= \; 
  C(x) \, + \, \sum\nolimits_{1 \leq j \leq c} \;\; \sum\nolimits_{0\leq u \leq v_j-1}a_{j,u} \cdot
  \lambda_j^x \cdot x^u \quad 
  \text{for all $x > -k$,}
\end{equation}
where $C(x) = C_{const} = \tfrac{1}{p'}$ if
 $p' > 0$ and $C(x) = C_{lin} \cdot x = -\tfrac{1}{\mu_\PP} \cdot x$ if $p' = 0$.
The reason for requiring \eqref{sol} for all $x > -k$ is that $-k+1$ is the smallest
argument where $f$'s value is taken into account in \cref{constraint1}. 

\begin{example}[\cref{exmpl:truncated_race_clin} cont.]\label{exmpl:truncated_race_chi}
 {\sl The characteristic polynomial for the program $\PP^{mod}_{race}$ of \cref{exmpl:truncated_race}
has the  degree $k+m=2+1=3$ and is given by
  \[\chi_{\PP^{mod}_{race}}(\lambda)\;=\;\tfrac{6}{11}\cdot \lambda^3 -\tfrac{10}{11}\cdot \lambda^2 +\tfrac{1}{22}\cdot \lambda +\tfrac{7}{22}.\]

\noindent{}Its roots are $\lambda_1=1$, $\lambda_2=-\tfrac{1}{2}$, and $\lambda_3=\tfrac{7}{6}$. So
here, all roots are real numbers and they all have the multiplicity $1$. Hence, three linearly
independent solutions of the homogeneous part of \cref{recurrence_tortoise_hare} are the
functions $1^x=1$,  $(-\tfrac{1}{2})^x$, and
$(\tfrac{7}{6})^x$. Therefore, a function $f:\IZ \to \IC$
satisfies \cref{recurrence_tortoise_hare} iff  there are
  $a_1,a_2,a_3 \in \IC$  such that 
\begin{equation}
\label{allSolutionsTortoise}   \begin{array}{r@{\;\;}c@{\;\;}l}
    f(x)&=&C_{lin} \cdot  x + a_1 \cdot 1^x+a_2 \cdot (-\tfrac{1}{2})^x\!+ a_3
      \cdot (\tfrac{7}{6})^x\\
     &=&\tfrac{22}{3}\cdot x + a_1+a_2 \cdot
   (-\tfrac{1}{2})^x\!+ a_3\cdot   (\tfrac{7}{6})^x  \hspace*{1.1cm}
    \text{for $x > -2$}.
\end{array}
\end{equation}}
 \end{example}

\subsection{Finding the Smallest Solution of the Recurrence Equation}\label{sec:smallest solution}

In \cref{sec:all solutions}, we recapitulated
the standard approach for solving inhomogeneous recurrence
equations
which shows
that any function $f:\IZ \to \IC$ that satisfies the constraint \eqref{constraint1}
is of the form \eqref{sol}.
Now we will present a novel
 technique to compute
$rt^\PP_x$,  i.e.,
 the \emph{smallest non-negative} solution $f:\IZ \to \IR_{\geq 0}$ of \cref{constraint1}.
 By
   \Cref{PAST,bounds_runtime_constant_probability_programs}, this function $f$ is bounded by a constant
   (if $p' > 0$) resp.\ linear (if $p' = 0$). So, when representing
   $f$ in the form \eqref{sol}, we must have 
 $a_{j,u} = 0$ whenever
   $|\lambda_j| > 1$.
The following lemma shows how many roots with absolute value less or equal
to 1 there are (i.e., these are the only roots that we have to consider). 
It is proved using Rouch\'e's Theorem
which allows us to infer the number of roots whose absolute value is below a certain bound.
Note that  $1$ is a root of the characteristic polynomial iff $p' = 0$, since
$\sum\nolimits_{-k \leq j \leq m} p_j  = 1 - p'$.

\begin{restatable}[Number of Roots With Absolute Value $\leq 1$]{lemma}{lemmanumberroots}\label{Number of Roots Lemma}
Let $\PP$ be a random walk program as in
  Def.\ \hyperlink{def-univariate-program}{\arabic{def-univariate-program}}  that is
  PAST. 
  Then the characteristic
  polynomial $\chi_\PP$ has $k$ 
roots $\lambda\in \IC$ (counted with multiplicity) with $|\lambda|\le 1$.
\end{restatable}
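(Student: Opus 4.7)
The plan is to apply Rouché's theorem on a circle of radius just larger than $1$. First, I would rewrite the characteristic polynomial from \eqref{chi} as
\[
\chi_\PP(\lambda) \;=\; -\lambda^k + \lambda^k\,\varphi(\lambda),
\qquad \text{where }\; \varphi(\lambda) \;=\; \sum\nolimits_{j=-k}^{m} p_j \, \lambda^j
\]
is the step-distribution Laurent polynomial of the walk. Note that $\varphi(1) = \sum_j p_j = 1 - p'$ and $\varphi'(1) = \sum_j j\,p_j = \mu_\PP$.

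The heart of the argument is to find a radius $R > 1$, arbitrarily close to $1$, such that $\varphi(R) < 1$. If $p' > 0$, this follows from continuity of $\varphi$ at $\lambda = 1$ because $\varphi(1) = 1 - p' < 1$. If $p' = 0$, the PAST assumption together with \cref{termination_decidable} forces $\mu_\PP < 0$, so the first-order Taylor expansion $\varphi(R) = 1 + (R-1)\,\mu_\PP + O\bigl((R-1)^2\bigr)$ is strictly less than $1$ for every $R > 1$ sufficiently close to $1$.

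With such an $R$ fixed, I would apply Rouché on the circle $|\lambda| = R$. Because $p_j \geq 0$, the triangle inequality yields
\[
\bigl|\lambda^k\,\varphi(\lambda)\bigr| \;\leq\; R^k \sum\nolimits_{j=-k}^{m} p_j\, R^j \;=\; R^k\,\varphi(R) \;<\; R^k \;=\; \bigl|{-\lambda^k}\bigr| \qquad \text{on } |\lambda| = R.
\]
Setting $f(\lambda) = -\lambda^k$ and $g(\lambda) = \lambda^k\,\varphi(\lambda)$, Rouché gives that $f$ and $\chi_\PP = f + g$ have the same number of zeros (counted with multiplicity) in the open disk $\{|\lambda| < R\}$. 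Since $f$ has a single zero of order $k$ at the origin, $\chi_\PP$ has exactly $k$ zeros in $|\lambda| < R$.

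Finally, since $\chi_\PP$ has only finitely many zeros, I can shrink $R$ even further so that no zero of $\chi_\PP$ lies in the annulus $1 < |\lambda| < R$. For this choice of $R$, the $k$ zeros in $|\lambda| < R$ are precisely the zeros in $|\lambda| \leq 1$, which finishes the proof. The main obstacle is the case $p' = 0$, in which $\lambda = 1$ itself is a zero of $\chi_\PP$ sitting on the unit circle; it is handled by working on a circle just outside (rather than inside) the unit disk and using $\mu_\PP < 0$ to keep the Rouché comparison strict.
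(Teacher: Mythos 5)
Your proof is correct and follows essentially the same route as the paper: Rouch\'e's theorem on a circle of radius slightly larger than $1$, with the key inequality reducing to $\varphi(R)<1$, which is exactly the paper's condition and is likewise settled by $p'>0$ resp.\ $\mu_\PP<0$ via the first-order expansion at $\lambda=1$. The only cosmetic difference is that for $p'>0$ the paper works on the unit circle itself and adds a separate argument that no root has modulus exactly $1$, whereas your annulus-shrinking step handles both cases uniformly.
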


\begin{example}[\cref{exmpl:truncated_race_chi} cont.]\label{exmpl:truncated_race_zero}
{\sl In  $\PP^{mod}_{race}$ of \cref{exmpl:truncated_race} we have $k = 2$. So
by \cref{Number 
  of Roots Lemma}, $\chi_\PP$ has exactly two roots with absolute value $\leq
1$. Indeed, the roots of  $\chi_\PP$ are 
$\lambda_1=1$, $\lambda_2=-\tfrac{1}{2}$, and $\lambda_3=\tfrac{7}{6}$,  cf.\
\cref{exmpl:truncated_race_chi}. So  $|\lambda_3|>1$, but
$|\lambda_1|\leq 1$ and $|\lambda_2|\leq 1$.}
\end{example} 

\noindent
Based on \cref{Number of Roots Lemma},
the following lemma shows that when imposing
the restriction that 
 $a_{j,u} = 0$ whenever
$|\lambda_j| > 1$,
 then there is only a \emph{single} function of the form \eqref{sol} that also satisfies the constraint \eqref{constraint2}.
Hence, this must be the function that we are searching for, because 
the
desired smallest solution $f:\IZ \to \IR_{\geq 0}$ of \eqref{constraint1}  also satisfies
\eqref{constraint2}.

\begin{restatable}[Unique Solution of \eqref{constraint2} and \eqref{constraint1} when
    Disregarding Roots With Absolute Value $> 1$]{lemma}{lemmadisregarding}\label{reduction}
    Let $\PP$ be a
  random walk program as in
  Def.\ \hyperlink{def-univariate-program}{\arabic{def-univariate-program}}
  that is PAST.
Then there is exactly one function $f:\IZ \to \IC$ which satisfies both \eqref{constraint2} and
\eqref{constraint1} (thus, it  has the form  \eqref{sol}) and
has
$a_{j,u} = 0$ whenever
$|\lambda_j| > 1$.
\end{restatable}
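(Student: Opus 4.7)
The plan is to translate the statement into a $k \times k$ linear system in the unknown coefficients $(a_{j,u})_{|\lambda_j|\leq 1}$ and to show that this system has a nonsingular coefficient matrix. By the discussion preceding the lemma, every function $f:\IZ \to \IC$ satisfying \eqref{constraint1} has the form \eqref{sol} on $\{x > -k\}$, and by \cref{Number of Roots Lemma} exactly $k$ of the roots of $\chi_\PP$ (counted with multiplicity) satisfy $|\lambda_j| \leq 1$. Imposing $a_{j,u} = 0$ whenever $|\lambda_j| > 1$ therefore leaves a $k$-dimensional affine family of candidate functions, parameterized by the remaining $k$ coefficients.

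Next I would reduce constraint \eqref{constraint2} on this family to the finite set of conditions $f(-k+1) = \ldots = f(0) = 0$: for $x \leq -k$ the form \eqref{sol} is not asserted, so I can separately define $f(x) := 0$ there, compatible with \eqref{constraint2}. The lemma then becomes the claim that the resulting $k \times k$ linear system in the free coefficients has exactly one solution.

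The heart of the argument is showing that the associated homogeneous system admits only the trivial solution. So suppose $g(x) := \sum_{|\lambda_j| \leq 1} \sum_{u=0}^{v_j-1} a_{j,u}\, \lambda_j^x\, x^u$ vanishes at each $x \in \{-k+1, \ldots, 0\}$; I claim all $a_{j,u} = 0$. The crucial observation is that $g$ solves the linear recurrence whose characteristic polynomial is $\chi_1(\lambda) := \prod_{|\lambda_j| \leq 1}(\lambda - \lambda_j)^{v_j}$. By \cref{Number of Roots Lemma}, $\chi_1$ has degree exactly $k$, and its constant term is nonzero, since for $k > 0$ the random-walk assumption $p_{-k} > 0$ gives $\chi_\PP(0) = p_{-k} \neq 0$, so no $\lambda_j$ vanishes. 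An order-$k$ linear recurrence with nonzero constant term uniquely propagates any $k$ consecutive values both forward and backward, so vanishing on $\{-k+1, \ldots, 0\}$ forces $g \equiv 0$. Since the functions $\lambda_j^x x^u$ for distinct pairs $(\lambda_j, u)$ are linearly independent on $\IZ$, we conclude all $a_{j,u} = 0$. The degenerate case $k=0$ (which, by the remark preceding the lemma, forces $p' > 0$) is trivial: form \eqref{sol} then reduces to the constant $C_{const}$ with no free coefficients, and the set of conditions $\{-k+1, \ldots, 0\}$ is empty.

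The key insight, and the only delicate step, is the recognition that restricting the coefficients to roots of absolute value $\leq 1$ effectively drops the order of the governing recurrence from $k+m$ down to $k$, so that the $k$ boundary values imposed by \eqref{constraint2} on the window $\{-k+1, \ldots, 0\}$ are precisely enough to pin down a unique $g$. Without this reduction one would instead have to argue the nonvanishing of a generalized Vandermonde determinant at the specific points $-k+1, \ldots, 0$, which is not automatic for arbitrary $\lambda_j$; the recurrence viewpoint bypasses that concern cleanly.
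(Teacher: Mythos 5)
Your proof is correct and follows essentially the same route as the paper: both exploit that the functions $\lambda_j^x \cdot x^u$ with $|\lambda_j|\le 1$ span the solution space of an order-$k$ linear recurrence (characteristic polynomial $\prod_{|\lambda_j|\le 1}(\lambda-\lambda_j)^{v_j}$, of degree exactly $k$ by \cref{Number of Roots Lemma} and with nonzero constant term), so the $k$ conditions $f(-k+1)=\dots=f(0)=0$ from \eqref{constraint2} determine the solution uniquely. The only difference is presentational: the paper constructs an explicit auxiliary \emph{inhomogeneous} order-$k$ recurrence and must verify that $C(x)$ is a particular solution of it (the $D_{const}$/$D_{lin}$ computation, using that $1$ is a root when $p'=0$), whereas you subtract the particular solution first and establish nonsingularity of the homogeneous $k\times k$ system via recurrence propagation, which sidesteps that verification.
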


\noindent{}The main theorem of
\cref{sec:exact} now shows how to compute the expected runtime exactly.
By  \Cref{PAST,bounds_runtime_constant_probability_programs}
 on the bounds for the expected
 runtime
and by \cref{reduction}, 
we no longer have to search for the \emph{smallest}  function
that satisfies \eqref{constraint2} and \eqref{constraint1}, but we just search for
\emph{any} solution of 
\eqref{constraint2} and \eqref{constraint1} which has
$a_{j,u} = 0$ whenever  $|\lambda_j| > 1$
(because there is
just a single such solution). 
So one only has to
determine the values of the remaining $k$ coefficients $a_{j,u}$ for $|\lambda_j| \leq 1$, which can be done by
exploiting that $f(x)$ has to satisfy both \eqref{constraint2} for all $x \leq 0$ and it has to be of the form
\eqref{sol} for all $x > -k$. In other words, 
the function in
\eqref{sol} must be 0 \pagebreak for $-k+1 \leq x \leq 0$.
        
\begin{restatable}[Exact Expected Runtime for Random Walk Programs]{theorem}{theoremsolution}\label{solution}
 \hspace*{-.22cm} Let $\PP$ be a random walk program as
  in  Def.\ \hyperlink{def-univariate-program}{\arabic{def-univariate-program}} that is PAST and let
  $\lambda_1,\ldots,\lambda_c$ be the roots of its characteristic polynomial 
 with multiplicities $v_1,\ldots,v_c$. Moreover, let $C(x) = C_{const} = \tfrac{1}{p'}$ if
 $p' > 0$ and $C(x) = C_{lin} \cdot x = -\tfrac{1}{\mu_\PP} \cdot x$ if $p' = 0$. Then the
 expected runtime of $\PP$ is $rt^\PP_x = 0$ for $x \leq 0$ and

 \vspace*{-.4cm}

 \[ rt^\PP_x \; = \;
C(x)\;\; + \; \; \sum\nolimits_{1 \leq j \leq c, \; |\lambda_j|\le 1} \; \; \;
\sum\nolimits_{0\leq u \leq v_j-1} \; a_{j,u} \cdot
\lambda_j^x \cdot x^u \quad \text{for $x > 0$},\]
where 
the coefficients
$a_{j,u}$ are the unique solution of the $k$ linear equations:

\vspace*{-.4cm}

\begin{equation}
  \label{initial}
0\; = \; C(x) +\sum\nolimits_{1 \leq j \leq c, \; |\lambda_j|\le 1}  \,  \sum\nolimits_{0\leq u \leq v_j-1}a_{j,u} \cdot
\lambda_j^x \cdot x^u \quad \text{for $-k+1 \leq x \leq 0$}
\end{equation}
So in the special case where $k = 0$, we have $rt^\PP_x = C(x) = C_{const} =
 \tfrac{1}{p'}$ for $x > 0$.
\end{restatable}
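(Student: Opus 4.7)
The plan is to combine \cref{correctness_of_ert}, the asymptotic bounds from \cref{PAST,bounds_runtime_constant_probability_programs}, and \cref{reduction}. Since $\PP$ is PAST, \cref{correctness_of_ert} together with \cref{Expected_Runtime_Violating} shows that $rt^\PP_x$ is a function $\IZ \to \IR_{\geq 0}$ satisfying \eqref{constraint1} for $x > 0$ and \eqref{constraint2} for $x \leq 0$. By the standard solution theory of inhomogeneous linear recurrences recalled in \cref{sec:all solutions}, $rt^\PP_x$ therefore admits a representation of the form \eqref{sol} on $\{x > -k\}$ with suitable coefficients $a_{j,u} \in \IC$.

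The central step is to argue that $a_{j,u} = 0$ whenever $|\lambda_j| > 1$. By \cref{PAST} (if $p' > 0$) or the upper bound of \cref{bounds_runtime_constant_probability_programs} (if $p' = 0$), $rt^\PP_x$ grows at most linearly in $x$; since $C(x)$ is itself constant or linear, the homogeneous tail $rt^\PP_x - C(x) = \sum_{j,u} a_{j,u} \cdot \lambda_j^x \cdot x^u$ has at most linear growth as $x \to \infty$. Linear independence of the functions $\lambda_j^x \cdot x^u$, combined with the fact that every $\lambda_j^x \cdot x^u$ with $|\lambda_j| > 1$ eventually dominates every polynomial, forces all such coefficients to vanish. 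Hence $rt^\PP_x$ already has the restricted closed form stated in the theorem on $\{x > -k\}$.

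It remains to pin down the remaining coefficients. Evaluating \eqref{constraint2} at the $k$ integer points $-k+1 \leq x \leq 0$ yields exactly the $k$ linear equations \eqref{initial}; by \cref{Number of Roots Lemma} this is a system in precisely $k$ unknowns (counted with multiplicity). Its unique solvability is the content of \cref{reduction}, which guarantees that there is a single function of the restricted form \eqref{sol} satisfying \eqref{constraint2}. The special case $k = 0$ is then immediate: PAST forces $p' > 0$, the sum over $|\lambda_j| \leq 1$ is empty, the system is trivial, and $rt^\PP_x = C_{const} = \tfrac{1}{p'}$ for $x > 0$.

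The main obstacle is the growth argument that eliminates the $|\lambda_j| > 1$ contributions. The dominance of exponentials over polynomials is intuitive, but a careful proof has to handle potential cancellations among distinct roots of equal absolute value. A clean route is to observe that the values of $f$ on any window of $k+m$ consecutive integers determine all coefficients $a_{j,u}$ via an invertible (Vandermonde-like) linear system, so a polynomial growth bound on the tail forces the exponential contributions to vanish root by root. Alternatively, one can bypass the explicit growth estimate by first appealing to \cref{reduction} to exhibit a unique candidate and then verifying that this candidate, being bounded by $C(x)$ up to lower-order terms and non-negative on $\IN$, must coincide with the least non-negative solution $rt^\PP_x$ from \cref{correctness_of_ert}.
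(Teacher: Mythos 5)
Your proposal is correct and follows essentially the same route as the paper's proof: characterize $rt^\PP_x$ via \cref{correctness_of_ert} as the smallest solution of \eqref{constraint2} and \eqref{constraint1}, use the bounds of \cref{PAST} and \cref{bounds_runtime_constant_probability_programs} to force $a_{j,u}=0$ whenever $|\lambda_j|>1$, and then invoke \cref{reduction} for existence and uniqueness of the resulting system \eqref{initial}. You are in fact more careful than the paper about why polynomial growth kills the $|\lambda_j|>1$ terms (the paper simply asserts this); only your closing ``alternative'' that bypasses the growth estimate is shaky, since without it one cannot identify the unique candidate from \cref{reduction} with the \emph{least} non-negative solution.
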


\noindent{}Thus for $x>0$, the expected runtime $rt^\PP_x$ can be computed by summing up the bound $C(x)$ and an add-on 
$\sum\nolimits_{1 \leq j \leq c, \; |\lambda_j|\le 1} \; \sum\nolimits_{0\leq u \leq
  v_j-1} \ldots\;$
Since 
 $C(x)$ is an upper bound for $rt^\PP_x$ if $p' >0$ and a lower
bound for  $rt^\PP_x$ if $p'=0$,
this add-on is  non-positive if $p' > 0$ and non-negative if $p' = 0$.

\begin{example}[\cref{exmpl:truncated_race_zero} cont.]\label{exmpl:truncated_race_exact_runtime}
 {\sl By \cref{solution}, the expected runtime
   of the  program $\PP^{mod}_{race}$ from \cref{exmpl:truncated_race} is
   $rt^{\PP^{mod}_{race}}_{x} = 0$ for $x \leq 0$ and

\vspace*{-.2cm}
   
  \[ rt^{\PP^{mod}_{race}}_{x} \; = \; \tfrac{22}{3} \cdot x
  +a_1+a_2\cdot (-\tfrac{1}{2} )^x \quad \text{for $x > 0$, $\;$ cf.\ \cref{allSolutionsTortoise}.}
  \]
  The coefficients $a_1$ and $a_2$ are the unique solution of the $k=2$ linear
equations

\vspace*{-.5cm}

 \begin{align*}
    0 &= \tfrac{22}{3}\cdot 0+a_1+a_2\cdot (-\tfrac{1}{2})^0=a_1+a_2\\
    0 &= \tfrac{22}{3}\cdot
    (-1)+a_1+a_2\cdot (-\tfrac{1}{2})^{-1}=-\tfrac{22}{3}+a_1-2\cdot a_2 
 \end{align*}

 \vspace*{-.3cm}

 \noindent{}So $a_1=\tfrac{22}{9}$, $a_2=-\tfrac{22}{9}$, and hence
 $rt^{\PP^{mod}_{race}}_{x} \; = \; \tfrac{22}{3}\cdot
 x+\tfrac{22}{9}-\tfrac{22}{9}\cdot (-\tfrac{1}{2})^x$ for $x > 0$.
}
\end{example}

\noindent{}By \cref{transformation_preserves_behavior}, 
we can lift
 \cref{solution} to arbitrary CP programs $\PP$
 immediately.

\begin{restatable}[Exact Expected Runtime for CP Programs]{corollary}{coroexactruntime}\label{coro3}
 For any CP program,  its expected runtime can be
  computed exactly.
\end{restatable}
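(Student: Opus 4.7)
The plan is to reduce the computation of the expected runtime of an arbitrary CP program $\PP$ to that of its associated random walk program $\PP^\unisup$, for which \cref{solution} already provides an exact closed form. The reduction is already prepared by \cref{transformation_preserves_behavior}, so the main work is to assemble the case distinction and verify that each case is effectively computable.

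First I would treat the trivial cases explicitly: if $\PP$ is trivial, the discussion right after the definition of trivial programs gives the runtime directly (it is either $0$ or $\infty$ depending on $\vec{a}$, $b$, and whether $\vec{a}\bullet\startvec{x} > b$). Otherwise, I compute $\PP^\unisup$ according to Def.~\hyperlink{Univariate Transformation}{\arabic{Univariate Transformation}}, which is a purely syntactic transformation. Then I distinguish on $p'$: if $p'>0$, then by \cref{PAST} the program $\PP^\unisup$ is PAST; if $p'=0$, I compute the drift $\mu_{\PP^\unisup} = \sum_{-k \leq j \leq m} j\cdot p_j$ and apply \cref{termination_decidable} (equivalently, \cref{coro1}): if $\mu_{\PP^\unisup}\ge 0$, then $\PP^\unisup$ (and hence $\PP$) is not PAST and, depending on whether AST holds, the exact expected runtime on inputs $\startvec{x}$ with $\vec{a}\bullet\startvec{x}>b$ is $\infty$, while on inputs with $\vec{a}\bullet\startvec{x}\le b$ it is $0$ by \cref{Expected_Runtime_Violating}.

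In the remaining case, $\PP^\unisup$ is PAST (either $p'>0$ or $p'=0$ with $\mu_{\PP^\unisup}<0$), and \cref{solution} applies. Concretely, one forms the characteristic polynomial $\chi_{\PP^\unisup}$ from \eqref{chi}, factors it to obtain the roots $\lambda_1,\ldots,\lambda_c$ with multiplicities $v_1,\ldots,v_c$, discards those with $|\lambda_j|>1$ (by \cref{Number of Roots Lemma}, exactly $k$ roots with multiplicity remain), and then solves the $k\times k$ linear system \eqref{initial} for the coefficients $a_{j,u}$. \cref{reduction} guarantees that this system has a unique solution, and \cref{solution} then yields the exact closed form $rt^{\PP^\unisup}_{x}$. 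Finally, I invoke \cref{transformation_preserves_behavior} to conclude
\[ rt^{\PP}_{\startvec{x}} \;=\; rt^{\PP^\unisup}_{\uni_\PP(\startvec{x})} \qquad \text{for all } \startvec{x}\in\IZ^r, \]
so the closed form for $\PP^\unisup$ transfers to $\PP$ by substituting $x := \uni_\PP(\startvec{x}) = \vec{a}\bullet\startvec{x} - b$.

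The only non-routine ingredient is the effectiveness of \cref{solution}: one must be able to factor $\chi_{\PP^\unisup}$ and solve the resulting linear system. Since $\chi_{\PP^\unisup}$ has rational coefficients, its roots can be expressed in a suitable algebraic extension of $\IQ$, and the $k$ equations in \eqref{initial} are linear over that field with a unique solution (by \cref{reduction}), so the coefficients $a_{j,u}$ are computable. Thus the whole pipeline---\emph{trivial check, transformation to $\PP^\unisup$, drift test, root finding, linear solve, substitution by $\uni_\PP$}---constitutes an algorithm that outputs a closed form for $rt^\PP_{\startvec{x}}$, which is what the corollary asserts.
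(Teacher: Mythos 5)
Your proposal is correct and follows essentially the same route as the paper's proof: handle trivial programs directly, decide PAST via the drift test (\cref{coro1}), transfer the runtime through $\uni_\PP$ using \cref{transformation_preserves_behavior}, and invoke \cref{solution} for the closed form. Your version is merely more explicit about the non-PAST cases and the effectiveness of root-finding and linear solving, which the paper's terse proof leaves implicit.
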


\noindent
Note that irrespective of the degree of the
 characteristic polynomial,  its roots can always 
 be approximated numerically with any chosen precision. Thus, ``exact computation''
 of the expected runtime in the corollary above
 means that a closed form for $rt^\PP_{\vec{x}}$
 can also be computed with any desired precision.

\begin{example}[Exact Expected Runtime of $\PP_{direct}$]\label{exmpl:tortoise_and_hare_direct_catch_up_cont}
  \sl
  \hspace*{-.2cm}
Reconsi-
{\makeatletter
\let\par\@@par
\par\parshape0
\everypar{} \begin{wrapfigure}[4]{r}{2.8cm}
   \centering
   \vspace*{-1.2cm}
   \hspace*{-.2cm}\fbox{\begin{minipage}{3.1cm}
 \begin{tabbing}
     \= \hspace*{.2cm}\=\kill
     \>\mbox{\rm \texttt{while}} $(x > 0) \; \{$\\
     \>\>$x=x+1$ \hspace*{.2cm}\=$[\tfrac{9}{10}];$\\[0.1cm]
     \>\>$x=0$\>$[\tfrac{1}{10}];$\\
     \>$\}$
 \end{tabbing}
        \end{minipage}}
 \end{wrapfigure}
\noindent{}der the program $\PP_{direct}$ of
  \cref{exmpl:tortoise_and_hare_direct_catch_up} with 
  the probability $p' = \tfrac{1}{10}$ for direct termination. $\PP_{direct}$ is PAST
  and its expected runtime is \emph{at most} $\tfrac{1}{p'} = 10$, cf.\
  \cref{exmpl:tortoise_and_hare_direct_catch_up}.
  The random walk
  program $\PP_{direct}^\unisup$  on the right is obtained by
the \pagebreak
  transformation of Def.\ \hyperlink{Univariate Transformation}{\arabic{Univariate
      Transformation}}.
As   $k = 0$, by \cref{solution} we obtain $rt^{\PP_{direct}^\unisup}_x =  \tfrac{1}{p'} =
10$ for $x > 0$.\linebreak
By 
 \cref{transformation_preserves_behavior}, this implies 
$rt^{\PP_{direct}}_{\column{t}{h}}
= rt^{\PP_{direct}^\unisup}_{\uni_{\PP_{direct}}\column{t}{h}} =
10$ if $\uni_{\PP_{direct}}(t,h) = t - h + 1 > 0$, i.e., 10 is indeed
the \emph{exact} expected runtime of $\PP_{direct}$.\par} \end{example}

Note that \cref{solution,coro3} imply that for any $\startvec{x} \in \IZ^r$,
 the expected runtime $rt^\PP_{\startvec{x}}$ of a CP
program $\PP$ that is PAST and has only \emph{rational} probabilities $p_{\vec{c}_1},\ldots,p_{\vec{c}_n},p' \in \mathbb{Q}$
is always an algebraic number.
Thus, one could also compute a closed form for the
exact expected runtime $rt^\PP_{\vec{x}}$ using a representation with algebraic numbers
instead of numerical approximations.

Nevertheless,
\cref{solution} may yield a
representation of $rt^\PP_x$ which con\-tains complex numbers $a_{j,u}$ and $\lambda_j$,
although $rt^\PP_x$ is always real. However, one can easily obtain a
more intuitive representation of $rt^\PP_x$ without complex numbers:

Since the characteristic polynomial $\chi_\PP$ only has real coefficients, whenever $\chi_\PP$ has
a complex root $\lambda$ of multiplicity $v$, its conjugate $\overline{\lambda}$ is also a root
of $\chi_\PP$ with the same multiplicity $v$.
So the pairwise different roots $\lambda_1, \ldots, \lambda_c$
can be distinguished into pairwise different real roots $\lambda_1, \ldots,
\lambda_s$, and into pairwise different non-real complex roots $\lambda_{s+1},
\overline{\lambda_{s+1}}, \ldots, \lambda_{s+t},
\overline{\lambda_{s+t}}$, where $c = s + 2 \cdot t$.

For any coefficients $a_{j,u}, a_{j,u}' \in \IC$
with $j \in \{s+1,\ldots,s+t\}$ and
$u \in \{0,\ldots,
v_{j}-1\}$ let
 $b_{j,u} = 2
\cdot \mathrm{Re}(a_{j,u})\in \IR$ and $b_{j,u}' = -2\cdot \mathrm{Im}(a_{j,u})\in \IR$.
Then $a_{j,u} \cdot \lambda_{j}^x
 + a_{j,u}' \cdot \overline{\lambda_{j}}^x
 \; = \;  b_{j,u} \cdot \mathrm{Re}(\lambda_{j}^x)
 + b_{j,u}' \cdot \mathrm{Im}(\lambda_{j}^x)$. 
Hence, by \cref{solution} we get the  following representation of the expected runtime
which only uses \emph{real} numbers:
\begin{equation}
  \label{final}
  \mbox{\small \hspace*{-.2cm}$rt^\PP_x\!=\!\left\{\begin{array}{lll}
 C(x) &+ \hspace*{-.5cm}
\sum_{1 \leq j \leq s, \; |\lambda_j|\le 1} \; \; \;\sum_{0\leq u \leq v_j-1}a_{j,u} \cdot
\lambda_j^x \cdot x^u\\
&+  \hspace*{-.5cm}
\sum_{s+1 \leq j \leq s+t, \; |\lambda_{j}|\le 1} \; \sum_{0 \leq u \leq
  v_{j}-1} \hspace*{-.3cm} \left(b_{j,u}\!\cdot\!\mathrm{Re}(\lambda_{j}^x)  + 
b_{j,u}'\!\cdot\!\mathrm{Im}(\lambda_{j}^x)\right) \cdot x^u, & \text{for $x > 0$}\\
0, && \text{for $x \leq 0$}
\end{array}\hspace*{-.2cm}
\right.$}\end{equation}

\noindent{}To compute $\mathrm{Re}(\lambda_{j}^x)$ and $\mathrm{Im}(\lambda_{j}^x)$,
take the polar representation of the non-real roots
$\lambda_{j} = w_j\cdot e^{\theta_j \cdot i}$.
Then 
$\mathrm{Re}(\lambda_{j}^x) = w^x_j \cdot \cos(\theta_j \cdot x)$ and
$\mathrm{Im}(\lambda_{j}^x) =w^x_j \cdot \sin(\theta_j \cdot x)$.

Therefore,  we obtain the following algorithm to deduce the exact expected
runtime automatically.

\begin{algorithm}[Computing the Exact Expected Runtime]\label{algorithm_exact}
  To infer the runtime of a CP program $\PP$ as in
  Def.\ \hyperlink{def-univariate-program}{\arabic{def-univariate-program}}
   that is
  PAST, we proceed as follows:
  \begin{enumerate}
    \item Transform $\PP$ into $\PP^\unisup$ by the transformation of
      Def.\ \hyperlink{Univariate Transformation}{\arabic{Univariate Transformation}}.
            Thus, $\PP^\unisup$ is a random walk program as in
   Def.\ \hyperlink{def-univariate-program}{\arabic{def-univariate-program}}.
    \item
      Compute the solution $C(x) = C_{const} = \tfrac{1}{p'}$ 
      resp.\ $C(x) = C_{lin}\cdot x  =- \tfrac{1}{\mu_{\PP^\unisup}} \cdot x$ of the inhomogeneous recurrence
      equation  \cref{receq}.
    \item Compute the $k+m$ (possibly complex) roots of the characteristic polynomial
      $\chi_{\PP^\unisup}$ (cf.\ \eqref{chi}) and
      keep  the $k$ roots $\lambda$ with $|\lambda|\le 1$.
          \item Determine the coefficients $a_{j,u}$ by solving the $k$ linear equations in
      \eqref{initial}.

    \item Return the solution \eqref{final} where
    $b_{j,u} = 2
    \cdot \mathrm{Re}(a_{j,u})$, $b_{j,u}' = -2
    \cdot \mathrm{Im}(a_{j,u})$,
      and for $\lambda_{j} = 
    w_j\cdot e^{\theta_j \cdot i}$ we 
    have $\mathrm{Re}(\lambda_{j}^x) = w^x_j \cdot \cos(\theta_j \cdot x)$ and
    $\mathrm{Im}(\lambda_{j}^x) =w^x_j \cdot \sin(\theta_j \cdot x)$. Moreover, $x$ must be
    replaced by $\uni_\PP(\vec{x})$.
  \end{enumerate}
\end{algorithm}

\section{Conclusion, Implementation, and Related Work}\label{Conclusion}

We presented decision procedures for termination and complexity
of classes of probabilistic programs.
They are based on the connection between the expected runtime of a program
and the smallest solution of its corresponding recurrence equation, cf.\ \cref{ConnectionSection}.
For our notion of
probabilistic pro\-grams, if the probability for leaving the loop directly is at least $p'$
for some $p' > 0$, then the program is always PAST and its expected runtime is
asymptotically constant, cf.\
\cref{BoundsSection}.
In \cref{sec:random_walk_programs} we showed that a very simple decision procedure
for AST and PAST of CP programs can be obtained by
classical results from random walk theory  and that
the expected runtime is asymptotically linear
if the program is
PAST. Based on these results,
in \cref{sec:exact} we presented our algorithm to
automatically infer a closed form for
the \emph{exact} expected runtime of CP programs
(i.e., with arbitrarily high  precision). All proofs and a collection of examples to
demonstrate our algorithm can be found in \techrep{the appendix}\cameraready{\cite{TECHREP}}.

  \paragraph{\textbf{Implementation.}}
We implemented \cref{algorithm_exact} in our tool \textsf{KoAT} \cite{KoAT},
    which was
  already one of the leading  tools for complexity analysis of (non-probabilistic) integer
  programs. The implementation is written in \textsf{OCaml} and
  uses the \textsf{Python} libraries  \textsf{MpMath} \cite{mpmath}
    and \textsf{SymPy} \cite{sympy} for solving linear equations and for finding the roots of the characteristic polynomial.
 In addition to the closed form for the exact expected runtime, our implementation can also
    compute the concrete number of expected loop iterations if the user specifies the
    initial values of the variables.
    For further details, a set of benchmarks, and to download our implementation, we refer
  to 
  \url{https://aprove-developers.github.io/recurrence/}.

\begin{example}[Computing the Exact Expected Runtime of $\PP_{race}$ Automatically]\label{Exact
    Expected Runtime of Tortoise and Hare}\linebreak
 {\sl For the tortoise and hare program $\PP_{race}$ from \cref{exmpl:tortoise_and_hare}, our
  implementation in \textsf{KoAT} computes the following expected runtime within 0.49~s on an Intel Core  i7-6500 with 8 GB memory (when selecting a precision of 2 decimal places):
\[\mbox{\scriptsize $\begin{array}{rl}
    rt^{\PP_{race}}_{\column{t}{h}}=
    &0.049\cdot{0.65}^{(t-h+1)}\;\cdot\sin \left(2.8 \cdot (t-h+1) \right) 
    - 0.35\cdot{0.65}^{(t-h+1)}\!\cdot \cos \left(2.8\cdot (t-h+1) \right)\\
    &+ 0.15\cdot{0.66}^{(t-h+1)}\! \cdot \sin \left( 2.2\cdot (t-h+1) \right)
    - 0.35\cdot{0.66}^{(t-h+1)}\! \cdot \cos \left( 2.2\cdot (t-h+1) \right)\\
    &+ 0.3\cdot{0.7}^{(t-h+1)}\! \cdot \sin \left( 1.5 \cdot (t-h+1) \right)
    - 0.39\cdot{0.7}^{(t-h+1)}\! \cdot \cos \left( 1.5\,(t-h+1) \right)\\
    &+ 0.62\cdot{0.75}^{(t-h+1)}\! \cdot \sin \left( 0.83\cdot (t-h+1) \right)
    - 0.49\cdot{0.75}^{(t-h+1)}\! \cdot \cos  \left(  0.83 \cdot (t-h+1) \right)\\
        &+\tfrac{2}{3}\cdot (t-h)  \; 
    + \; 2.3\\    
\end{array}$}\]
So when starting in a state with $t = 1000$ and $h = 0$, according to our implementation
the number of expected loop iterations is   $rt^{\PP_{race}}_{\column{1000}{0}}=
670$.
}
\end{example}

  \paragraph{\textbf{Related Work.}}
 Many techniques to
analyze (P)AST  
have been developed, which mostly rely on  ranking supermartingales, e.g.,
\cite{BournezGarnier05,Chakarov13,DBLP:conf/popl/FioritiH15,
   DBLP:journals/pacmpl/McIverMKK18,DBLP:journals/pacmpl/AgrawalC018,DBLP:conf/pldi/NgoC018,
  ChatterjeeTOPLAS18,ChatterjeePOPL2017,DBLP:conf/vmcai/FuC19}. Indeed, several of these works (e.g., \cite{BournezGarnier05,DBLP:conf/popl/FioritiH15,DBLP:journals/pacmpl/AgrawalC018,DBLP:conf/vmcai/FuC19}) present
complete criteria for (P)AST, although (P)AST is undecidable. However, the
corresponding automation of these techniques is of course incomplete.
In \cite{ChatterjeeTOPLAS18} it is shown that for
affine probabilistic programs, a superclass of our CP programs, the existence of a linear ranking supermartingale is decidable. However, 
the existence of a linear ranking supermartingale is sufficient but not necessary
for PAST or  an at most linear expected
runtime.

Classes of programs where termination is decidable have already been
studied for deterministic programs. In \cite{Tiwari04} it was shown that for a
class of linear loop programs over the reals, the halting problem is decidable.
This result was transferred to the rationals \cite{Braverman06} and under certain
conditions to integer programs \cite{Braverman06,Ouaknine15,CAV19}. Termination analysis for probabilistic programs is
substantially harder than for non-probabilistic ones
\cite{DBLP:conf/mfcs/KaminskiK15}. Nevertheless, there is some previous work on classes of
probabilistic programs where termination is decidable and asymptotic bounds on the
expected runtime are computable. For instance, 
in
\cite{DBLP:conf/fsttcs/BrazdilBE10}
it was shown that AST is decidable for certain stochastic games and
\cite{ChatterjeeFM17} presents an automatic approach for inferring asymptotic upper bounds
on the expected runtime by considering
uni- and bivariate recurrence equations.

However, our algorithm is the first
which computes a general formula (i.e., a closed form) for the \emph{exact} expected runtime of arbitrary CP
programs. To our
knowledge,
up to now such a formula was only known for the very restricted special \hspace*{-.04cm}
case \hspace*{-.04cm} of \hspace*{-.04cm}
\emph{bounded} \hspace*{-.04cm} simple \hspace*{-.04cm} random \hspace*{-.04cm} walks
(cf.\ \cite{feller50}), \hspace*{-.04cm}
i.e., \hspace*{-.04cm} programs \hspace*{-.04cm} of \hspace*{-.04cm} the
{\makeatletter
\let\par\@@par
\par\parshape0
\everypar{}
  \begin{wrapfigure}[5]{r}{3.3cm}
    \centering
    \vspace*{-.87cm}
    \hspace*{-.1cm}\fbox{\begin{minipage}{3cm}\begin{tabbing}
      \= \hspace*{.2cm}\=\kill
      \>\texttt{while} $(b > x > 0) \; \{$\\
      \>\>$x = x + 1$ \hspace*{.2cm}\=$[p];$\\[0.1cm]
  \>\>$x = x - 1$\>$[1-p];$\\
      \>$\}$
    \end{tabbing}\end{minipage}}
  \end{wrapfigure}
  \noindent{}form on the right for some $1 \geq p \geq 0$ and some $b \in \IZ$. Note that due to the
\emph{two} boundary conditions $x > 0$ and $b > x$, the resulting recurrence equation for
the expected runtime of the program only
has a \emph{single} solution $f: \IZ \to \IR_{\geq 0}$ that also satisfies $f(0) = 0$
and $f(b) = 0$. Hence, standard techniques for solving recurrence equations suffice to
compute this solution. In contrast, 
we developed an
algorithm to compute the exact expected runtime of \emph{unbounded arbitrary} CP programs
where the loop condition only has \emph{one} boundary condition $x > 0$, i.e., $x$ can
grow infinitely large. For that reason, here the challenge is to find an algorithm which
computes the \emph{smallest} solution  $f: \IZ \to \IR_{\geq 0}$ of the resulting
recurrence equation. We showed that this can be done using the information on
the asymptotic bounds of the expected runtime from \cref{BoundsSection,sec:random_walk_programs}.

\paragraph{\textbf{Future Work.}}
There are several directions for future work.
In \cref{sec:Restriction_to_Random_Walk_Programs} we reduced
CP programs to
random walk programs.
 In future work, we will consider more advanced 
 reductions in order to extend the class of probabilistic programs where termination and
 complexity are decidable.   Moreover, we want to develop techniques
 to automatically \emph{over- or under-approximate} the runtime of a program $\PP$ by the runtimes of  corresponding CP
 programs $\PP_1$ and $\PP_2$   such that $rt^{\PP_1}_{\vec{x}} \leq rt^\PP_{\vec{x}} \leq rt^{\PP_2}_{\vec{x}}$ holds for all
 $\vec{x} \in \IZ^{r}$.   
 Furthermore, we will integrate the easy inference of runtime bounds for
 CP programs into existing techniques for analyzing more general probabilistic
 programs.

\vspace*{-.25cm}
 
\subsubsection*{Acknowledgments}
We would like to thank Nicos Georgiou and
Vladislav Vysotskiy for drawing our attention to Wald's Lemma and to the work of Frank
Spitzer on random walks, and Benjamin Lucien Kaminski and Christoph Matheja for many
helpful discussions. Furthermore, we thank Tom K\"uspert who helped
with the implementation of our technique in our tool \textsf{KoAT}.

\clearpage

\bibliographystyle{splncs04}
\bibliography{recurrence}
\techrep{
\appendix
\clearpage

\noindent
\textbf{\Large Appendix}

\bigskip

\noindent
This appendix contains a collection of examples to demonstrate the application of our
algorithm in \cref{Case Studies} and 
all proofs 
(in \cref{MDP}-\ref{app:exact}).

\medskip

\section{Case Studies}\label{Case Studies}

In
this section, we demonstrate
our approach for the computation of the exact expected runtime on further examples.

 \begin{example}[Example with Direct Termination and Non-Constant Exact Runtime]\label{Example with Direct Termination and Non-Constant Runtime}
 {\sl
  As an example with $p' > 0$ where the exact expected runtime is not constant,
consider the following program $\PP$. 
  \begin{tabbing}
   \hspace*{4.5cm} \= \hspace*{.2cm}\=\kill
    \>\mbox{\rm \texttt{while}} $(x > 0) \; \{$\\
    \>\>$x = x+1$ \hspace*{.2cm}\=$[\tfrac{1}{8}];$\\[0.1cm]
    \>\>$x = x$\>$[\tfrac{1}{2}];$\\[0.1cm]
    \>\>$x = x-1$\>$[\tfrac{1}{4}];$\\[0.1cm]
    \>\>$x = 0 $\>$[\tfrac{1}{8}];$\\
    \>$\}$
  \end{tabbing}
\noindent{}The 
 characteristic polynomial is
 $\chi_\PP(\lambda)=\tfrac{1}{8} \cdot \lambda^2-\tfrac{1}{2} \cdot \lambda+\tfrac{1}{4}$. It has the
 $k+m=1+1=2$ roots $2 \pm \sqrt{2}$. So the only  root with absolute value $\leq 1$ is
 $2-\sqrt{2}$.
By \cref{solution} we obtain $rt^\PP_x = 0$ for $x \leq 0$ and
  \[rt^\PP_x  = 8+a_1\cdot (2-\sqrt{2})^x \quad \text{for $x > 0$}.\]
  Here, $a_1$ is the unique solution of the linear equation  $0 =
  8+a_1 \cdot  (2-\sqrt{2})^0 \cdot 0^0 =
  8+a_1$, i.e., $a_1 =
-8$.
So for $x>0$ we have
\[rt^\PP_x = 8-8\cdot (2-\sqrt{2})^x,\]
i.e., here the negative add-on $-8\cdot (2-\sqrt{2})^x$
is added to the upper bound 8.}
 \end{example}

\begin{example}[Example with Complex Roots]\label{Example with Complex Roots}
{\sl
   \noindent{}To show that complex roots are indeed possible,
 we apply \cref{algorithm_exact}
 to the 
 following program $\PP$, where
 $p'=0$ and $\mu_\PP=-\tfrac{13}{30}$. Thus, $C_{lin} = \tfrac{30}{13}$
 and $C(x) = \tfrac{30}{13} \cdot x$.
  \begin{tabbing}
      \hspace*{4.5cm}  \= \hspace*{.2cm}\=\kill
      \>\mbox{\rm \texttt{while}} $(x > 0) \; \{$\\
      \>\>$x = x+1$ \hspace*{.2cm}\=$[\tfrac{5}{36}];$\\[0.1cm]
      \>\>$x = x$\>$[\tfrac{1}{2}];$\\[0.1cm]
      \>\>$x = x-1$\>$[\tfrac{13}{60}];$\\[0.1cm]
      \>\>$x = x-2 $\>$[\tfrac{7}{90}];$\\[0.1cm]
      \>\>$x = x-3$\>$[\tfrac{1}{15}];$\\
      \>$\}$
    \end{tabbing}
 
    \noindent{}The characteristic polynomial
      $\chi_\PP(\lambda)\superfluous{=\tfrac{5}{36}\lambda^4-\tfrac{1}{2}\lambda^3+\tfrac{13}{60}\lambda^2+\tfrac{7}{90}\lambda
      +\tfrac{1}{15}}$ has the roots 1, 3, and the two complex roots
      $\tfrac{-1\pm \sqrt{3}\, i}{5}$.
  Hence, the $k = 3$ roots with absolute value $\leq 1$ are $1$ and $\tfrac{-1\pm
    \sqrt{3}\, i}{5}$.
 By \cref{solution} we obtain the following general solution:
    \[f(x)=\tfrac{30}{13} \cdot x+a_1+a_2 \cdot (\tfrac{-1+\sqrt{3}\, i}{5})^{x} +
    a_3 \cdot (\tfrac{-1-\sqrt{3}\, i}{5})^{x} \; \text{for $x > -3$}\]
    The coefficients $a_1,a_2,a_3$ are determined by the three
linear equations $0 = f(x)$ for $-2 \leq x \leq 0$, cf.\ \cref{initial}.
  They have the unique solution $a_1=\tfrac{180}{169}$, $a_2=-{\tfrac {90}{169}}-{\tfrac
    {2}{169}}\cdot \sqrt {3} \, i$, and $a_3=-{\tfrac {90}{169}}+{\tfrac {2}{169}}\cdot \sqrt
  {3} \, i$. Thus,  $b_2=2\cdot
      \mathrm{Re}(a_2)=-\tfrac{180}{169}$, and $\dashed{b}_2=-2\cdot
      \mathrm{Im}(a_2)=\tfrac{4}{169}\cdot \sqrt{3}$.
      The polar representation of   $\lambda
= \tfrac{-1 + \sqrt{3}\, i}{5}$ is $\tfrac{2}{5}\cdot e^{\tfrac{2\pi}{3} \cdot i}$. Hence,
      $\mathrm{Re}(\lambda^x)=(\tfrac{2}{5})^x\cdot\cos(\tfrac{2\pi}{3}\cdot x)$ and
      $\mathrm{Im}(\lambda^x)=(\tfrac{2}{5})^x\cdot\sin(\tfrac{2\pi}{3}\cdot x)$. 
Thus, we get $rt^\PP_x=0$ for $x\le 0$ and for $x>0$ we have 
      \[rt^\PP_x=\tfrac{30}{13} \cdot x+\tfrac{180}{169}-\tfrac{180}{169}\cdot
      \left(\tfrac{2}{5}\right)^{x}\!\!\cdot\, \cos\left(\tfrac{2\pi}{3}\cdot x\right)
      +\tfrac{4}{169}\cdot \sqrt{3}\cdot \left(\tfrac{2}{5}\right)^x\!\!\cdot\, \sin\left(\tfrac{2\pi}{3}\cdot x\right).\]}
\end{example}

\begin{example}[Example with Root of Higher Multiplicity]
{\sl  As an example where the characteristic polynomial has a root with multiplicity greater
  than 1, consider the following program $\PP$.
  \begin{tabbing}
    \hspace*{4.5cm} \= \hspace*{.4cm}\=\kill
    \>\emph{\texttt{while}} $(x > 0) \; \{$\\
    \>\>$x = x+1$ \hspace*{1cm}\=$[\tfrac{5}{21}];$\\[0.1cm]
    \>\>$x = x$\>$[\tfrac{4}{7}];$\\[0.1cm]
    \>\>$x = x-1$\>$[\tfrac{3}{35}];$\\[0.1cm]
    \>\>$x = x-2 $\>$[\tfrac{7}{75}];$\\[0.1cm]
    \>\>$x = x-3$\>$[\tfrac{2}{175}];$\\
    \>$\}$
  \end{tabbing}
  We use the approach of \cref{algorithm_exact} to infer the exact expected runtime. Step
  1 is not necessary, since we already have a random walk program.
  \begin{enumerate}
    \item[2.] We have $p'=0$, $\mu_\PP=-\tfrac{12}{175}$, and thus, $C_{lin} = \tfrac{175}{12}$. 
    
    \item[3.] The characteristic polynomial has the
    degree $k+m=3+1=4$ and is given by
      $\chi_\PP(\lambda)=\tfrac{5}{21} \cdot \lambda^4-\tfrac{3}{7} \cdot
      \lambda^3+\tfrac{3}{35} \cdot \lambda^2+\tfrac{7}{75} \cdot \lambda+\tfrac{2}{175}$. It 
      has 
      the roots $\lambda_1=1$  with multiplicity
      $1$, $\lambda_2=\tfrac{6}{5}$ with multiplicity $1$, and $\lambda_3=-\tfrac{1}{5}$ with multiplicity $2$. 
                    Hence, 
the three roots with absolute value $\leq 1$ are $1$ and $-\tfrac{1}{5}$
    with multiplicity 2. As proved in \cref{Number of Roots Lemma} we have $1+2=3=k$ such roots counted with multiplicity.
    \item[4.] By \cref{solution}, the general solution is
    \[f(x)=\tfrac{175}{12} \cdot x+a_{1,0}+a_{2,0}\cdot (-\tfrac{1}{5})^x +
    a_{2,1}\cdot x \cdot (-\tfrac{1}{5})^x \; \text{for $x > -3$}.\] The coefficients
$a_{1,0}$, $a_{2,0}$, and $a_{2,1}$ are determined by the following linear equations,
    cf.\ \cref{initial}:
    \begin{eqnarray*}
      0&=&f(0)=a_{1,0}+a_{2,0}\\
      0&=&f(-1)=-\tfrac{175}{12}+a_{1,0}-5\cdot a_{2,0} + 5 \cdot a_{2,1} \\
      0&=&f(-2)=-\tfrac{175}{6}+a_{1,0}+25\cdot a_{2,0}-50\cdot a_{2,1} 
    \end{eqnarray*}
   They have the unique solution
    $a_{1,0}=\tfrac{175}{36}$, $a_{2,0}=-\tfrac{175}{36}$, and $a_{2,1}=-\tfrac{35}{12}$.
    Hence, $rt^\PP_x=0$ for $x\le 0$ and for $x>0$ we have 
    \[rt^\PP_x=\tfrac{175}{12}x+\tfrac{175}{36}-\tfrac{175}{36}\cdot (-\tfrac{1}{5})^x-\tfrac{35}{12}\cdot x \cdot (-\tfrac{1}{5})^x.\]
  \end{enumerate}}
\end{example}

\begin{example}[Negative Binomial Loop from \protect{\cite[Sect.\ 5.1]{DBLP:journals/pacmpl/McIverMKK18}}]
 {\sl Consider the following program $\PP$ from \cite[Sect.\ 5.1]{DBLP:journals/pacmpl/McIverMKK18}.
  \begin{tabbing}
    \hspace*{4.5cm} \= \hspace*{.4cm}\=\kill
    \>\emph{\texttt{while}} $(x > 0) \; \{$\\
    \>\>$x = x-1$ \hspace*{1cm}\=\kill
    \>\>$x = x$\>$[\tfrac{1}{2}];$\\[0.1cm]
    \>\>$x = x-1$ \>$[\tfrac{1}{2}];$\\
    \>$\}$
  \end{tabbing}
  The drift of this program is $\mu_\PP=-\tfrac{1}{2}<0$ and by \cref{termination_decidable}
  we can conclude that the negative binomial loop is positive almost surely
  terminating. Furthermore, as $k=1$ and $m=0$ we obtain that the expected runtime $rt^\PP_x$
  of this program satisfies $2\cdot x\leq rt^\PP_x\leq 2\cdot x$ for all $x>0$ by \cref{bounds_runtime_constant_probability_programs}, i.e., 
    \[rt^\PP_x=\begin{cases} 2 \cdot x, & \text{ if } x>0\\ 0,&  \text{ if } x \leq 0\end{cases}\] 
    So with our approach, the expected runtime of
    this example can be determined with clearly less effort than with the technique
  presented in \cite{DBLP:journals/pacmpl/McIverMKK18}. On the other hand, the reasoning of
  \cite{DBLP:journals/pacmpl/McIverMKK18} can be applied to arbitrary probabilistic
  programs which may even include non-determinism.}
\end{example}

\begin{example}[Symmetric Random Walk]\label{Symmetric Random Walk}
 {\sl Consider the following program $\PP$.
  \begin{tabbing}    
    \hspace*{4.5cm} \= \hspace*{.4cm}\=\kill
    \>\emph{\texttt{while}} $(x > 0) \; \{$\\
    \>\>$x = x-1$ \hspace*{1cm}\=\kill
    \>\>$x = x+1$\>$[\tfrac{1}{2}];$\\[0.1cm]
    \>\>$x = x-1$ \>$[\tfrac{1}{2}];$\\
    \>$\}$
  \end{tabbing}
  One easily calculates the drift $\mu_\PP=\tfrac{1}{2}-\tfrac{1}{2}=0$. So by
  \cref{termination_decidable} we immediately obtain the well-known result that this program is almost
  surely terminating but not positive almost surely terminating, i.e., the expected
  runtime is infinite.}
\end{example}

\begin{example}[Example with Irrational Runtime from \protect{\cite[Ex.\ 5.1]{ChatterjeeTOPLAS18}}]
 {\sl Consider the following program $\PP$ which was presented
  in \cite[Ex.\ 5.1]{ChatterjeeTOPLAS18} to show that expected runtimes can be irrational.
  \begin{tabbing}    
    \hspace*{4.5cm} \= \hspace*{.4cm}\=\kill
    \>\emph{\texttt{while}} $(x > 0) \; \{$\\
    \>\>$x = x-1$ \hspace*{1cm}\=\kill
    \>\>$x = x+1$\>$[\tfrac{1}{2}];$\\[0.1cm]
    \>\>$x = x-2$ \>$[\tfrac{1}{2}];$\\
    \>$\}$
  \end{tabbing}
  Its drift is $\mu_{\PP}=\tfrac{1}{2}\cdot 1 + \tfrac{1}{2}\cdot (-2) = -\tfrac{1}{2} <0$,
  so by \cref{termination_decidable} this program is indeed PAST. As $k=2$, we obtain
  the following  bounds on the expected runtime by \cref{bounds_runtime_constant_probability_programs} for any
  positive initial value $x>0$:
  \[2\cdot x \leq rt^\PP_x\leq 2 \cdot x + 2\]
 The characteristic polynomial of this program is $\chi_\PP(x)=\tfrac{1}{2}\cdot
  x^3-x^2+\tfrac{1}{2}$. It has the three roots $1$, $\tfrac{1+\sqrt{5}}{2}$, and
  $\tfrac{1-\sqrt{5}}{2}$. So the $k=2$ roots of absolute value $\leq 1$ are
  $1$ and $\tfrac{1-\sqrt{5}}{2}$.
  By \cref{solution}, the general solution is
\[ f(x) = 2\cdot x+ a_1 + a_2 \cdot  (\tfrac{1-\sqrt{5}}{2})^{x} \; \text{for $x > -2$}.\]
The coefficients $a_1$ and $a_2$ are determined by the following equations:
 \begin{eqnarray*}
      0&=&f(0)=a_{1}+a_{2}\\
      0&=&f(-1)= -2+ a_1 + a_2 \cdot \tfrac{2}{1-\sqrt{5}}
      \end{eqnarray*}
They have the unique solution $a_1 = 3-\sqrt{5}$ and $a_2 = \sqrt{5}-3$. Hence, 
  we infer the following exact expected
  runtime for 
  $x>0$:
  \[rt^\PP_x=2\cdot x+3-\sqrt{5}+ (\sqrt{5}-3) \cdot  (\tfrac{1-\sqrt{5}}{2})^{x}\]
  So in particular, $rt^\PP_1=1+\sqrt{5}$.
The expected runtime obtained in
\cite[Ex.\ 5.1]{ChatterjeeTOPLAS18} is 
slightly different (they obtain  $2\cdot (5+\sqrt{5})$), because  \cite{ChatterjeeTOPLAS18} counts the number of executed
statements whereas we count loop iterations.}
\end{example}

\begin{example}[Example from \protect{\cite[Sect.\ 3.1]{DBLP:conf/pldi/NgoC018}}]
 {\sl Consider the following program $\PP$. It was used in
  \cite[Sect.\ 3.1]{DBLP:conf/pldi/NgoC018} to show how one can infer the expected runtime
  of a probabilistic program by solving a recurrence equation.
  However, the authors of \cite{DBLP:conf/pldi/NgoC018}  
conclude that
recurrence equations are not well suited for runtime analyses, while our paper shows that
for CP programs, an automated runtime analysis based on recurrence equations is
feasible.
  \begin{tabbing}    
    \hspace*{4.5cm} \= \hspace*{.4cm}\=\kill
    \>\emph{\texttt{while}} $(x > 0) \; \{$\\
    \>\>$x = x-1$ \hspace*{1cm}\=\kill
    \>\>$x = x+1$\>$[\tfrac{1}{4}];$\\[0.1cm]
    \>\>$x = x-1$ \>$[\tfrac{3}{4}];$\\
    \>$\}$
  \end{tabbing}
  Its drift is $\mu_{\PP}=\tfrac{1}{4}\cdot 1 + \tfrac{3}{4}\cdot (-1) = -\tfrac{1}{2}
  <0$, so by \cref{termination_decidable} this program is indeed PAST.
  By \cref{bounds_runtime_constant_probability_programs},
we can infer the following  bounds on the expected runtime for any positive initial value $x>0$:
    \[2\cdot x \leq rt^\PP_x\leq 2 \cdot x.\]
Hence, in this example we can directly conclude that for any $x>0$ the expected runtime is
$rt^\PP_x=2 \cdot x$,
without having to solve the corresponding recurrence equation with
 \cref{solution} resp.\ \cref{algorithm_exact}.}
\end{example}

\section{Proofs for \cref{ConnectionSection}}\label{MDP}

We begin with introducing some auxiliary definitions that will be needed in the
proofs. To define the
\emph{run} of a program,
we use the  ``Kronecker-Delta''
where for any $\vec{y}, \vec{z} \in \IZ^{r}$ with $\vec{y} \neq \vec{z}$ we have $\delta_{\vec{y},\vec{z}} = 0$ and $\delta_{\vec{y},\vec{y}}
=1$.

\begin{definition}[Run of a Program]\label{Run of a Program}
  For any program $\PP$ as in \cref{def-program},
a \emph{run}  is an infinite sequence
  $\run{\vec{z}_0,\vec{z}_1,\vec{z}_2, \ldots}
  \in (\IZ^{r})^{\omega}$ and a   
   \emph{prefix run} is a finite sequence
   $\run{\vec{z}_0,\vec{z}_1,\ldots, \vec{z}_j} \in (\IZ^{r})^{j+1}$ for
   some $j \in \IN$.
   For a prefix run $\pi$, its \emph{cylinder set} $\Cyl^{\IZ^{r}}(\pi) \subseteq  (\IZ^{r})^\omega$
consists of all runs with prefix $\pi$.

\noindent{}For any  initial value $\startvec{x} \in \IZ^{r}$ of the program variables, we define a function
  $pr^\PP_{\startvec{x}}$ 
that maps any prefix run $\pi$ to its probability
 (i.e., $0 \leq pr^\PP_{\startvec{x}}(\pi) \leq 1$).
 Thus, for any prefix run $\run{\vec{z}_0,\vec{z}_1,\ldots, \vec{z}_j}$,
 let
 $pr^\PP_{\startvec{x}}(\run{\vec{z}_0}) = \delta_{\startvec{x},\vec{z}_0}$ and if $j \geq
 1$, we define:
\[\begin{array}{rcl}
pr^\PP_{\startvec{x}}(\run{\vec{z}_{0},\ldots,\vec{z}_j}) &\!=\!\!&\left\{ \begin{array}{l}
  pr^\PP_{\startvec{x}}(\run{\vec{z}_0,\ldots,\vec{z}_{j-1}}) \cdot (p_{\vec{z}_j - \vec{z}_{j-1}}(\vec{z}_{j-1}) +
  \delta_{\vec{z}_j,\vec{d}} \cdot p'(\vec{z}_{j-1})),\\
  \hspace*{\fill} \!\text{if $\vec{a}\bullet\vec{z}_{j-1} > b$}\\
  pr^\PP_{\startvec{x}}(\run{\vec{z}_0,\ldots,\vec{z}_{j-1}}) \cdot \delta_{\vec{z}_{j-1},\vec{z}_j},
    \hspace*{\fill} \text{if $\vec{a}\bullet\vec{z}_{j-1} \leq b$}
  \end{array}\right.
\end{array}\]
\end{definition}

\begin{example}[Run in $\PP_{race}$]\label{run for tortoise}
{\sl For  $\PP_{race}$ from \cref{exmpl:tortoise_and_hare} and a start configuration where
the tortoise is 10 steps ahead of the hare (e.g., $\startvec{x} = \column{11}{1}$), the
prefix run $\run{\column{11}{1}, \column{12}{1}, 
\column{13}{6}}$ has the probability $pr^{\PP_{race}}_{\column{11}{1}}\left(\run{\column{11}{1}, \column{12}{1},
\column{13}{6}}\right) = \delta_{\column{11}{1}, \column{11}{1}} \, \cdot \, p_{\column{12}{1} -
  \column{11}{1}}\column{11}{1} \, \cdot \,  p_{\column{13}{6} -
  \column{12}{1}}\column{12}{1} =  p_{\column{1}{0}}\column{11}{1} \, \cdot \,
p_{\column{1}{5}}\column{12}{1} = \tfrac{6}{11} \, \cdot \, 
\tfrac{1}{22} = \tfrac{3}{121}$.
  So we take into account
whether the prefix run starts with $\startvec{x} = \column{11}{1}$ and multiply the probability to get
from $\vec{x}=\column{11}{1}$ to $\vec{x}=\column{12}{1}$ with the probability to get from $\vec{x}=\column{12}{1}$ to $\vec{x}=\column{13}{6}$.}
\end{example}

\noindent{}In our setting, we regard a measurable space $(\Omega, \F{F})$ where
$\Omega$ is the set of runs $(\IZ^{r})^\omega$ and we want to measure the
probability that a run starts with a certain sequence $\pi$ of numbers. So we
regard the smallest $\sigma$-field $\F{F}^{\IZ^{r}}$ that contains all cylinder sets
$Cyl^{\IZ^{r}}(\pi)$ for all prefix runs $\pi$. Moreover, we consider the
probability space $((\IZ^{r})^\omega, \F{F}^{\IZ^{r}}, \IP_{\startvec{x}}^\PP)$. Here,
the probability measure $\IP_{\startvec{x}}^\PP$ for a program $\PP$ is defined
such that the probability  that a run is in $Cyl^{\IZ^{r}}(\pi)$ is the probability
$pr^\PP_{\startvec{x}}(\pi)$ of the prefix run $\pi$.

\begin{definition}[Probability Measure for a Program]\label{probability_measures}
   For any program $\PP$ as in \cref{def-program} and any $\startvec{x} \in \IZ^{r}$,
   let $\IP_{\startvec{x}}^\PP:\F{F}^{\IZ^{r}} \to
   [0,1]$  be the unique
  \emph{probability measure} such that  we have
  $\IP_{\startvec{x}}^\PP(Cyl^{\IZ^{r}}(\pi))=pr_{\startvec{x}}^\PP(\pi)$ for any prefix run $\pi$.
\end{definition}

\begin{example}[Probability Measure for $\PP_{race}$]
{\sl 
  $Cyl^{\IZ^{2}}(\langle\column{11}{1},
  \column{12}{1},
\column{13}{6}\rangle)$ consists of all runs
 that start with $\column{11}{1}$, $\column{12}{1}$, $\column{13}{6}$. If the initial
 value is $\startvec{x} = \column{11}{1}$,
 then the
probability that a run is in  $Cyl^{\IZ^{2}}(\run{\column{11}{1},
  \column{12}{1}, 
\column{13}{6}})$ is
\[\mbox{\small $\IP_{\column{11}{1}}^{\PP_{race}}(Cyl^{\IZ^{2}}(\run{\column{11}{1}, \column{12}{1},\column{13}{6}})) 
= pr_{\column{11}{1}}^{\PP_{race}}(\run{\column{11}{1},\column{12}{1},\column{13}{6}}) = \tfrac{3}{121}.$}\]}
\end{example}

\noindent{}Now we introduce a stochastic process
$\mathbf{X}^{\IZ^{r}}$ (i.e., a family of random variables $X_j^{\IZ^{r}}$) which
corresponds to the values of the program variables during a run.

\begin{definition}[Stochastic Process $\mathbf{X}^{\IZ^{r}}$]\label{stochastic_process}
 For $r\!\geq\!1$, let $\mathbf{X}^{\IZ^{r}}\!\!=\!(X^{\IZ^{r}}_{j})_{j \in \IN}$ where
  $X^{\IZ^{r}}_j\!\!:(\IZ^{r})^\omega\!\!\to \IZ^{r}$ is defined as $X^{\IZ^{r}}_{j}\!(\run{\vec{z}_0,\ldots,\vec{z}_j,\ldots
 }) = \vec{z}_j$,
  i.e., when applied to a run, $\!X^{\IZ^{r}}_{j}\!\!\!$ returns the values of the
  program variables after the $j$-th loop iteration.
\end{definition}

\noindent
So 
$X^{\IZ^{2}}_0(\run{\column{11}{1},\column{12}{1},\ldots}) =
\column{11}{1}$
and  $X^{\IZ^{2}}_1(\run{\column{11}{1},
\column{12}{1},\ldots}) = \column{12}{1}$.

Using $\mathbf{X}^{\IZ^{r}}$, the \emph{termination time} of a program
(cf.\ \cref{def:termination_time}) can 
also be defined as $T^{\PP}(\pi) =  \inf \{j \in \IN \mid \vec{a} \bullet
X^{\IZ^{r}}_j(\pi) \leq b\}$ for any $\pi \in (\IZ^{r})^\omega$. As shown in \cref{AST Def}, the
termination time is needed to define the expected runtime of a program.
We first prove that if the initial values $\startvec{x}$ violate the loop guard,
then the expected
runtime is trivially 0.

\coroviolatinginitialvalues*
\begin{proof}
  We have
  $\IP_{\startvec{x}}^\PP(X_0^{\IZ^{r}} = \startvec{x}) = \IP^\PP_{\startvec{x}}((X_0^{\IZ^{r}})^{-1}(\{\startvec{x}\}))  = \IP^\PP_{\startvec{x}}(\Cyl^{\IZ^{r}}(\startvec{x})) =
  pr^\PP_{\startvec{x}}(\startvec{x}) =\delta_{\startvec{x},\startvec{x}} = 1$.
  Thus,  for $\startvec{x}$ with $\vec{a} \bullet \startvec{x} \leq b$, we obtain  $\IP^\PP_{\startvec{x}}(T^{\PP} = 0)
= \IP^\PP_{\startvec{x}}(\vec{a} \bullet X_0^{\IZ^{r}} \leq  b) \leq \IP^\PP_{\startvec{x}}(X_0^{\IZ^{r}} = \startvec{x})
  = 1$ and hence
  $rt^\PP_{\startvec{x}} = \expecP{\startvec{x}}{T^{\PP}} = 0$.
  \qed
  \end{proof}

\noindent{}To prove \cref{correctness_of_ert} we show how to translate any probabilistic
program into a Markov Decision Process (MDP) and then re\-use existing corresponding
results for MDPs \cite{MDPsPuterman}.
In this section we recapitulate the needed concepts for MDPs and after the introduction of
any concept, we show how it is related to the corresponding notions for probabilistic programs.

\noindent{}We consider infinite time horizon MDPs, where we restrict ourselves to deterministic MDPs
without final states, i.e., to
\emph{Discrete Time Markov Chains (DTMCs)}. So there is one unique action for every state
of the MDP.

\begin{definition}[Discrete Time Markov Chain]\label{def_mdp}
  A \emph{Discrete Time Markov Chain (DTMC)}
  without final states 
   $\C{M}=(\C{S},P,rew)$ consists of 
  the following components:
  \begin{itemize}
    \item[$\bullet$]  $\C{S}$ is a set of states.
    \item[$\bullet$]  $P: \C{S} \times \C{S} \to [0,1]$ is a transition
      probability function such that for all states $s \in \C{S}$ we have
      $\sum\nolimits_{\dashed{s} \in \C{S}} \; P(s,\dashed{s}) = 1$.
    \item[$\bullet$]  $rew:\C{S} \to \IR$ is the reward function.
  \end{itemize}
\end{definition}

\noindent{}\cref{Translating Probabilistic Programs to DTMCs} shows how to translate any
probabilistic program $\PP$ to a corresponding DTMC $\C{M}_\PP$.  This is possible for our
notion of probabilistic programs, because
the values of the program variables only depend on their values
in the previous loop iteration.
To ease notation,
let the probabilities $p_{\vec{c}}(\vec{x})$  be constant zero for all $\vec{c} \in
\IZ^{r} \setminus \{\vec{c}_1, \ldots, \vec{c}_n \}$.

\begin{definition}[Translating Probabilistic Programs to DTMCs]\label{Translating Probabilistic Programs to DTMCs}
  Let $\PP$ be a  program as in \cref{def-program}. Its \emph{corresponding
    DTMC} $\C{M}_\PP = (\C{S},P,rew)$ is given by
  \begin{itemize}
    \item[$\bullet$] $\C{S}=\IZ^{r}$
    \item[$\bullet$] For states satisfying the loop guard, the probability function $P$ is induced by the probabilities $p_{\vec{c}_j}$,
      and for states that do not satisfy the loop guard, the probability to remain in the state is 1:
      \[P(s,s') = \left\{ \begin{array}{ll}
        p_{s'-s}(s) + \delta_{s',\vec{d}} \cdot p'(s), &\text{if $\vec{a} \bullet s > b$}\\
        \delta_{s,s'}, &\text{if $\vec{a} \bullet s \leq b$}
      \end{array}
      \right.\]
 \item[$\bullet$] The reward function is given by $rew(s)=\begin{cases}1, &\text{if }
     \vec{a} \bullet s > b\\ 0, &\text{if } \vec{a} \bullet s \leq b\end{cases}$
    \end{itemize}
\end{definition}
\noindent

\noindent{}For a DTMC $\C{M} = (\C{S}, P, rew)$ and each initial state $\startvec{x} \in \C{S}$, we examine a stochastic process
$\mathbf{X}^{\C{S}}$ using a probability measure $\IP^{\C{M}}_{\startvec{x}}$ for the measurable
space 
 $(\C{S}^\omega,\F{F}^{\C{S}})$.  
The  definitions of  $\F{F}^{\C{S}}$, $\IP^{\C{M}}_{\startvec{x}}$,
and $\mathbf{X}^{\C{S}}$
are generalizations of the corresponding definitions
from \cref{ConnectionSection} to 
arbitrary state spaces.

\noindent{}Moreover, instead of (prefix) runs we now regard \emph{histories} resp.\ \emph{sample
  paths} and instead of the probability $pr^\PP_{\startvec{x}}$ of a run with the
initial variable assignment $\startvec{x}$ we regard the probability  $pr^{\C{M}}_{\start{x}}$ of a
sample path with the initial state $\start{x}$.

\begin{definition}[Probability Measure for a DTMC]\label{stochastics_mdp}
  Let $\C{M}=(\C{S},P,rew)$ be a DTMC.
  \begin{itemize}
  \item[$\bullet$] A \emph{sample path}  is an infinite sequence
      $\run{s_0,s_1,s_2,\ldots} \in \C{S}^{\omega}$  and a
 \emph{history}  is a finite sequence $\run{s_0,s_1,\ldots,
 s_j} \in \C{S}^{j+1}$ for some $j \in \IN$.
 The \emph{cylinder set} $Cyl^{\C{S}}(\pi)$ of a history $\pi$ consists of all sample paths with
 prefix $\pi$.

\item[$\bullet$] For any $\start{x} \in \C{S}$,  $pr^{\C{M}}_{\start{x}}: \bigcup\nolimits_{j \in \IN}
  \; \C{S}^{j+1} \to [0,1]$ is the function that maps any history
$\run{s_0,\ldots,s_j}$
  to its probability if
  $\start{x}$ is the initial state. Thus, let
  $pr^{\C{M}}_{\start{x}}(\run{s_0})=\delta_{\start{x}, s_0}$ and
   if $j \geq
 1$, we define:
      \begin{align*}
        pr^{\C{M}}_{\start{x}}(\run{s_0,\ldots,s_j})&= pr^{\C{M}}_{\start{x}}(\run{s_0,\ldots,s_{j-1}})\cdot P(s_{j-1},s_j)
      \end{align*} 
   
    \item[$\bullet$] The (canonical)  \emph{measurable space} for a DTMC is
      $(\C{S}^\omega,\F{F}^{\C{S}})$, where $\F{F}^{\C{S}}$ is the smallest
      $\sigma$-field containing all cylinder sets
$Cyl^{\C{S}}(\pi)$ for all
      histories $\pi$.
    
  \item[$\bullet$] For any $\start{x} \in \C{S}$,  the \emph{probability measure}
    $pr^{\C{M}}_{\start{x}}:
    \F{F}^{\C{S}} \to [0,1]$ \emph{for the DTMC $\C{M}$ and the initial state $\start{x}$}
    is the unique probability measure such that for any history $\pi$ we have
    $\IP^{\C{M}}_{\start{x}}(Cyl^{\C{S}}(\pi))=pr^{\C{M}}_{\start{x}}(\pi)$.
    
    \item[$\bullet$] The stochastic process $\mathbf{X}^{\C{S}}  = (X^{\C{S}}_j)_{j \in \IN}$  is defined as
      $X^{\C{S}}_j : \C{S}^\omega \to \C{S}$, where
      $X^{\C{S}}_j(s_0,\ldots,s_j,\ldots) = s_j$.
  \end{itemize}
\end{definition}

The following corollary shows that for any probabilistic program $\PP$, the probability
spaces for $\PP$ and for its corresponding  DTMC $\C{M}_\PP$
are
the same.

\begin{corollary}[$\PP$ and $\C{M}_\PP$ Have the Same Probability Measure]\label{isomorphism}
For any program $\PP$  as in \cref{def-program} and its corresponding DTMC $\C{M}_\PP$, the corresponding probability
spaces are the same. So in particular,  we have
 $\IP^{\PP}_{\startvec{x}} = \IP^{\C{M}_\PP}_{\startvec{x}}$ for
any $\startvec{x} \in \IZ^{r}$. 
\end{corollary}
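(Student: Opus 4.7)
The plan is to show that the two probability measures $\IP^{\PP}_{\startvec{x}}$ and $\IP^{\C{M}_\PP}_{\startvec{x}}$ are defined on the same measurable space and agree on a generating $\pi$-system, so that a standard uniqueness-of-extension argument forces them to coincide on all of $\F{F}^{\IZ^{r}}$. By construction in \cref{Translating Probabilistic Programs to DTMCs} the state space of $\C{M}_{\PP}$ is $\C{S}=\IZ^{r}$, so the underlying measurable space $(\C{S}^{\omega},\F{F}^{\C{S}})$ from \cref{stochastics_mdp} is literally $((\IZ^{r})^{\omega},\F{F}^{\IZ^{r}})$, and moreover the notions of cylinder set $\Cyl^{\IZ^{r}}(\pi)$ coincide under the (trivial) identification of prefix runs with histories.

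Since $\IP^{\PP}_{\startvec{x}}$ is characterised uniquely by its values on cylinder sets (\cref{probability_measures}) and the same holds for $\IP^{\C{M}_\PP}_{\startvec{x}}$ (\cref{stochastics_mdp}), it suffices to prove $pr^{\PP}_{\startvec{x}}(\pi) = pr^{\C{M}_\PP}_{\startvec{x}}(\pi)$ for every prefix run/history $\pi = \langle \vec{z}_{0},\ldots,\vec{z}_{j}\rangle$. I would do this by induction on $j$. For $j=0$ both sides equal $\delta_{\startvec{x},\vec{z}_{0}}$ directly from the definitions. For the induction step, I peel off the last state and apply the recursive definitions; the required identity then reduces to showing that the one-step factor
\[
p_{\vec{z}_{j}-\vec{z}_{j-1}}(\vec{z}_{j-1})+\delta_{\vec{z}_{j},\vec{d}}\cdot p'(\vec{z}_{j-1})
\quad\text{resp.}\quad \delta_{\vec{z}_{j-1},\vec{z}_{j}}
\]
appearing in $pr^{\PP}_{\startvec{x}}$ coincides with $P(\vec{z}_{j-1},\vec{z}_{j})$ in $\C{M}_{\PP}$ in the two cases $\vec{a}\bullet \vec{z}_{j-1}>b$ and $\vec{a}\bullet\vec{z}_{j-1}\leq b$. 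But this is a direct read-off from \cref{Translating Probabilistic Programs to DTMCs}, where $P$ was defined precisely to mirror these probabilities.

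Once equality on cylinder sets is established, I would invoke the uniqueness part of Carath\'eodory's extension theorem (equivalently, the $\pi$-$\lambda$ theorem): the family of cylinder sets is closed under finite intersections (the intersection of two cylinders is again a cylinder or empty), it contains $(\IZ^{r})^{\omega}$, and it generates $\F{F}^{\IZ^{r}}$ by construction; therefore any two probability measures on $\F{F}^{\IZ^{r}}$ agreeing on all cylinder sets must agree everywhere. Hence $\IP^{\PP}_{\startvec{x}}=\IP^{\C{M}_{\PP}}_{\startvec{x}}$ on $\F{F}^{\IZ^{r}}$, which is the claim.

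I do not anticipate a serious obstacle: the entire argument is bookkeeping to match the two recursive definitions. The only mildly delicate point is to check the case $\vec{a}\bullet \vec{z}_{j-1}\leq b$, where one must notice that $P$ uses $\delta_{\vec{z}_{j-1},\vec{z}_{j}}$ rather than $\delta_{\vec{z}_{j},\vec{z}_{j-1}}$ (they are of course equal), and to verify that the stated definition of $P$ indeed gives row sums equal to $1$ when $\vec{a}\bullet \vec{z}_{j-1}>b$, which follows from $\sum_{1\leq j\leq n} p_{\vec{c}_{j}}(\vec{x})+p'(\vec{x})=1$ as required in \cref{def-program}, so that $\C{M}_{\PP}$ is a genuine DTMC in the sense of \cref{def_mdp}.
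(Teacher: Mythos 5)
Your proposal is correct and follows essentially the same route as the paper: the paper's proof likewise observes that the measurable spaces coincide and that $pr^{\PP}_{\startvec{x}} = pr^{\C{M}_\PP}_{\startvec{x}}$ on prefix runs, and then concludes equality of the measures from the uniqueness clause built into their definitions (which is exactly the extension/uniqueness argument you make explicit). Your version merely spells out the induction on prefix length and the $\pi$-$\lambda$ step that the paper leaves implicit.
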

\begin{proof}
  By \cref{Translating Probabilistic Programs to DTMCs,stochastics_mdp}, the measurable
space for $\C{M}_\PP$ is $((\IZ^{r})^\omega, \F{F}^{\IZ^{r}})$, which is also the measurable space for
$\PP$.
Moreover, \cref{stochastics_mdp} implies $pr_{\startvec{x}}^{\PP} = pr_{\startvec{x}}^{\C{M}_\PP}$ and thus, 
$\IP^{\PP}_{\startvec{x}} = \IP^{\C{M}_\PP}_{\startvec{x}}$ for
any $\startvec{x} \in \IZ^{r}$.
\qed
\end{proof}

\noindent{}For
DTMCs, one is interested in the  \emph{expected total reward}.
For a DTMC $\C{M} = (\C{S}, P, rew)$ and the stochastic process $\mathbf{X}^{\C{S}}$, the expected
total reward maps any initial state $s_0 \in \C{S}$ to the expected value of
$\sum\nolimits_{j\in \IN} \; rew(X^{\C{S}}_{j})$
under the probability measure $\IP^{\C{M}}_{s_0}$ (if this expected value exists).
Note that if $rew(s) \geq 0$ for all $s \in \C{S}$, then
the sum $\sum\nolimits_{j\in \IN} rew(X^{\C{S}}_{j}):\C{S}^\omega \to \overline{\IR_{\geq 0}}$
is a non-negative\footnote{The non-negativity of $rew$
  ensures that the infinite sum of all  $rew(X^{\C{S}}_{j})$ is a value in
  $\overline{\IR_{\geq 0}}$. In contrast, if we have positive and negative rewards, then
  the infinite sum might diverge and neither converge to $-\infty$ nor to $\infty$.}
  random variable. Hence, its expected value under the probability measure
$\IP_{s_0}^{\C{M}}$ is well defined.
In particular, this holds for the DTMCs $\C{M}_\PP$ corresponding to programs $\PP$,
because for any run  $\pi =
\run{\vec{z}_0, \vec{z}_1,  \ldots} \in (\IZ^{r})^\omega$,  $rew(X^{\IZ^{r}}_{j}(\pi)) = rew(\vec{z}_j)$ is 1 if the
$j$-th tuple $\vec{z}_j$ in the run does not violate 
the loop condition $\vec{a} \bullet \vec{z}_j > b$
and 0, otherwise (i.e., $rew(\vec{z}) \in \{0,1\}$
for all $\vec{z} \in \IZ^{r}$).

\begin{definition}[Expected Total Reward]\label{Expected total reward}
  Let $\C{M}=(\C{S},P,rew)$ be a\linebreak DTMC. For any $s_0 \in \C{S}$, 
the \emph{expected total reward}
$tr^\C{M}_{s_0} \in \IR \cup \{-\infty, \infty\}$
of $\C{M}$ is
  \[tr^\C{M}_{s_0}\; = \; \lim \limits_{u \to \infty} \expecM{s_0}{\sum\limits_{0 \leq j
      \leq u}
    rew(X^{\C{S}}_j)}\]
  whenever this limit exists in $\IR \cup \{-\infty,\infty\}$.
As argued above, the limit always exists in the special case of non-negative
rewards.
Therefore, in the case where $rew(s) \in \{0,1\}$ for all $s \in \C{S}$, we have \[tr^\C{M}_{s_0} = 
 \sum_{u \in \overline{\IN}} u \cdot \IP_{s_0}^{\C{M}}(
\sum\limits_{j \in \IN}
rew(X^{\C{S}}_j) = u).\]
\end{definition}

\noindent{}The following lemma shows the connection between the
termination time and the total reward of a run. In the following, we say that a run $\pi =
\run{\vec{z}_0, \vec{z}_1, \ldots}$ is
\emph{constant on violating states} if $\vec{a}\bullet\vec{z}_j \leq b$ implies $\vec{z}_j = \vec{z}_{j+1}$ for all $j \in \IN$.

\begin{lemma}[Total Reward is Termination Time]\label{Total Reward is Termination Time}
  Let $\PP$ be a program as in \cref{def-program}.
  For every run $\pi$ that is constant on violating states, we have
  $\sum\nolimits_{j\in \IN} \; rew(X^{\IZ^{r}}_{j}(\pi)) = T^{\PP}(\pi)$.
\end{lemma}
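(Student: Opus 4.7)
The plan is to do a case split on whether $T^{\PP}(\pi)$ is finite or infinite, and in each case identify exactly which terms of the sum $\sum_{j \in \IN} rew(X^{\IZ^r}_j(\pi))$ contribute a $1$. Write $\pi = \langle \vec{z}_0, \vec{z}_1, \ldots \rangle$, so that $rew(X^{\IZ^r}_j(\pi)) = rew(\vec{z}_j)$, which by \cref{Translating Probabilistic Programs to DTMCs} equals $1$ iff $\vec{a}\bullet\vec{z}_j > b$ and $0$ otherwise.

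First I would treat the finite case. Suppose $T^{\PP}(\pi) = N \in \IN$. By \cref{def:termination_time}, $N$ is the least index with $\vec{a}\bullet\vec{z}_N \leq b$, so for every $j < N$ we have $\vec{a}\bullet\vec{z}_j > b$ and hence $rew(\vec{z}_j) = 1$. For $j \geq N$, a straightforward induction on $j - N$ using the assumption that $\pi$ is constant on violating states shows $\vec{z}_j = \vec{z}_N$; in particular $\vec{a}\bullet\vec{z}_j \leq b$, so $rew(\vec{z}_j) = 0$. Summing yields
\[ \sum\nolimits_{j \in \IN} rew(X^{\IZ^r}_j(\pi)) \;=\; \sum\nolimits_{0 \leq j < N} 1 \;+\; \sum\nolimits_{j \geq N} 0 \;=\; N \;=\; T^{\PP}(\pi). \]

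Next I would treat the infinite case. If $T^{\PP}(\pi) = \infty$, then by definition there is no $j$ with $\vec{a}\bullet\vec{z}_j \leq b$, so $\vec{a}\bullet\vec{z}_j > b$ and $rew(\vec{z}_j) = 1$ for every $j \in \IN$. Hence $\sum_{j \in \IN} rew(X^{\IZ^r}_j(\pi)) = \infty = T^{\PP}(\pi)$, where we use the convention (implicit in the codomain $\overline{\IN}$) that a countable sum of $1$'s equals $\infty$.

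There is no real obstacle here; the only subtlety is the elementary induction that propagates a violating state forward along a run which is constant on violating states, ensuring that after termination all subsequent rewards are $0$. Both cases combine to give the claimed identity for every $\pi$ that is constant on violating states.
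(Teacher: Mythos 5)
Your proof is correct and follows essentially the same route as the paper's: a case split on whether $T^{\PP}(\pi)$ is finite, counting the rewards equal to $1$ before termination and $0$ afterwards. The only difference is that you spell out the small induction showing that a violating state propagates forward along a run that is constant on violating states, which the paper leaves implicit.
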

\begin{proof}
  First, we show that the equality holds for runs $\pi = \run{\vec{z}_0,\vec{z}_1,\ldots}$
 where $T^{\PP}(\pi) =u<\infty$. So $\vec{a}\bullet\vec{z}_{j}>b$ for all $j<u$ and since $\pi$ is constant on violating states, we 
 have $\vec{a}\bullet\vec{z}_{j} \leq b$ for all $j \geq u$. Here we obtain
 \begin{align*}
   \sum\limits_{j \in \IN}  rew(X^{\IZ^{r}}_{j}(\pi)) 
  &=\sum\limits_{j \in \IN}  rew(\vec{z}_j) \\
&= \sum\limits_{0 \leq j < u} 1 + \sum\limits_{j \geq u} 0\\
   &= u\\
   &= T^{\PP}(\run{\vec{z}_0,\vec{z}_1,\ldots})\\
   &= T^{\PP}(\pi).
   \end{align*}

  Now we consider a run $\pi = \run{\vec{z}_0,\vec{z}_1,\ldots}$ such that
  $T^{\PP}(\pi)=\infty$, i.e., $\vec{a}\bullet\vec{z}_{j}>b$ for all $j \in \IN$. Then we have 
  \begin{align*}
     \sum\limits_{j \in \IN} rew(X^{\IZ^{r}}_{j}(\pi))
    &= \sum\limits_{j \in \IN} rew(\vec{z}_j)\\
    &= \sum\limits_{j \in \IN} 1 \\
    &= \infty\\
  &= T^{\PP}(\run{\vec{z}_0,\vec{z}_1,\ldots})\\
   &= T^{\PP}(\pi). \hspace*{4.1cm} \qed \hspace*{-4.1cm}
  \end{align*} 
\end{proof}

\noindent{}With \cref{isomorphism,Total Reward is Termination Time} 
we can show that the expected runtime of a program $\PP$ is identical to the
expected total reward of its corresponding DTMC $\C{M}_\PP$. This is the crucial theorem which allows us to
apply results on DTMCs also for probabilistic programs.

\begin{theorem}[Expected Total Reward is Expected Runtime]\label{etr_is_ert}
For any program $\PP$ as in \cref{def-program}, the expected runtime of $\PP$ and the expected total reward of the
corresponding DTMC
$\C{M}_\PP$ are the same, i.e., for any $\startvec{x} \in \IZ^{r}$ we have
$rt^\PP_{\startvec{x}} = tr^{\C{M}_\PP}_{\startvec{x}}$.
\end{theorem}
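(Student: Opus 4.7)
The plan is to reduce the equality $rt^{\PP}_{\startvec{x}} = tr^{\C{M}_\PP}_{\startvec{x}}$ to a comparison of two random variables that coincide almost surely, and then invoke \cref{isomorphism} to swap between the probability measures $\IP^{\PP}_{\startvec{x}}$ and $\IP^{\C{M}_\PP}_{\startvec{x}}$. Let me abbreviate $S(\pi) = \sum_{j \in \IN} rew(X^{\IZ^{r}}_j(\pi))$ for $\pi \in (\IZ^r)^\omega$. Because $rew$ takes values in $\{0,1\}$ on $\C{M}_\PP$, the sum $S$ is a well-defined random variable with values in $\overline{\IN}$, and by \cref{Expected total reward} we have $tr^{\C{M}_\PP}_{\startvec{x}} = \sum_{u \in \overline{\IN}} u \cdot \IP^{\C{M}_\PP}_{\startvec{x}}(S = u)$.

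First I would introduce the set $C \subseteq (\IZ^r)^\omega$ of runs that are constant on violating states. By \cref{Total Reward is Termination Time}, $S(\pi) = T^{\PP}(\pi)$ for every $\pi \in C$. The key observation is that $\IP^{\PP}_{\startvec{x}}(C) = 1$: every $\pi \notin C$ has a smallest index $j$ with $\vec{a}\bullet \vec{z}_j \leq b$ yet $\vec{z}_{j+1} \neq \vec{z}_j$, so $\pi$ lies in the cylinder $\Cyl^{\IZ^r}(\langle \vec{z}_0,\ldots,\vec{z}_{j+1}\rangle)$. By the second case in the definition of $pr^{\PP}_{\startvec{x}}$ in \cref{Run of a Program}, this prefix run has probability $0$ because it contains the factor $\delta_{\vec{z}_j,\vec{z}_{j+1}} = 0$. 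Since $(\IZ^r)^\omega \setminus C$ is a countable union of such measure-zero cylinders, it has $\IP^{\PP}_{\startvec{x}}$-measure $0$.

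Consequently $S = T^{\PP}$ holds $\IP^{\PP}_{\startvec{x}}$-almost surely, and the two random variables have the same distribution under $\IP^{\PP}_{\startvec{x}}$, i.e., $\IP^{\PP}_{\startvec{x}}(S = u) = \IP^{\PP}_{\startvec{x}}(T^{\PP} = u)$ for every $u \in \overline{\IN}$. Now I would apply \cref{isomorphism}, which gives $\IP^{\PP}_{\startvec{x}} = \IP^{\C{M}_\PP}_{\startvec{x}}$ on the common measurable space $((\IZ^r)^\omega, \F{F}^{\IZ^r})$, so
\[ tr^{\C{M}_\PP}_{\startvec{x}} \;=\; \sum\nolimits_{u \in \overline{\IN}} u \cdot \IP^{\C{M}_\PP}_{\startvec{x}}(S = u) \;=\; \sum\nolimits_{u \in \overline{\IN}} u \cdot \IP^{\PP}_{\startvec{x}}(S = u) \;=\; \sum\nolimits_{u \in \overline{\IN}} u \cdot \IP^{\PP}_{\startvec{x}}(T^{\PP} = u) \;=\; rt^{\PP}_{\startvec{x}}, \]
which concludes the proof. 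The only slightly delicate step is the almost-sure argument that $\IP^{\PP}_{\startvec{x}}(C) = 1$; it is purely combinatorial but needs the countable union over all finite ``bad'' prefixes to remain inside $\F{F}^{\IZ^r}$, which is immediate since each is a cylinder set.
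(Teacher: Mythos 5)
Your proof is correct and follows essentially the same route as the paper's: both reduce the claim to \cref{Total Reward is Termination Time} on runs that are constant on violating states, observe that the non-constant runs form a $\IP^{\PP}_{\startvec{x}}$-null set (the paper phrases this via $pr^{\PP}_{\startvec{x}}(\pi)=0$ for the corresponding prefix runs, you via a countable cover by null cylinders), and then transfer between the two measures using \cref{isomorphism}.
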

\begin{proof}
Due to \cref{Expected total reward}
we have
$tr^{\C{M}_\PP}_{\startvec{x}} = 
\sum\nolimits_{u \in \overline{\IN}} \; u \cdot \IP_{\startvec{x}}^{\C{M}_\PP}(A_u)$,
where $A_u  =
\{\pi \in (\IZ^{r})^\omega \mid \sum\nolimits_{j \in \IN} \;
rew(X^{\IZ^{r}}_j(\pi)) = u \}$. Note that $pr^{\C{M}_\PP}_{\startvec{x}}(\pi) = 0$ if $\pi$ is not
constant on violating states. Thus, $\IP_{\startvec{x}}^{\C{M}_\PP}(A_u) =
\IP_{\startvec{x}}^{\C{M}_\PP}(A_u')$ where
\[\mbox{\small $\begin{array}{rcl}
A_u' &=& \{\pi\!\in\!(\IZ^{r})^\omega \mid \sum\nolimits_{j \in \IN} \;
rew(X^{\IZ^{r}}_j(\pi)) = u \text{ and $\pi$ is constant on violating states} \} \\
&=&
\{\pi\!\in\!(\IZ^{r})^\omega \mid 
T^{\PP}(\pi)\!=\!u \text{ and $\pi$ is constant on violating states} \} \; \text{by \cref{Total Reward is Termination Time}.}
\end{array}$}\]
Hence, we obtain
\[ \begin{array}{rcll}
  tr^{\C{M}_\PP}_{\startvec{x}}  &=&
  \sum\limits_{u \in \overline{\IN}}  u \cdot \IP_{\startvec{x}}^{\C{M}_\PP}(A_u')\\
    &=&
  \sum\limits_{u \in \overline{\IN}}  u \cdot \IP_{\startvec{x}}^{\PP}(A_u')&\text{by
    \cref{isomorphism}.}
\end{array}\]
Note that $pr^{\PP}_{\startvec{x}}(\pi) = 0$ if $\pi$ is not
constant on violating states. Thus, $\IP_{\startvec{x}}^{\PP}(A_u') =
\IP_{\startvec{x}}^{\PP}(A_u'')$ where
$A_u'' = \{\pi \in (\IZ^{r})^\omega \mid 
T^{\PP}(\pi) = u \}$.
So we get
\[ \begin{array}{rcll}
  tr^{\C{M}_\PP}_{\startvec{x}}  &=& \sum\limits_{u \in \overline{\IN}}  u \cdot
  \IP_{\startvec{x}}^{\PP}(A_u'')&\\[.5cm]
  &=& \sum\limits_{u \in \overline{\IN}}  u \cdot
  \IP_{\startvec{x}}^{\PP}(T^{\PP} = u)&\\[.5cm]
  &=& \expecP{\startvec{x}}{T^{\PP}} &\\[.2cm]
  &=&rt^\PP_{\startvec{x}}. \hspace*{5.75cm} \qed \hspace*{-5.75cm}
\end{array}\]
\end{proof}

\noindent{}Now we introduce the transformer $\C{L}$ that is used for DTMCs and corresponds to the expected
runtime transformer for probabilistic programs. In the following, we restrict ourselves to
DTMCs with non-negative rewards to ensure that the expected total reward
exists.

\begin{definition}[$\C{L}^{\C{M}}$, cf.\ \protect{\cite[Eq.\ 7.1.5]{MDPsPuterman}}]\label{mapping_L}
  Let $\C{M}=(\C{S},P,rew)$ be a DTMC with only non-negative
  rewards.
  We define the mapping
  $\C{L}^{\C{M}}:(\C{S} \to \overline{\IR_{\geq_0}})\to (\C{S} \to
  \overline{\IR_{\geq_0}})$ such that for every function $f:\C{S} \to
  \overline{\IR_{\geq_0}}$ and every $s \in \C{S}$, we have
  \[\C{L}^{\C{M}}(f)(s)=rew(s)+\sum\limits_{\dashed{s}\in \C{S}} P(s,\dashed{s})\cdot f(\dashed{s}).\]
\end{definition}

The following corollary shows that the expected runtime transformer  $\C{L}^\PP$ of a
program $\PP$ is the same as the transformer $\C{L}^{\C{M}_\PP}$ of the corresponding DTMC $\C{M}_\PP$.

\begin{corollary}[$\C{L}^{\C{M}_\PP}$ is Expected Runtime Transformer $\C{L}^P$]\label{ert_reward}
For any program $\PP$, the expected runtime transformer
  $\C{L}^\PP$ from \Cref{expected_runtime_transformer} is identical to the transformer
 $\C{L}^{\C{M}_\PP}$ from \Cref{mapping_L}.   
\end{corollary}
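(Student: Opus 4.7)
The plan is to prove the corollary by direct unfolding of the two definitions, combined with a case split on whether the state $\vec{x}$ satisfies the loop guard. Since both $\C{L}^\PP$ and $\C{L}^{\C{M}_\PP}$ are functions on $\IZ^r\to\overline{\IR_{\geq 0}}$, it suffices to fix an arbitrary $f\colon \IZ^r \to \overline{\IR_{\geq 0}}$ and an arbitrary $\vec{x}\in\IZ^r$, and then show that $\C{L}^\PP(f)(\vec{x}) = \C{L}^{\C{M}_\PP}(f)(\vec{x})$. All the needed data about $\C{M}_\PP$ (its state space $\IZ^r$, its transition probabilities $P$, and its reward function $rew$) are given explicitly by \cref{Translating Probabilistic Programs to DTMCs}.

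First I would handle the case $\vec{a}\bullet\vec{x} > b$. Here $rew(\vec{x})=1$ and $P(\vec{x},\vec{x}') = p_{\vec{x}'-\vec{x}}(\vec{x}) + \delta_{\vec{x}',\vec{d}}\cdot p'(\vec{x})$, so
\[
\C{L}^{\C{M}_\PP}(f)(\vec{x}) \;=\; 1 \,+\, \sum\nolimits_{\vec{x}'\in\IZ^r} p_{\vec{x}'-\vec{x}}(\vec{x}) \cdot f(\vec{x}') \,+\, p'(\vec{x})\cdot f(\vec{d}).
\]
Using the convention (adopted just before \cref{Translating Probabilistic Programs to DTMCs}) that $p_{\vec{c}}(\vec{x})=0$ for every $\vec{c}\notin\{\vec{c}_1,\ldots,\vec{c}_n\}$, the sum collapses to $\sum_{1\leq j\leq n} p_{\vec{c}_j}(\vec{x})\cdot f(\vec{x}+\vec{c}_j)$ via the substitution $\vec{x}' = \vec{x}+\vec{c}_j$. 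Comparing with \cref{expected_runtime_transformer} yields $\C{L}^{\C{M}_\PP}(f)(\vec{x}) = \C{L}^\PP(f)(\vec{x})$ in this case.

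Second, for the case $\vec{a}\bullet\vec{x}\leq b$, we have $rew(\vec{x})=0$ and $P(\vec{x},\vec{x}')=\delta_{\vec{x},\vec{x}'}$, so the sum $\sum_{\vec{x}'} P(\vec{x},\vec{x}')\cdot f(\vec{x}')$ trivially equals $f(\vec{x})$, matching the second branch of $\C{L}^\PP$'s case distinction. The only bookkeeping obstacle, such as it is, lies in justifying the substitution in the first case; this is just an appeal to the fact that the $\vec{c}_j$ are pairwise distinct (as required in \cref{def-program}), so that each nonzero summand $p_{\vec{x}'-\vec{x}}(\vec{x})\cdot f(\vec{x}')$ appears exactly once. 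No deeper measure-theoretic reasoning is needed; the corollary is a direct consequence of matching syntax between \cref{expected_runtime_transformer} and \cref{mapping_L} under the translation of \cref{Translating Probabilistic Programs to DTMCs}.
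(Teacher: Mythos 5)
Your proof is correct and follows essentially the same route as the paper's: fix $f$ and a state, split on the loop guard, and unfold the definitions of $rew$ and $P$ from \cref{Translating Probabilistic Programs to DTMCs}, using the convention that $p_{\vec{c}}(\vec{x})=0$ for $\vec{c}\notin\{\vec{c}_1,\ldots,\vec{c}_n\}$ to collapse the sum. Your explicit remark that the pairwise distinctness of the $\vec{c}_j$ justifies the reindexing is a small bookkeeping point the paper leaves implicit, but there is no substantive difference.
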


\begin{proof}
 Let $\PP$ be a program as in \cref{def-program} and let $\C{M}_\PP = (\IZ^{r}, P,
 rew)$. Consider an arbitrary function $f:\IZ^{r}
 \to \overline{\IR_{\geq_0}}$ and an $s \in \IZ^{r}$. If $\vec{a} \bullet s \leq b$
 then $rew(s)=0$, $P(s,s)=1$, and 
$P(s,s')=0$ for $s' \neq s$.
Hence 
  \[\begin{array}{rcl}\C{L}^{\C{M}_\PP}(f)(s)&=&rew(s)+\sum\limits_{\dashed{s}\in \C{S}}
  P(s,\dashed{s})\cdot f(\dashed{s})\\
  &=&rew(s)+f(s)\\
  &=& 0 + f(s)\\
  &=& f(s)\\
  &=&\C{L}^{\PP}(f)(s).
  \end{array}\]
  If $\vec{a} \bullet s > b$ then
  \[\begin{array}{rcl}
  \C{L}^{\C{M}_\PP}(f)(s)&=&rew(s)+\sum\limits_{\dashed{s}\in \C{S}}
  P(s,\dashed{s})\cdot f(\dashed{s})\\
  &=& 1 + \sum\limits_{\dashed{s}\in \C{S}} \left( p_{s'-s}(s) + \delta_{s', \vec{d}}
  \cdot p'(s) \right) \cdot f(s')\\
  &=& 1 +   \sum\limits_{1 \leq j \leq n}  p_{\vec{c}_j}(s) \cdot
  f(s+\vec{c}_j)  + p'(s) \cdot f(\vec{d})\\
    &=&\C{L}^{\PP}(f)(s). \hspace*{6.1cm} \qed \hspace*{-6.1cm}
  \end{array}
  \]
 \end{proof}

\noindent{}Now that we know that the transformers $\C{L}^\PP$ and $\C{L}^{\C{M}_\PP}$ are the same, we
can use existing results on DTMCs to obtain results for programs
$\PP$. More precisely, for any DTMC $\C{M} = (\C{S},P,rew)$ with only non-negative rewards,
it is known that the expected total reward function $tr^{\C{M}}: \C{S} \to
\overline{\IR_{\geq 0}}$ with $tr^{\C{M}}(s_0) = tr^{\C{M}}_{s_0}$ for any $s_0 \in \C{S}$ 
is a fixpoint of $\C{M}$'s
transformer $\C{L}^{\C{M}}$.

\begin{theorem}[Expected Total Reward is Fixpoint]\label{fp_mdps}
  Let $\C{M}$ be a DTMC with only non-negative rewards. Then
$tr^{\C{M}}$ is a fixpoint of $\C{L}^{\C{M}}$.
\end{theorem}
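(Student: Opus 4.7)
My plan is to prove the fixpoint property by a direct computation that unfolds one step of the process, using the Markov property of the DTMC together with the monotone convergence theorem.

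Fix $s \in \C{S}$ and start from the definition $tr^{\C{M}}(s) = \lim_{u \to \infty} \IE^{\C{M}}_s\!\left( \sum_{0 \le j \le u} rew(X^{\C{S}}_j) \right)$. Since $\IP^{\C{M}}_s(X^{\C{S}}_0 = s) = 1$ (by the definition of $pr^{\C{M}}_s$ via the Kronecker delta), the $j=0$ contribution equals $rew(s)$. Thus it suffices to establish
\begin{equation*}
  \lim_{u \to \infty} \IE^{\C{M}}_s\!\left( \sum_{1 \le j \le u} rew(X^{\C{S}}_j) \right) \;=\; \sum_{s' \in \C{S}} P(s,s') \cdot tr^{\C{M}}(s').
\end{equation*}

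For the left-hand side, I would condition on the value of $X^{\C{S}}_1$. By the construction of $\IP^{\C{M}}_s$ from $pr^{\C{M}}_s$ in \cref{stochastics_mdp}, the distribution of the shifted trajectory $(X^{\C{S}}_{j+1})_{j \ge 0}$ given $X^{\C{S}}_1 = s'$ coincides with the distribution of $(X^{\C{S}}_j)_{j \ge 0}$ under $\IP^{\C{M}}_{s'}$ (this is the Markov property, visible directly from the product form of $pr^{\C{M}}_s(\langle s_0, \ldots, s_j\rangle)$). Hence, by the law of total expectation applied to the partition $\{X^{\C{S}}_1 = s'\}_{s' \in \C{S}}$,
\begin{equation*}
  \IE^{\C{M}}_s\!\left( \sum_{1 \le j \le u} rew(X^{\C{S}}_j) \right) \;=\; \sum_{s' \in \C{S}} P(s,s') \cdot \IE^{\C{M}}_{s'}\!\left( \sum_{0 \le j \le u-1} rew(X^{\C{S}}_j) \right).
\end{equation*}

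The final step is to push the limit $u \to \infty$ inside the sum over $s'$. Since $rew \ge 0$, the inner expectations are non-decreasing in $u$, so by monotone convergence (applied to counting measure on $\C{S}$) we may exchange $\lim_u$ and $\sum_{s'}$, obtaining $\sum_{s'} P(s,s') \cdot tr^{\C{M}}(s')$. Combining with the $j=0$ term yields $tr^{\C{M}}(s) = rew(s) + \sum_{s'} P(s,s') \cdot tr^{\C{M}}(s') = \C{L}^{\C{M}}(tr^{\C{M}})(s)$, as required.

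The main obstacle is the justification of the limit/sum interchange and, a little more delicately, the formal appeal to the Markov property: the paper has defined $\IP^{\C{M}}_s$ only via cylinder sets, so to rigorously invoke the shift-invariance I would either cite it as a standard consequence of the cylinder-set construction for DTMCs (as in \cite{MDPsPuterman}) or verify it on cylinder sets and extend by the uniqueness of the probability measure. Non-negativity of the rewards makes the rest routine, because it allows unrestricted use of monotone convergence both for the truncation in $u$ and for the sum over $\C{S}$.
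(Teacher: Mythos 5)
Your proof is correct, but it is genuinely different from what the paper does: the paper gives no argument of its own for this theorem and simply defers to \cite[Thm.~7.1.3]{MDPsPuterman}, remarking only that the existence assumption required there (\cite[Assumption 7.1.1]{MDPsPuterman}) is guaranteed by non-negativity of the rewards. Your argument is the standard direct computation that presumably underlies that citation: split off the $j=0$ term using $\IP^{\C{M}}_s(X^{\C{S}}_0 = s) = 1$, condition on $X^{\C{S}}_1$ to obtain the one-step unfolding $\IE^{\C{M}}_s\bigl(\sum_{1 \le j \le u} rew(X^{\C{S}}_j)\bigr) = \sum_{s'} P(s,s')\cdot \IE^{\C{M}}_{s'}\bigl(\sum_{0 \le j \le u-1} rew(X^{\C{S}}_j)\bigr)$ via the Markov property, and then let $u \to \infty$. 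The two delicate points you flag are exactly the right ones and both go through: the shift-invariance is verified on cylinder sets from the product form of $pr^{\C{M}}_s$ in \cref{stochastics_mdp} and extends by uniqueness of the measure, and non-negativity of $rew$ licenses every interchange (the law of total expectation for $[0,\infty]$-valued variables, the countable sum over $\C{S}$, and the exchange of $\lim_u$ with $\sum_{s'}$ by monotone convergence; note that terms with $P(s,s')=0$ contribute nothing, so conditioning on null events causes no harm). What your version buys is a self-contained proof inside the paper's own cylinder-set formalism; what the paper's citation buys is brevity and an appeal to a result stated for general MDPs rather than just DTMCs.
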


\begin{proof}
  The proof can be found in \cite[Thm.\ 7.1.3]{MDPsPuterman}. Note that it requires the
  assumption that the expected total reward exists \cite[Assumption
    7.1.1]{MDPsPuterman} which is ensured by a non-negative reward function.
  \qed
\end{proof}

\noindent{}Moreover, the expected total reward function is smaller or equal than any other fixpoint of $\C{L}^{\C{M}}$
(and than every function $f$ which satisfies the inequality $f \geq \C{L}^{\C{M}}(f)$).

\begin{theorem}[Expected Total Reward is Smaller Than Other Fix-\linebreak points]\label{upper_bound_mdps}
  Let $\C{M}= (\C{S},P,rew)$ be a DTMC with only non-negative
  rewards and let there be a function $f:\C{S} \to\overline{\IR_{\geq_0}}$ such that $f \geq \C{L}^{\C{M}}(f)$. Then $f \geq tr^{\C{M}}$.
\end{theorem}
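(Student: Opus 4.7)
The plan is to prove $f \geq tr^{\C{M}}$ by first approximating the expected total reward from below by finite unfoldings of $\C{L}^{\C{M}}$, and then transferring the inequality through a straightforward induction.

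First I would establish monotonicity of $\C{L}^{\C{M}}$: for $g_1,g_2: \C{S} \to \overline{\IR_{\geq 0}}$ with $g_1 \leq g_2$, the definition of $\C{L}^{\C{M}}$ immediately gives $\C{L}^{\C{M}}(g_1) \leq \C{L}^{\C{M}}(g_2)$, since the reward term is the same and the transition probabilities are non-negative. Then by induction on $n$, I would show that
\[ (\C{L}^{\C{M}})^n(\0)(s_0) \;=\; \expecM{s_0}{\sum\nolimits_{0 \leq j < n} rew(X^{\C{S}}_j)} \quad \text{for all } s_0 \in \C{S}.\]
The base case $n=0$ is trivial since the empty sum is $0$ and $\0(s_0)=0$. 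For the induction step, one unfolds one application of $\C{L}^{\C{M}}$ and uses the law of total expectation, conditioning on the first transition from $s_0$, together with the Markov property of $\mathbf{X}^{\C{S}}$ under $\IP^{\C{M}}_{s_0}$.

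Next I would take the limit $n \to \infty$. Since $rew \geq 0$, the partial sums $\sum_{0 \leq j < n} rew(X^{\C{S}}_j)$ are monotonically increasing in $n$, so by the monotone convergence theorem
\[ \sup\nolimits_n \, (\C{L}^{\C{M}})^n(\0)(s_0) \;=\; \lim_{n \to \infty}
   \expecM{s_0}{\sum\nolimits_{0 \leq j < n} rew(X^{\C{S}}_j)} \;=\; tr^{\C{M}}_{s_0},\]
using \cref{Expected total reward}. Hence $\sup_n (\C{L}^{\C{M}})^n(\0) = tr^{\C{M}}$ pointwise.

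Finally I would carry out the induction that actually proves the claim: $f \geq (\C{L}^{\C{M}})^n(\0)$ for all $n \in \IN$. The base case $f \geq \0$ follows since $f$ takes values in $\overline{\IR_{\geq 0}}$. For the step, assuming $f \geq (\C{L}^{\C{M}})^n(\0)$, monotonicity of $\C{L}^{\C{M}}$ yields $\C{L}^{\C{M}}(f) \geq (\C{L}^{\C{M}})^{n+1}(\0)$, and combining with the hypothesis $f \geq \C{L}^{\C{M}}(f)$ gives $f \geq (\C{L}^{\C{M}})^{n+1}(\0)$. Taking the supremum over $n$ on both sides then yields $f \geq \sup_n (\C{L}^{\C{M}})^n(\0) = tr^{\C{M}}$, which is the desired inequality. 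The only delicate point I anticipate is the identification of $(\C{L}^{\C{M}})^n(\0)$ with the truncated expected reward; once that lemma is in place, everything else is routine monotone-iteration reasoning. This result is also stated in \cite{MDPsPuterman}, e.g.\ in the proof of Thm.~7.1.3 or Thm.~7.2.2.
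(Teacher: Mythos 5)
Your proposal is correct. The paper itself does not spell out an argument here: it simply cites Puterman \cite[Thm.~7.2.2]{MDPsPuterman} for the case where $f$ is finite-valued and then remarks that, because there is a unique strategy and the rewards are non-negative, the proof carries over to $f$ with values in $\overline{\IR_{\geq 0}}$. What you have written is essentially the standard monotone-iteration argument that underlies that reference, made self-contained: monotonicity of $\C{L}^{\C{M}}$, the identification $(\C{L}^{\C{M}})^n(\0)(s_0) = \expecM{s_0}{\sum_{0 \leq j < n} rew(X^{\C{S}}_j)}$, monotone convergence to get $\sup_n (\C{L}^{\C{M}})^n(\0) = tr^{\C{M}}$, and finally $f \geq \C{L}^{\C{M}}(f) \geq (\C{L}^{\C{M}})^{n+1}(\0)$ by induction. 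Two things your route buys over the paper's: (i) it handles $\infty$-valued $f$ directly, since monotonicity and the induction are valid in $\overline{\IR_{\geq 0}}$ with the convention $\infty \cdot 0 = 0$, rather than by an informal remark; (ii) it avoids any circularity with \cref{lfp_mdps} and \cref{continuity_of_ert}, because you derive $\sup_n (\C{L}^{\C{M}})^n(\0) = tr^{\C{M}}$ from the probabilistic definition rather than from Kleene's theorem applied to the least fixpoint. The one step you rightly flag as delicate --- the induction identifying the iterates with truncated expected rewards --- does require the Markov property of $\IP^{\C{M}}_{s_0}$ under conditioning on the first transition; in the paper's cylinder-set construction this follows from the product form of $pr^{\C{M}}_{s_0}$ in \cref{stochastics_mdp}, so the step is sound, but it is the one place where a fully formal write-up would need a short lemma.
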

\begin{proof}
  The proof of the finite case, i.e., $f(s)< \infty$ for all $s \in \C{S}$,
  can be found in \cite[Thm.\ 7.2.2]{MDPsPuterman}. Note that in our case there is a
  unique strategy (since we restrict ourselves to DTMCs)
  and we have only non-negative rewards. Therefore, the proof holds for functions $f$ that
  map to
  infinity as well.
  \qed
\end{proof}

\noindent{}\cref{fp_mdps,upper_bound_mdps} imply that the expected total reward function $tr^{\C{M}}$
is the least fixpoint
of the transformer $\C{L}^{\C{M}}$.

\begin{corollary}[Expected Total Reward is Least Fixpoint]\label{lfp_mdps}
  Let $\C{M}= (\C{S},P,rew)$ be a DTMC with only non-negative rewards. Then
 $tr^{\C{M}}$ is the least fixpoint of $\C{L}^{\C{M}}$, i.e., for
any $s_0 \in \C{S}$ we have $\lfp(\C{L}^{\C{M}})(s_0) = tr^{\C{M}}_{s_0}$.
\end{corollary}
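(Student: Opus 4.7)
The plan is to derive the corollary as an immediate consequence of \cref{fp_mdps} and \cref{upper_bound_mdps}, which have already done essentially all of the work. Since only non-negative rewards are assumed, $tr^{\C{M}}: \C{S} \to \overline{\IR_{\geq 0}}$ is well-defined as a map into the non-negative extended reals (this is the observation made right before \cref{Expected total reward}), so it is a legitimate candidate for a fixpoint of $\C{L}^{\C{M}}$.

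First, I would invoke \cref{fp_mdps} to conclude that $tr^{\C{M}}$ is a fixpoint of $\C{L}^{\C{M}}$, i.e.\ $\C{L}^{\C{M}}(tr^{\C{M}}) = tr^{\C{M}}$. Next, for any other fixpoint $g: \C{S} \to \overline{\IR_{\geq 0}}$ of $\C{L}^{\C{M}}$, the fixpoint equation $g = \C{L}^{\C{M}}(g)$ trivially implies the inequality $g \geq \C{L}^{\C{M}}(g)$. Hence \cref{upper_bound_mdps} applies and yields $g \geq tr^{\C{M}}$ pointwise. Combining these two facts, $tr^{\C{M}}$ is a fixpoint that is dominated by every other fixpoint, so it is the least fixpoint; evaluating at an arbitrary $s_0 \in \C{S}$ gives $\lfp(\C{L}^{\C{M}})(s_0) = tr^{\C{M}}(s_0) = tr^{\C{M}}_{s_0}$, as claimed.

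There is no real obstacle here: the corollary is essentially just packaging the two preceding theorems. The only subtle point worth checking is that \cref{upper_bound_mdps} is stated for arbitrary $f: \C{S} \to \overline{\IR_{\geq 0}}$ (possibly taking the value $\infty$), so it applies uniformly to all fixpoints without needing any finiteness assumption; this is exactly the strengthening noted in the proof of \cref{upper_bound_mdps}, and it is what lets us compare $tr^{\C{M}}$ with fixpoints that may themselves be infinite at some states.
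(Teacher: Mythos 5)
Your proof is correct and matches the paper exactly: the corollary is obtained by combining \cref{fp_mdps} (which shows $tr^{\C{M}}$ is a fixpoint) with \cref{upper_bound_mdps} applied to any other fixpoint $g$, using that $g = \C{L}^{\C{M}}(g)$ implies $g \geq \C{L}^{\C{M}}(g)$. Your remark about \cref{upper_bound_mdps} holding for functions that may take the value $\infty$ is precisely the point the paper flags in its proof of that theorem.
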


\noindent{}The following theorem shows that $\C{L}^{\C{M}}$ is continuous for any DTMC $\C{M}$ with
 only non-negative rewards. This is needed to apply
Kleene's Fixpoint Theorem, i.e., to show that the least fixpoint of  $\C{L}^{\C{M}}$ is $\sup
\{ \0, \C{L}^{\C{M}}(\0), (\C{L}^{\C{M}})^2(\0), \ldots \}$.
\begin{theorem}[Continuity of $\C{L}^{\C{M}}$, cf.\  \protect{\cite[Lemma 7.1.5.c]{MDPsPuterman}}]\label{continuity_of_ert}
  Let $\C{M}$ be a  DTMC with only non-negative
  rewards. Then $\C{L}^{\C{M}}$ is continuous.
\end{theorem}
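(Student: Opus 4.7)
The plan is to verify continuity of $\C{L}^{\C{M}}$ directly from its definition by invoking monotone convergence. Take any pointwise non-decreasing chain $f_0 \leq f_1 \leq \cdots$ in $\C{S} \to \overline{\IR_{\geq 0}}$ with pointwise supremum $f = \sup_n f_n$; the goal is to show $\C{L}^{\C{M}}(f) = \sup_n \C{L}^{\C{M}}(f_n)$.

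The easy direction is monotonicity. Since $P(s,s') \geq 0$ for all $s,s' \in \C{S}$ and $rew(s)$ does not depend on the argument function, $g \leq h$ implies $\C{L}^{\C{M}}(g) \leq \C{L}^{\C{M}}(h)$ pointwise. Hence $\C{L}^{\C{M}}(f_n) \leq \C{L}^{\C{M}}(f)$ for every $n$, and therefore $\sup_n \C{L}^{\C{M}}(f_n) \leq \C{L}^{\C{M}}(f)$.

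For the reverse inequality I need to swap a supremum with a sum, i.e.\ to establish
\[
rew(s) + \sum_{s' \in \C{S}} P(s,s') \cdot \sup_n f_n(s') \; = \; \sup_n \left( rew(s) + \sum_{s' \in \C{S}} P(s,s') \cdot f_n(s') \right)
\]
for every $s \in \C{S}$. Since $\C{S}$ is countable and each sequence $n \mapsto P(s,s') \cdot f_n(s')$ is non-negative and non-decreasing, the monotone convergence theorem (Beppo Levi), applied to the measure on $\C{S}$ that assigns mass $P(s,s')$ to $s'$, delivers precisely this interchange; the constant summand $rew(s)$ commutes trivially with the supremum over a non-decreasing sequence.

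The only real step is that interchange, and it is not actually an obstacle: the hypotheses of monotone convergence — non-negativity of $rew$, of $P$, and of every $f_n$, together with monotonicity of the chain $(f_n)$ — are exactly the assumptions built into the statement. Thus the whole argument is essentially a one-line appeal to Beppo Levi, sandwiched between a trivial monotonicity observation and the definition of $\C{L}^{\C{M}}$.
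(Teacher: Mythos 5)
Your proof is correct and follows essentially the same route as the paper's: a direct computation from the definition of $\C{L}^{\C{M}}$ whose only substantive step is interchanging the supremum of the chain with the sum over successor states. If anything, your appeal to monotone convergence (Beppo Levi) for that interchange is more careful than the paper's brief justification (``as all operations are linear''), and your explicit split into the trivial monotonicity direction and the interchange direction makes the argument cleaner without changing its substance.
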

\begin{proof}
  Let $S=\{f_0, f_1, \ldots\}$ be a chain in
  $\C{S}\to \overline{\IR_{\geq_0}}$,
  i.e.,
we have $f_j \leq f_{j+1}$   
for all $j \in \IN$.
Then $(\sup S)$ is the function $(\sup S): \C{S} \to\overline{\IR_{\geq_0}}$ with
$(\sup S)(s) = \sup_{j \in \IN} \left(f_j(s)\right)$ for all $s \in \C{S}$. Therefore,
for any $s \in \C{S}$ we have
  \begin{align*}
    \C{L}^{\C{M}}(\sup S)(s)&= rew(s)+\sum\limits_{\dashed{s}\in \C{S}} P(s,\dashed{s})\cdot (\sup S)(\dashed{s})\\ 
    &= rew(s)+\sum\limits_{\dashed{s}\in \C{S}} P(s,\dashed{s})\cdot \sup_{j \in \IN} \left(f_j(\dashed{s})\right)\\
    &= \sup_{j \in \IN} \left( rew(s)+\sum\limits_{\dashed{s}\in \C{S}}
   P(s,\dashed{s})\cdot  f_j(\dashed{s})\right) &\hspace*{-.3cm}\text{as all operations are linear}\\
    &=\sup_{j \in \IN}\left(\C{L}^{\C{M}}(f_j)(s)\right)\\
    &=(\sup \C{L}^{\C{M}}(S))(s),
  \end{align*}
  where $\C{L}^{\C{M}}(S) = \{\C{L}^{\C{M}}(f_0),  \C{L}^{\C{M}}(f_1),   \ldots \}$.
  \qed
\end{proof}

\noindent{}Now we can prove \cref{correctness_of_ert} which states that the expected runtime of a program $\PP$
is the least fixpoint of its expected runtime transformer $\C{L}^\PP$ and that it can be
obtained as the supremum of  $\{ \0, \C{L}^\PP(\0),  (\C{L}^\PP)^2(\0), \ldots
\}$.
As usual, a function $f: \IZ^{r} \to \overline{\IR_{\geq 0}}$ is a \emph{fixpoint} of
$\C{L}^\PP$ if $\C{L}^\PP(f) = f$.   Such a fixpoint $f$ is the  
\emph{least} fixpoint of $\C{L}^\PP$ (i.e., $f = \lfp(\C{L}^\PP)$) if  $f \leq
g$
for any other fixpoint $g$ of $\C{L}^\PP$.

\theoremcorrectnessert*
\begin{proof}
  By \cref{etr_is_ert},
the expected runtime of $\PP$ is the same as the expected total reward 
 of the corresponding
 DTMC $\C{M}_P$. 
\cref{lfp_mdps} showed that
the expected total reward is the least
fixpoint of the  transformer $\C{L}^{\C{M}_\PP}$,
 and
 $\C{L}^{\C{M}_\PP}$ is the same as the expected runtime transformer $\C{L}^\PP$
 due to \cref{ert_reward}. 

\noindent{}As the continuity of $\C{L}^{\C{M}_P} = \C{L}^P$ was shown in
 \cref{continuity_of_ert}, by Kleene's Fixpoint Theorem we have 
 $\lfp(\C{L}^P) = \sup \{ \0, \C{L}^P(\0), (\C{L}^P)^2(\0), \ldots\}$.
 \qed
 \end{proof}

\section{Proofs for \cref{BoundsSection}}\label{BoundsSectionApp}
\theoremPAST*
\begin{proof}
The expected runtime transformer  $\C{L}^\PP$ is continuous (and thus, monotonic) by
\cref{correctness_of_ert}. Hence,  by
induction on $j$ one can show that
$f \geq \C{L}^\PP(f)$ implies $f \geq (\C{L}^\PP)^j(\0)$
for any function $f: \IZ^{r} \to \overline{\IR_{\geq 0}}$ and
any $j \in \IN$.
So  $f \geq \C{L}^\PP(f)$ implies
$f \geq \sup \{ \0, \C{L}^\PP(\0),  (\C{L}^\PP)^2(\0), \ldots
      \} = \lfp(\C{L}^\PP)$. By \cref{correctness_of_ert}, this means that $f(\startvec{x}) \geq
      \lfp(\C{L}^\PP)(\startvec{x})=rt^\PP_{\startvec{x}}$ for all $\startvec{x} \in \IZ^{r}$.

Hence, to prove \cref{PAST}, it suffices to show $f \geq \C{L}^\PP(f)$ for the
function $f: \IZ^{r} \to \overline{\IR_{\geq 0}}$ with $f(\vec{x}) = \tfrac{1}{p'}$ if $\vec{a} \bullet \vec{x} > b$ and
$f(\vec{x}) =0$ if $\vec{a} \bullet \vec{x} \leq b$.

For $\vec{x}$ with $\vec{a}\bullet\vec{x} \leq b$, we have
$\C{L}^\PP(f)(\vec{x})=f(\vec{x})$. If $\vec{a}\bullet\vec{x} > b$, then we get   
  \begin{align*}
    \C{L}^\PP(f)(\vec{x}) &=\sum_{1 \leq j \leq n} p_{\vec{c}_j}(\vec{x}) \cdot f(\vec{x}+\vec{c}_j)+\dashed{p}(\vec{x})\cdot f(\vec{d}) + 1\\
                  &\leq \sum_{1 \leq j \leq n} p_{\vec{c}_j}(\vec{x}) \cdot
    \tfrac{1}{\dashed{p}}+\dashed{p}(\vec{x})\cdot 0 + 1\\ 
                   &=\tfrac{1}{\dashed{p}}\cdot\sum_{1 \leq j \leq n} p_{\vec{c}_j}(\vec{x}) + 1\\
    &= \tfrac{1-\dashed{p}(\vec{x})}{\dashed{p}} +1\quad
    \leq \quad \tfrac{1-\dashed{p}}{\dashed{p}} +1 \quad      
                  = \quad \tfrac{1}{\dashed{p}} \quad = \quad f(\vec{x}) \hspace*{.9cm} \qed \hspace*{-.9cm}
  \end{align*}
\end{proof}

\section{Proofs for \cref{sec:random_walk_programs}}\label{app:random_walk_programs}
In this section we present the proofs of \cref{sec:random_walk_programs}. It is divided into three subsections in which we will give the proofs for the respective subsections of \cref{sec:random_walk_programs}.
\subsection{Proofs for \cref{sec:Restriction_to_Random_Walk_Programs}}\label{app:Restriction_to_Random_Walk_Programs}

To prove \cref{transformation_preserves_behavior}, we need an auxiliary lemma.

\begin{lemma}[Connections between $\PP$ and $\PP^\unisup$]\label{preparation}
  Let $\PP$ be a CP program  as in \cref{def-program} and let
  $\uni_{\PP}^{\omega}$ be the function which applies $\uni_\PP$ 
  componentwise to runs. Then we have:
  
  \begin{enumerate}[(a)]
    \item $T^{\PP^\unisup} \circ \uni_{\PP}^{\omega} = T^{\PP}$
    \item
    Let 
    $\startvec{x} \in \IZ^{r}$.
    Then for any prefix run $\run{y_0, \ldots, y_j} \in \IZ^{j+1}$ we have: 
    \[\IP^{\PP^\unisup}_{\uni_{\PP}(\startvec{x})}(Cyl^\IZ(\run{y_0,\ldots,y_j})=\IP^\PP_{\startvec{x}}((\uni_{\PP}^{\omega})^{-1}(Cyl^\IZ(\run{y_0,\ldots,y_j}))).\]
    Here, for any $M \subseteq \IZ^\omega$ we have $(\uni_{\PP}^{\omega})^{-1}(M) = \{ \pi \in
    (\IZ^{r})^\omega \mid \uni_{\PP}^{\omega}(\pi) \in M \}$.
  \end{enumerate}
\end{lemma}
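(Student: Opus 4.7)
The plan is to prove the two parts separately, first tackling (a) by a direct definitional unfolding, and then proving (b) by induction on the length $j$ of the prefix run, using the key transition-probability identity $p^\unisup_{m} = \sum_{\vec{c}_u\,:\,\vec{a}\bullet\vec{c}_u = m} p_{\vec{c}_u}$ that is built into Def.~\hyperlink{Univariate Transformation}{\arabic{Univariate Transformation}}. Throughout I will exploit the critical fact that for any $\vec{z}\in\IZ^r$, we have $\vec{a}\bullet\vec{z} > b \iff \uni_\PP(\vec{z}) > 0$, so the loop guards of $\PP$ and $\PP^\unisup$ agree under the translation $\uni_\PP$.

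For part (a), consider a run $\pi = \langle \vec{z}_0,\vec{z}_1,\ldots\rangle \in (\IZ^r)^\omega$, so that $\uni_\PP^\omega(\pi) = \langle \uni_\PP(\vec{z}_0),\uni_\PP(\vec{z}_1),\ldots\rangle$. By \cref{def:termination_time}, $T^\PP(\pi)$ is the least $j$ with $\vec{a}\bullet\vec{z}_j \leq b$ and $T^{\PP^\unisup}(\uni_\PP^\omega(\pi))$ is the least $j$ with $\uni_\PP(\vec{z}_j) \leq 0$. These conditions are equivalent, so the two infima (in $\overline{\IN}$) coincide.

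For part (b), I proceed by induction on $j$. In the base case $j=0$, the cylinder set $Cyl^\IZ(\langle y_0\rangle)$ has $\PP^\unisup$-probability $\delta_{\uni_\PP(\startvec{x}),y_0}$, while the preimage under $\uni_\PP^\omega$ is $\{\pi : \uni_\PP(X_0^{\IZ^r}(\pi)) = y_0\}$, which has $\PP$-probability $\delta_{\uni_\PP(\startvec{x}),y_0}$ since $X_0^{\IZ^r} = \startvec{x}$ almost surely under $\IP^\PP_{\startvec{x}}$. For the inductive step, I write the RHS as $\sum_{\vec{z}_1,\ldots,\vec{z}_j\,:\,\uni_\PP(\vec{z}_i)=y_i} pr^\PP_{\startvec{x}}(\langle\startvec{x},\vec{z}_1,\ldots,\vec{z}_j\rangle)$, unfold the last factor using \cref{Run of a Program}, and split into two cases: if $y_{j-1} > 0$, the inner sum over $\vec{z}_j$ with $\uni_\PP(\vec{z}_j)=y_j$ of $p_{\vec{z}_j-\vec{z}_{j-1}} + \delta_{\vec{z}_j,\vec{d}}\cdot p'$ simplifies by a change of variable $\vec{c} = \vec{z}_j - \vec{z}_{j-1}$ to $\sum_{\vec{c}\,:\,\vec{a}\bullet\vec{c} = y_j - y_{j-1}} p_{\vec{c}} + \delta_{y_j,\uni_\PP(\vec{d})}\cdot p' = p^\unisup_{y_j-y_{j-1}} + \delta_{y_j,\uni_\PP(\vec{d})}\cdot p'$, which is independent of $\vec{z}_{j-1}$; if $y_{j-1} \leq 0$, the analogous sum of $\delta_{\vec{z}_{j-1},\vec{z}_j}$ over $\vec{z}_j$ with $\uni_\PP(\vec{z}_j) = y_j$ equals $\delta_{y_{j-1},y_j}$. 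In either case the remaining factor depends only on $y_0,\ldots,y_{j-1}$, so it pulls out and the remaining sum equals $\IP^\PP_{\startvec{x}}((\uni_\PP^\omega)^{-1}(Cyl^\IZ(\langle y_0,\ldots,y_{j-1}\rangle)))$, which by the inductive hypothesis equals $pr^{\PP^\unisup}_{y_0}(\langle y_0,\ldots,y_{j-1}\rangle)$. Matching this with the recursive definition of $pr^{\PP^\unisup}$ yields the claim.

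The main obstacle is bookkeeping around the restart component $\delta_{\vec{z}_j,\vec{d}}\cdot p'(\vec{z}_{j-1})$: several distinct $\vec{z}_j$ in $\IZ^r$ can share the image $y_j = \uni_\PP(\vec{z}_j)$, and one must check that when summing over such $\vec{z}_j$ only $\vec{z}_j = \vec{d}$ activates the delta, so the restart term contributes exactly $\delta_{y_j,\uni_\PP(\vec{d})}\cdot p'$ and does not get overcounted. A secondary (minor) point is that one must also verify measurability of $(\uni_\PP^\omega)^{-1}(Cyl^\IZ(\langle y_0,\ldots,y_j\rangle))$ in $\F{F}^{\IZ^r}$, which follows since this preimage is a finite union of cylinders $Cyl^{\IZ^r}(\langle\vec{z}_0,\ldots,\vec{z}_j\rangle)$ with $\vec{z}_0 = \startvec{x}$... but in fact one must sum over all $\vec{z}_i \in \uni_\PP^{-1}(\{y_i\})$, which is a countable set, and hence the preimage is a countable disjoint union of cylinders, remaining in $\F{F}^{\IZ^r}$.
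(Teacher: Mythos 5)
Your proof is correct and follows essentially the same route as the paper's: part (a) is the same guard-equivalence observation, and part (b) rests on the same decomposition of $(\uni_{\PP}^{\omega})^{-1}(Cyl^\IZ(\run{y_0,\ldots,y_j}))$ into a countable disjoint union of cylinders together with the identity $\sum_{\vec{c}\,:\,\vec{a}\bullet\vec{c}=y_u-y_{u-1}}p_{\vec{c}}+\delta_{y_u,\uni_\PP(\vec{d})}\cdot p' = p^\unisup_{y_u-y_{u-1}}+\delta_{y_u,\uni_\PP(\vec{d})}\cdot p'$. Organizing the computation as an induction on $j$ rather than the paper's one-shot factorization of the sum of products is only a presentational difference (and your explicit treatment of the $y_{j-1}\leq 0$ case is, if anything, slightly more careful than the paper's displayed product).
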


\begin{proof}
  \begin{enumerate}[(a)]
  \item Let
    $\run{\vec{z}_0,\vec{z_1},\ldots} \in (\IZ^{r})^{\omega}$ such
    that $T^{\PP}(\run{\vec{z}_0,\vec{z_1},\ldots}) 
    =j \in \overline{\IN}$. So if $j \in \IN$, then
    $\uni_{\PP}(\vec{z}_0), \ldots, \uni_{\PP}(\vec{z}_{j-1})>0$ and
    $\uni_{\PP}(\vec{z}_j)\leq 0$.
Similarly, if $j = \infty$, then  $\uni_{\PP}(\vec{z}_j)>0$ for
    every $j \in \IN$. So in both cases, we have
    \[\begin{array}{rcl} j\; =\;
  T^{\PP^\unisup}(\run{\uni_{\PP}(\vec{z}_0),\uni_{\PP}(\vec{z}_1),\ldots})
  &= &T^{\PP^\unisup}(\uni_{\PP}^{\omega}(\run{\vec{z}_0,\vec{z_1},\ldots}))\\
  &=&(T^{\PP^\unisup}\circ 
    \uni_{\PP}^{\omega})(\run{\vec{z}_0,\vec{z_1},\ldots}).
    \end{array}\] 

    \item
First note that 
 for any prefix run $\run{y_0, \ldots, y_j} \in \IZ^{j+1}$, we have
 \begin{equation}
   \label{claim b}
     \hspace*{-.2cm}    (\uni_{\PP}^{\omega})^{-1}(Cyl^\IZ(\run{y_0,\ldots,y_j})) \; =\hspace*{-1cm} \biguplus
\limits_{\vec{z_0},\ldots,\vec{z_j}\in \IZ^{r} \text{ such that}\atop
  \uni_{\PP}(\vec{z}_0)=y_0, \ldots, \uni_{\PP}(\vec{z}_j)=y_j} \hspace*{-1.2cm}
Cyl^{\IZ^{r}}(\run{\vec{z}_0,\ldots,\vec{z}_j}).
\end{equation}
As usual, ``$\uplus$'' denotes the disjoint union, i.e., we have
$Cyl^{\IZ^{r}}(\pi) \cap Cyl^{\IZ^{r}}(\pi')\linebreak 
 = \emptyset$ for prefix runs $\pi \neq \pi'$ of the same length. 
Note that both sides of the
 equality \cref{claim b} can be empty, i.e.,
there might not be any $\vec{z}_u$ with $\uni_\PP(\vec{z}_u) = y_u$ for some $1 \leq u
\leq j$.   For $\start{x} = \uni_\PP(\startvec{x})$,
we prove that
    \[\IP^{\PP^\unisup}_{\start{x}}(Cyl^\IZ(\run{y_0,\ldots,y_j}) \;= \; \IP^\PP_{\startvec{x}}\left(\biguplus\limits_{
      \vec{z}_0,\ldots,\vec{z}_j
        \in \IZ^{r} \text{ such that}\atop
 \uni_{\PP}(\vec{z}_0)=y_0, \ldots, \uni_{\PP}(\vec{z}_j)=y_j}  \hspace*{-1.2cm}
Cyl^{\IZ^{r}}(\run{\vec{z}_0,\ldots,\vec{z}_j})\right).\] The result then follows by
\cref{claim b}.
For the left-hand side we get
$\IP^{\PP^\unisup}_{\start{x}}(Cyl^\IZ(\langle y_0,\linebreak
\ldots,y_j\rangle) = 0$ if $y_0 \neq \start{x}$ and
otherwise, we have
\[\begin{array}{rcl}
\IP^{\PP^\unisup}_{\start{x}}(Cyl^\IZ(\run{y_0,\ldots,y_j}) &=&  \prod\limits_{1 \leq u \leq
  j} (p^\unisup_{y_{u}-y_{u-1}}+\delta_{y_u,\uni_{\PP}(\vec{d})}\cdot p')\\
&=&\prod\limits_{1 \leq u \leq j}\left(\sum\limits_{1 \leq v \leq
n, \; \vec{a} \bullet \vec{c}_v = y_{u}-y_{u-1}} \hspace*{-.7cm}
p_{\vec{c}_t}+\delta_{y_u,\uni_{\PP}(\vec{d})}\cdot p'\right). 
\end{array}\]

For the right-hand side recall that $\uni_\PP(\startvec{x}) = \start{x}$ and that we only regard
tuples $\vec{z}_0$ where  $\uni_\PP(\vec{z}_0) = y_0$. So if
$y_0 \neq \start{x}$, then
all of these tuples $\vec{z}_0$ are different from $\startvec{x}$. Hence, then the right-hand
side is also 0. Otherwise, we have the following, where $d_{\PP}=\uni_{\PP}(\vec{d})$:
\[\begin{array}{rl}
&\IP^\PP_{\startvec{x}}\left(\biguplus\limits_{\vec{z}_0,\ldots,\vec{z}_j
        \in \IZ^{r} \text{ such that }
  \uni_{\PP}(\vec{z}_0)=y_0, \ldots, \uni_{\PP}(\vec{z}_j)=y_j}
Cyl^{\IZ^{r}}(\run{\vec{z}_0,\ldots,\vec{z}_j})\right)\\
=&\IP^\PP_{\startvec{x}}\left(\biguplus\limits_{\vec{z}_1,\ldots,\vec{z}_j
        \in \IZ^{r} \text{ such that }
  \uni_{\PP}(\vec{z}_1)=y_1, \ldots, \uni_{\PP}(\vec{z}_j)=y_j} \hspace*{-.5cm}
Cyl^{\IZ^{r}}(\run{\startvec{x},\vec{z}_1,\ldots,\vec{z}_j})\right)\\
=&\sum\limits_{\vec{z}_1,\ldots,\vec{z}_j
        \in \IZ^{r} \text{ such that }
  \uni_{\PP}(\vec{z}_1)=y_1, \ldots, \uni_{\PP}(\vec{z}_j)=y_j} \hspace*{-.5cm}
\IP^\PP_{\startvec{x}}\left(Cyl^{\IZ^{r}}(\run{\startvec{x},\vec{z}_1,\ldots,\vec{z}_j})\right)\\
=& \sum\limits_{\vec{z}_1,\ldots,\vec{z}_j
        \in \IZ^{r} \text{ such that}\atop
        \uni_{\PP}(\vec{z}_1)=y_1, \ldots, \uni_{\PP}(\vec{z}_j)=y_j} \hspace*{-.5cm}
(p_{\vec{z}_1 - \startvec{x}} + \delta_{\vec{z}_1,\vec{d}}\cdot p') \cdot \prod\limits_{2 \leq u \leq j} (p_{\vec{z}_u-\vec{z}_{u-1}} + \delta_{\vec{z}_u,\vec{d}}\cdot p')\\
=& \hspace*{-.5cm}\sum\limits_{\vec{c}_{v_1},\ldots,\vec{c}_{v_j} \in
  \{\vec{c}_1,\ldots,\vec{c}_n\}  \text{ such that }
\uni_\PP(\startvec{x} + \vec{c}_{v_1}) = y_1,\atop
\uni_\PP(\startvec{x} + \vec{c}_{v_1}+ \vec{c}_{v_2}) = y_2,  \ldots,
\uni_\PP(\startvec{x} + \vec{c}_{v_1}+ \ldots + \vec{c}_{v_j}) = y_j} \hspace*{-1.5cm}
(p_{\vec{c}_{v_1}}+\delta_{y_1,d_{\PP}}\cdot p')\; \cdot \; \ldots \; \cdot\; (p_{\vec{c}_{v_j}}+\delta_{y_j,d_\PP}\cdot p')\\
\stackrel{(\dagger)}{=}&\hspace*{-.5cm}\sum\limits_{\vec{c}_{v_1},\ldots,\vec{c}_{v_j} \in
  \{\vec{c}_1,\ldots,\vec{c}_n\}  \text{ such that }
  y_1 - y_0 = \vec{a} \bullet \vec{c}_{v_1}, \ldots,
 y_j - y_{j-1} = \vec{a} \bullet \vec{c}_{v_j}} \hspace*{-3.5cm}
(p_{\vec{c}_{v_1}}+\delta_{y_1,d_{\PP}}\cdot p')\; \cdot \; \ldots \; \cdot\; (p_{\vec{c}_{v_j}}+\delta_{y_j,d_{\PP}}\cdot p')\\
=& (\sum\limits_{\hspace*{-.5cm}\vec{c}  \in
  \{\vec{c}_1,\ldots,\vec{c}_n\}  \text{ such that }
  y_1 - y_0 = \vec{a} \bullet \vec{c}} \hspace*{-1.5cm}
p_{\vec{c}}+\delta_{y_1,d_{\PP}}\cdot p') \; \cdot \; \ldots \; \cdot \;
(\hspace*{-.3cm}\sum\limits_{\hspace*{-.5cm}\vec{c}  \in
  \{\vec{c}_1,\ldots,\vec{c}_n\}  \text{ such that }
  y_j - y_{j-1} = \vec{a} \bullet\vec{c}} \hspace*{-1.5cm} p_{\vec{c}}+\delta_{y_j,d_{\PP}}\cdot p')\\
=&\prod\limits_{1 \leq u \leq j}\left(\sum\limits_{1 \leq t \leq n, \; \vec{a} \bullet \vec{c}_t
  = y_{u}-y_{u-1}} p_{\vec{c}_t}+\delta_{y_u,d_{\PP}}\cdot p'\right).
\end{array}\]

For Equation $(\dagger)$, note that  $\uni_\PP(\startvec{x} + \vec{c}_{v_1})
= \vec{a} \bullet (\startvec{x} + \vec{c}_{v_1})
 - b = \vec{a} \bullet \startvec{x}  + \vec{a} \bullet \vec{c}_{v_1}
- b =  \uni_\PP(\startvec{x}) + \vec{a} \bullet \vec{c}_{v_1} = y_0+ \vec{a} \bullet \vec{c}_{v_1}$.
Hence,
$\uni_\PP(\startvec{x} + \vec{c}_{v_1}) = y_1$ means that
$y_1 - y_0 = \vec{a} \bullet \vec{c}_{v_1}$. Similarly, $\uni_\PP(\startvec{x}
+ \vec{c}_{v_1}+ \vec{c}_{v_2}) =  y_0+ \vec{a} \bullet \vec{c}_{v_1}
 + \vec{a} \bullet  \vec{c}_{v_2} = y_1 +  \vec{a} \bullet \vec{c}_{v_2}$. So
 $\uni_\PP(\startvec{x} + \vec{c}_{v_1}+ \vec{c}_{v_2}) = y_2$ means that $y_2 - 
y_1 = \vec{a} \bullet \vec{c}_{v_2}$, etc. \qed
  \end{enumerate}
\end{proof}
\thmtransformationpreservesbehavior*

\begin{proof}
   For any $j \in \IN$ and any $\startvec{x} \in \IZ^{r}$ we obtain the following.
  \[\begin{array}{rl@{\hspace*{-.1cm}}l}
  &\IP_{\startvec{x}}^{\PP}(T^{\PP} = j) \hspace*{7cm}\\
  =& \IP_{\startvec{x}}^{\PP}(T^{\PP^\unisup} \circ
  \uni_{\PP}^{\omega}=j) & \text{by \cref{preparation} (a)}\\ 
                                                               \end{array}\]
  \pagebreak
  \[\begin{array}{rl@{\hspace*{-.1cm}}l}
      =& \IP_{\startvec{x}}^{\PP}\left((\uni_{\PP}^{\omega})^{-1}\left((T^{\PP^\unisup})^{-1}\left(\{j\}\right)\right)\right)\\
                                  =&
                                  \IP_{\startvec{x}}^{\PP}\left((\uni_{\PP}^{\omega})^{-1}\left(\biguplus\limits_{y_0,\ldots,y_{j-1}\in
                                    \IZ_{>0}, y_j \in \IZ_{\leq
                                      0}}Cyl^\IZ(\run{y_0,\ldots,y_j})\right)\right)\\
                                  =& \IP_{\startvec{x}}^{\PP}\left(\biguplus\limits_{y_0,\ldots,y_{j-1}\in \IZ_{>0}, y_j \in \IZ_{\leq 0}}(\uni_{\PP}^{\omega})^{-1}\left(Cyl^\IZ(\run{y_0,\ldots,y_j})\right)\right)\\
                                  =& \sum\limits_{y_0,\ldots,y_{j-1}\in \IZ_{>0}, y_j \in
                                    \IZ_{\leq
                                      0}}\IP_{\startvec{x}}^{\PP}\left((\uni_{\PP}^{\omega})^{-1}\left(Cyl^\IZ(\run{y_0,\ldots,y_j}\right)\right)\\
                                  =& \sum\limits_{y_0,\ldots,y_{j-1}\in \IZ_{>0}, y_j \in
      \IZ_{\leq 0}}\IP^{\PP^\unisup}_{\uni_{\PP}(\startvec{x})}\left(Cyl^\IZ(\run{y_0,\ldots,y_j})\right)
                                 & \text{by \cref{preparation} (b)}\\  
                                  =&
    \IP^{\PP^\unisup}_{\uni_{\PP}(\startvec{x})}\left(\biguplus\limits_{y_0,\ldots,y_{j-1}\in \IZ_{>0}, y_j
      \in \IZ_{\leq 0}}Cyl^\IZ(\run{y_0,\ldots,y_j})\right)\\
       =&
 \IP^{\PP^\unisup}_{\uni_{\PP}(\startvec{x})}\left(T^{\PP^\unisup} = j\right). 
  \end{array}\]
  As the above equality holds for every $j\in \IN$ it also holds for $j = \infty$. \qed
\end{proof}

\subsection{Proofs for \cref{sec:decidability_of_termination}}\label{app:decidability_of_termination}
\noindent{}For the proof of \cref{termination_decidable}, we use results on random walks
\cite{probabilityGrimmett,randomWalkSpitzer,feller50}.
We first recapitulate the required notions from probability
theory.

Consider a probability space $(\Omega, \F{F}, \IP)$ (i.e., for every $A \in \F{F}
\subseteq 2^\Omega$, $\IP(A)$ is the probability
that an event from the set $\Omega$ is in the subset $A$) and a  stochastic process
 $\mathbf{Y}=(Y_j)_{j \in \IN}$ where each $Y_j:\Omega \to \IZ$
is a random variable. $\mathbf{Y}$ is \emph{independent and identically distributed}
\emph{(i.i.d.)} on 
$(\Omega, \F{F}, \IP)$ if for all
$j,j' \in \IN$ with $j \neq j'$ and all   $y,z \in \IZ$:
  \begin{itemize}
  \item[$\bullet$] $Y_j$ and $Y_{\dashed{j}}$ are identically distributed, i.e.,
    $\IP(Y_{j} = z) = \IP(Y_{j'} = z)$ 
  \item[$\bullet$] $Y_j$ and $Y_{\dashed{j}}$ are independent, i.e., 
    $\IP(Y_{j} = y, Y_{j'} = z) = \IP(Y_{j} = y)\cdot\IP(Y_{j'} = z)$
     \end{itemize}
  Here, $\IP(Y_{j} = y, Y_{j'} = z) = \IP(Y_j^{-1}(\{y\}) \cap
  Y_{j'}^{-1}(\{z\}))$ is the probability that an event $\pi \in \Omega$ satisfies
  both $Y_j(\pi) = y$ and $Y_{j'}(\pi) = z$.
  So independence means that 
  one random variable does not influence the value
of the other.

Now we recapitulate the notion of a random walk created by an i.i.d.\ stochastic process.
\begin{definition}[Random Walk\protect{%
    \cite{probabilityGrimmett}}]\label{def_random_walk}
  Let $\mathbf{Y}=(Y_j)_{j \in \IN}$ be an i.i.d.\ stochastic process  for a probability space $(\Omega, \F{F}, \IP)$ 
  with $Y_j: \Omega \to \IZ$
  and let $X_0: \Omega \to \IZ$ be a random variable
   such that $\IP(X_0=\start{x})=1$ for some $\start{x}\in \IZ$.
  The (one-dimensional)
  \emph{random walk for} $(\Omega, \F{F}, \IP)$ induced by $\mathbf{Y}$ with
  starting point $X_0$ is the sequence $\mathbf{S} = (S_j)_{j \in \IN}$
  of random variables\footnote{Note that we define $S_j = X_0 + \sum\nolimits_{0 \leq u \leq
      j-1} \;
    Y_u$ instead of $S_j = \start{x} + \sum \nolimits_{0 \leq u \leq
      j-1} \;
    Y_u$. In this way, the random variables $X_0, Y_0, Y_1, \ldots$ 
only generate a single random walk that does not depend on $\start{x}$. Instead, the different
possible initial
values $\start{x}$ are taken care of by choosing different probability spaces $(\Omega, \F{F},
\IP_{\start{x}})$ where  $\IP_{\start{x}}(X_0=\start{x})=1$.}
  $S_j = X_0 + \sum\nolimits_{0 \leq u \leq j-1} \; Y_u$.
  We denote the random walk $\mathbf{S}$ by $(X_0,\mathbf{Y})$.
\end{definition}

\noindent{}Analogous to the termination time for programs from \cref{def:termination_time},
the \emph{hitting time} is the time when the random walk ``hits'' a
certain subset of $\IZ$ for the first time.

\begin{definition}[Hitting Time]\label{def_hitting}
  The \emph{hitting time} 
  for a random walk
  $(S_j)_{j \in \IN}$
  is the random variable $T^{hit}:\Omega \to
  \overline{\IN}$  with
  $T^{hit}(\pi)=\inf\{j \in \IN\mid S_j(\pi) \leq 0\}$.
\end{definition}

\noindent{}If $\mathbf{Y}=(Y_j)_{j \in \IN}$  
is i.i.d., then  $\IE(Y_0) = \IE(Y_j)$ for all $j \in \IN$. Hence, we define
$\mu= \IE(Y_0)$ to be the \emph{drift}, i.e., the expected change in each
step of the random walk. For such random walks, a result similar to \cref{termination_decidable} is
already known.

\begin{restatable}[Drift and Hitting Time \protect{\cite[Thm.~17.1, Prop.~18.1]{randomWalkSpitzer}}]{lemma}{lemmadrifthittingtime}\label{drift_hitting_random_walk}
  Let $\mathbf{Y}$ be i.i.d.\ for a probability space $(\Omega, \F{F}, \IP)$ and let
  $(X_0,\mathbf{Y})$ be a random walk for $(\Omega, \F{F}, \IP)$ such that $\mu=\IE(Y_0)<\infty$ (note that the drift
  $\mu$ does not depend on $X_0$). Let $T^{hit}$ be the hitting time for
  $(X_0,\mathbf{Y})$. Then we have:
  \begin{itemize}
    \item[$\bullet$] If $\mu>0$, then $\IP(T^{hit} = \infty)>0$.
    \item[$\bullet$] If $\mu = 0$ and $\IP(Y_0 = 0) \neq 1$, then $\IP(T^{hit} = \infty)=0$ but $\IE(T^{hit}) = \infty$.
    \item[$\bullet$] If $\mu < 0$, then $\IE(T^{hit}) < \infty$.
  \end{itemize}
\end{restatable}

\noindent{}In order to use \cref{drift_hitting_random_walk} to prove \cref{termination_decidable},
our aim is to represent
the stochastic process $\mathbf{X}^{\IZ}$
from \cref{stochastic_process} (for $r = 1$)
as a random walk $\mathbf{X}^{\IZ} = (X_0^\IZ, \mathbf{Y}^\IZ)$ for a suitable stochastic
process $\mathbf{Y}^\IZ$.

To this end, 
we take the stochastic process $\mathbf{Y}^\IZ = (Y_j^\IZ)_{j \in \IN}$ with $Y_j^\IZ=(X_{j+1}^\IZ
-X_{j}^\IZ)$ for all $j \in \IN$, i.e., $Y_j^\IZ$ is the change of the program variable in
the $(j+1)$-th loop iteration.
Then $\mathbf{X}^\IZ$ can be obtained as the random walk
$(X_0^\IZ, \mathbf{Y}^\IZ)$, since   $\IP_{\start{x}}^\PP(X_0^\IZ=\start{x})=1$ and
 $X_j^\IZ = X_0^\IZ + \sum\nolimits_{0 \leq u \leq j-1}
(X_{u+1}^\IZ -X_{u}^\IZ)
=  X_0^\IZ + \sum\nolimits_{0 \leq u \leq j-1} Y_u^\IZ$ for all $j \in \IN$.

Unfortunately,  $\mathbf{Y}^\IZ$ is not i.i.d.\
for
the probability measure $\IP_{\start{x}}^\PP$, because the probability that $Y_j^\IZ = 0$
(i.e., that $X_{j+1}^\IZ = X_j^\IZ$ holds)
depends on $j$. More precisely, the probability for
 $X_{j+1}^\IZ = X_j^\IZ$ is $p_0$ plus
the probability that the program has already reached a value $x \leq 0$ (i.e., that the
program's termination time is at most $j$). The reason is that according to the
probability measure $\IP_{\start{x}}^\PP$,
the value of $x$ remains
unchanged as soon as $x \leq 0$. Thus, we obtain
$\IP_{\start{x}}^\PP(Y_j^\IZ =0)=p_0 +
\IP_{\start{x}}^\PP(T^{\PP} \leq j)$, where $\IP_{\start{x}}^\PP(T^{\PP} \leq j)$ clearly depends on $j$.

\noindent{}Therefore, we now introduce a new adapted probability measure $\BP_{\start{x}}^\PP$
such that
 $\mathbf{Y}^\IZ$ is i.i.d.\ on the
probability space $(\IZ^{\omega}, \F{F}^{\IZ}, \BP_{\start{x}}^\PP)$
and at the same time,
$\BE_{\start{x}}^\PP(T^{\PP})\linebreak =
\expecP{\start{x}}{T^{\PP}}$, where $\BE_{\start{x}}^\PP(T^{\PP})$ denotes the
expected value of
the termination time $T^{\PP}$ under the probability measure $\BP_{\start{x}}^\PP$.
In the following definition, $q_{\start{x}}^\PP$ corresponds to the function $pr_{\start{x}}^\PP$ from
\cref{Run of a Program} that maps any prefix run to its probability if $\start{x}$ is the initial value of
the program variable. When defining $pr_{\start{x}}^\PP$,
the probability for a prefix run $\run{z_0,\ldots,z_{j-1},z_j}$ where $z_{j-1} \leq 0$
and $z_{j-1} \neq z_j$ was 0. In contrast, for $q_{\start{x}}^\PP$  we
continue to execute the program also if $x \leq 0$.
This corresponds to a variant of the program where the loop
condition $x>0$ is replaced by \emph{true}.

\begin{definition}[Probability Measure $\BP_{\start{x}}^\PP$]\label{new_measures}
  For any
random walk program
 $\PP\!$  as in   Def.\ \hyperlink{def-univariate-program}{\arabic{def-univariate-program}} without direct termination,
 any $\start{x} \in \IZ$,
 and any prefix run $\run{z_0,z_1,\ldots, z_j}$,
let $q_{\start{x}}^\PP(\run{z_0}) = \delta_{\start{x},z_0}$
 and if $j \geq
 1$, we define:
  \[\begin{array}{rcl}
q_{\start{x}}^\PP(\run{z_0,\ldots,z_j}) &=&q_{\start{x}}^\PP(\run{z_0,\ldots,z_{j-1}}) \cdot p_{z_j - z_{j-1}}
\end{array}\]
 $\BP_{\start{x}}^\PP$ is the 
  probability measure with
  $\BP_{\start{x}}^\PP(Cyl^\IZ(\pi))\!=\!q_{\start{x}}^\PP(\pi)$ for any prefix run $\pi$.
\end{definition}

\begin{example}[Adapted Probability Measure for  $\PP^\unisup_{race}$]
 {\sl Consider runs that start with $1$, $-2$, and $-6$. Here, we have $Y^\IZ_0(\run{1, -2,
 -6,\ldots}) = (-2) \, - \, 1 = -3$ and  $Y^\IZ_1(\run{1, -2,
 -6,\ldots}) = (-6) \, -  \,(-2) = -4$. For $\PP^\unisup_{race}$ of
  \cref{exmpl:tortoise_and_hare_reduced}, when
 using the probability measure $\IP^{\PP^\unisup_{race}}_1$ from \cref{probability_measures}, we obtain
 $\IP^{\PP^\unisup_{race}}_1(Cyl^\IZ(\run{1, -2,
 -6})) = pr^{\PP^\unisup_{race}}_1(\run{1, -2,
 -6}) = p^\unisup_{-3} \cdot p^\unisup_{-4} \cdot \delta_{-2, -6} = 0$, since the value of $x$
should not change anymore after reaching the non-positive value $-2$.
 In contrast, the adapted probability measure $\BP^{\PP^\unisup_{race}}_1$ from \cref{new_measures} yields
 $\BP^{\PP^\unisup_{race}}_1(Cyl^\IZ(\langle 1,-2,$
$-6\rangle)) = q^{\PP^\unisup_{race}}_1(Cyl^\IZ(\run{1, -2,
 -6})) = p^\unisup_{-3} \cdot p^\unisup_{-4}  = \tfrac{1}{22} \cdot \tfrac{1}{22} =
 \tfrac{1}{484}$.}
 \end{example}

\noindent{}For the termination time $T^{\PP}$ one only regards the time that it takes until the
program variable $x$ is
non-positive for the first time. Thus, it does not matter whether $x$ is kept
unchanged afterwards (as in the probability measure $\IP_{\start{x}}^\PP$) or whether the
loop body is
executed further afterwards (as in $\BP_{\start{x}}^\PP$). So the expected
runtime is the same, no matter whether one uses $\IE_{\start{x}}^\PP$ or $\BE_{\start{x}}^\PP$.

\begin{restatable}[$T^\PP$ is Identically Distributed Under $\IP_{\start{x}}^\PP$ and $\BP_{\start{x}}^\PP$]{lemma}{lemmertunaffected}\label{ert_unaffected}
  For any
  random walk program $\PP$ without direct termination,
  any  $\start{x} \in \IZ$, and any $j \in \overline{\IN}$, we have
    $\IP_{\start{x}}^\PP\left(T^{\PP} = j\right) = \BP_{\start{x}}^\PP\left(T^{\PP} = j\right)$.
Thus, $\expecP{\start{x}}{T^{\PP}} = \BE_{\start{x}}^\PP\left(T^{\PP}\right)$.
\end{restatable}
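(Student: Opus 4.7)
The plan is to decompose the events $\{T^\PP = j\}$ into cylinder sets of prefix runs of length $j+1$ and show that both measures assign the same probability to each such cylinder set. Observe that by \cref{def:termination_time}, for any finite $j \in \IN$ we have
\[
\{T^\PP = j\} \;=\; \biguplus_{z_0,\ldots,z_{j-1} > 0,\; z_j \leq 0} Cyl^\IZ(\run{z_0,\ldots,z_j}),
\]
since $T^\PP(\pi) = j$ iff the first $j$ entries of $\pi$ satisfy the loop guard $x > 0$ and the $(j+1)$-st entry violates it.

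The key technical step is to show that for every prefix run $\pi = \run{z_0,\ldots,z_j}$ with $z_0,\ldots,z_{j-1} > 0$, we have $pr^\PP_{\start{x}}(\pi) = q^\PP_{\start{x}}(\pi)$. I would prove this by induction on $j$. The base case $j = 0$ is immediate since both sides equal $\delta_{\start{x},z_0}$. For the inductive step, since $p'=0$ for a random walk program without direct termination and $z_{j-1} > 0$ satisfies the loop guard, \cref{Run of a Program} gives
\[
pr^\PP_{\start{x}}(\run{z_0,\ldots,z_j}) \;=\; pr^\PP_{\start{x}}(\run{z_0,\ldots,z_{j-1}}) \cdot p_{z_j - z_{j-1}},
\]
which is exactly the recursive formula from \cref{new_measures} defining $q^\PP_{\start{x}}$. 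Applying the induction hypothesis finishes this step. The place where the two measures disagree, namely prefixes passing through a state $z_u \leq 0$, never arises inside $\{T^\PP = j\}$ for finite $j$, because each such prefix satisfies $z_u > 0$ for $u < j$.

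Summing over all prefixes in the decomposition and using countable additivity of both measures yields $\IP^\PP_{\start{x}}(T^\PP = j) = \BP^\PP_{\start{x}}(T^\PP = j)$ for every $j \in \IN$. For $j = \infty$, I would simply note that both measures are probability measures on $(\IZ^\omega, \F{F}^\IZ)$ of total mass $1$, hence
\[
\IP^\PP_{\start{x}}(T^\PP = \infty) = 1 - \sum_{j \in \IN} \IP^\PP_{\start{x}}(T^\PP = j) = 1 - \sum_{j \in \IN} \BP^\PP_{\start{x}}(T^\PP = j) = \BP^\PP_{\start{x}}(T^\PP = \infty).
\]
The equality $\expecP{\start{x}}{T^\PP} = \BE^\PP_{\start{x}}(T^\PP)$ then follows immediately from the definition of expected value, since $T^\PP$ has the same distribution under both measures.

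The only real subtlety I anticipate is a bookkeeping issue: one must verify that the inductive argument on prefixes indeed never invokes the ``absorbing'' clause of $pr^\PP_{\start{x}}$ (the one that multiplies by $\delta_{z_{j-1},z_j}$ when $z_{j-1} \leq 0$), which is guaranteed precisely by restricting to prefixes witnessing $T^\PP = j$. Everything else is routine application of countable additivity and the definitions of the two probability measures.
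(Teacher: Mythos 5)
Your proposal is correct and follows essentially the same route as the paper's proof: decompose $\{T^{\PP}=j\}$ into cylinder sets of prefix runs in $\IZ_{>0}^j\times\IZ_{\leq 0}$, observe that $pr^\PP_{\start{x}}$ and $q^\PP_{\start{x}}$ agree on such prefixes because the absorbing clause is never invoked and $p'=0$, sum by countable additivity, and handle $j=\infty$ by complementation. Your inductive phrasing of the prefix-probability identity is just a restatement of the paper's explicit product formula $\delta_{\start{x},z_0}\cdot\prod_{0\leq u\leq j-1}p_{z_{u+1}-z_u}$, so there is no substantive difference.
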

\begin{proof}
  First of all, by the definition of $T^{\PP}$, for any $j \in \IN$ we have
  \begin{equation}
    \label{T inverse}
  (T^{\PP})^{-1}(\{j\})=\biguplus \limits_{\pi = \run{z_0, \ldots, z_j} \in \IZ_{>0}^j\times
      \IZ_{\leq 0}} \Cyl^\IZ(\pi).
  \end{equation}
  
\noindent{}First, we consider $\start{x} \leq 0$. Then any cylinder set with positive probability
  w.r.t.~$\IP_{\start{x}}^\PP$ resp. $\BP^\PP_{\start{x}}$ has the form $\Cyl^\IZ(\pi)$ where $\pi$ starts with
  $\start{x} \leq 0$. But for any run $\tau \in \Cyl^\IZ(\pi)$ we have $T^{\PP}(\tau)=0$. Therefore, we
  conclude $\IP_{\start{x}}^\PP(T^{\PP}=0)=1=\BP_{\start{x}}^\PP(T^{\PP}=0)$.

  We now show that for $\start{x}>0$
  \begin{equation}
    \label{BP equals IP}
    \IP_{\start{x}}^\PP\left(\Cyl^\IZ(\pi)\right) = \BP^\PP_{\start{x}}\left(\Cyl^\IZ(\pi)\right) \;\;
    \text{for any $\pi = \run{z_0, \ldots, z_j} \in \IZ_{>0}^j\times \IZ_{\leq 0}$.}
    \end{equation}  
\noindent{}The reason is that we have:
  \begin{align*}
    \IP^\PP_{\start{x}}(\Cyl^\IZ(\pi))   &= pr^\PP_{\start{x}}(\pi) &\text{by \cref{probability_measures}}\\
  &= \delta_{\start{x},z_0} \cdot \prod\limits_{0 \leq u \leq j-1} p_{z_{u+1}-z_u} &\text{by \cref{Run of a Program} as $z_0,\dots, z_{j-1} >
      0$}\\
    &= q^\PP_{\start{x}}(\pi) \\
      &=\BP^\PP_{\start{x}}(\Cyl^\IZ(\pi))
    &\text{by \cref{new_measures}}
     \end{align*}
\noindent{}Therefore,  for all $j \in \IN$ we obtain:
\[\begin{array}{llr}
  &\IP^\PP_{\start{x}}\left(T^{\PP} = j\right)&\\ =&   \IP^\PP_{\start{x}}( (T^{\PP})^{-1}(\{j\}))&\\
   =&
   \IP^\PP_{\start{x}}\left(\biguplus \limits_{\pi = \run{z_0, \ldots, z_j}
  \in \IZ_{>0}^j\times \IZ_{\leq 0}} \Cyl^\IZ(\pi) \right) &\text{by (\ref{T inverse})}\\
=& \sum \limits_{\pi = \run{z_0, \ldots, z_j} \in \IZ_{>0}^j\times \IZ_{\leq
    0}}\IP^\PP_{\start{x}}\left(\Cyl^\IZ(\pi) \right) &\text{by additivity of prob.\ measures}\\
=& \sum \limits_{\pi = \run{z_0, \ldots, z_j} \in \IZ_{>0}^j\times \IZ_{\leq
    0}}\BP^\PP_{\start{x}}\left(\Cyl^\IZ(\pi) \right) &\text{by \cref{BP equals IP}}\\
 =& \BP^\PP_{\start{x}}\left(\biguplus \limits_{\pi = \run{z_0, \ldots, z_j}
  \in \IZ_{>0}^j\times \IZ_{\leq 0}} \Cyl^\IZ(\pi) \right) &\text{by additivity of prob.\
  measures}\\
=& \BP^\PP_{\start{x}}\left(T^{\PP} = j\right) &\text{by (\ref{T inverse})}
  \end{array}\]
  
\noindent{}Finally, $\IP^\PP_{\start{x}}\left(T^{\PP} = \infty\right)=1-\sum\limits_{j \in
    \IN}\IP^\PP_{\start{x}}\left(T^{\PP} = j\right)=1-\sum\limits_{j \in \IN}\BP^\PP_{\start{x}}\left(T^{\PP}=j
  \right) =\linebreak \BP^\PP_{\start{x}}\left(T^{\PP} = \infty\right)$.
    \qed 
\end{proof}

\noindent{}Now we show that the process $\mathbf{Y}^\IZ$ with $Y_j^\IZ = X_{j+1}^\IZ - X_j^\IZ$ is
i.i.d.\ w.r.t.\ the probability measure $\BP^\PP_{\start{x}}$
and thus,  $(X_0^\IZ,\mathbf{Y}^\IZ)$ is a random walk for $(\IZ^{\omega}, \F{F}^\IZ, \BP^\PP_{\start{x}})$.
So the expected value of $Y_j^\IZ$ 
under $\BP^\PP_{\start{x}}$,
is the
same for all $j$. In fact, this expected value is the drift $\mu_\PP$ of the
program, irrespective of the start  value $\start{x}$. 

\begin{restatable}[Y is i.i.d.\ and its Expected Value is the Drift of the Program]{lemma}{lemmaiid}\label{i_i_d}
  Let $\mathbf{X}^{\IZ}$ be the stochastic
  process as in \cref{stochastic_process}.
  We define the process $\mathbf{Y}^{\IZ} = (Y_j^{\IZ})_{j \in \IN}$ by $Y_j^{\IZ} = X_{j+1}^{\IZ}-X_j^{\IZ}$ for all $j
  \in \IN$.  Then for any random walk program $\PP$ without direct termination and any
  $\start{x} \in \IZ$, $\mathbf{Y}^{\IZ}$ is
  i.i.d.\ w.r.t.~$(\IZ^{\omega}, \F{F}^\IZ, \BP^\PP_{\start{x}})$ and thus,
  $(X_0^{\IZ},\mathbf{Y}^{\IZ})$ is a random walk for this probability space.
  Furthermore, for any $\start{x} \in \IZ$ and any $j \in \IN$, we
  have $\BE^\PP_{\start{x}}(Y_j^\IZ)=\mu_\PP$.
\end{restatable}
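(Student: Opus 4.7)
My plan is to observe that the measure $\BP^\PP_{\start{x}}$ is essentially a product measure on the \emph{increments} $Y^\IZ_j$, because in the definition of $q^\PP_{\start{x}}$ every step contributes exactly a factor $p_{z_u - z_{u-1}}$, regardless of whether $z_{u-1}$ is positive or not and regardless of the actual value of $z_{u-1}$. With this, both identical distribution and independence reduce to a routine marginalization argument using $\sum_{-k\le w\le m} p_w = 1$ (which holds because $p' = 0$).

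First, I would compute the one-dimensional marginal. Fix $j \in \IN$ and $v \in \{-k,\ldots,m\}$. Then
\[
  \{Y^\IZ_j = v\} \;=\; \biguplus_{z_1,\dots,z_j \in \IZ}
   Cyl^{\IZ}(\langle \start{x}, z_1, \dots, z_j, z_j+v\rangle).
\]
By countable additivity and \cref{new_measures},
\[
  \BP^\PP_{\start{x}}(Y^\IZ_j = v) \;=\; \sum_{z_1,\dots,z_j \in \IZ}
    p_{z_1 - \start{x}}\cdot p_{z_2 - z_1} \cdots p_{z_j - z_{j-1}}\cdot p_v.
\]
Iteratively applying $\sum_{z \in \IZ} p_{z - z'} = \sum_{-k\le w\le m} p_w = 1$ (starting from the innermost sum over $z_j$) collapses all intermediate sums to $1$, yielding $\BP^\PP_{\start{x}}(Y^\IZ_j = v) = p_v$. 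So every $Y^\IZ_j$ has the same distribution, independent of $j$ and $\start{x}$.

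Second, I would prove independence by the analogous calculation applied jointly. For any indices $0 \le j_1 < j_2 < \dots < j_n$ and values $v_1,\dots,v_n \in \{-k,\dots,m\}$, the event $\{Y^\IZ_{j_1}=v_1,\dots,Y^\IZ_{j_n}=v_n\}$ is a countable disjoint union of cylinder sets over the free intermediate coordinates. Its $\BP^\PP_{\start{x}}$-mass factors into a product whose nonzero terms are precisely the factors $p_{v_1}\cdots p_{v_n}$ (at positions $j_1,\dots,j_n$), while every other sum of the form $\sum_{z}p_{z-z'}$ collapses to $1$. Hence
\[
  \BP^\PP_{\start{x}}(Y^\IZ_{j_1}=v_1,\dots,Y^\IZ_{j_n}=v_n)
   \;=\; p_{v_1}\cdots p_{v_n}
   \;=\; \prod_{i=1}^{n}\BP^\PP_{\start{x}}(Y^\IZ_{j_i}=v_i),
\]
which is exactly pairwise (and indeed joint) independence. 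Together with the first step, $\mathbf{Y}^\IZ$ is i.i.d.\ on $(\IZ^\omega,\F{F}^\IZ,\BP^\PP_{\start{x}})$, so $(X^\IZ_0,\mathbf{Y}^\IZ)$ is a random walk in the sense of \cref{def_random_walk} (noting $\BP^\PP_{\start{x}}(X^\IZ_0 = \start{x}) = 1$ by \cref{new_measures}).

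Finally, the expected-value computation is immediate:
\[
  \BE^\PP_{\start{x}}(Y^\IZ_j)
  \;=\; \sum_{-k\le v \le m} v\cdot\BP^\PP_{\start{x}}(Y^\IZ_j = v)
  \;=\; \sum_{-k\le v \le m} v\cdot p_v
  \;=\; \mu_\PP,
\]
by \cref{Drift}. The only real subtlety is bookkeeping in the marginalization step; conceptually there is no obstacle, since the whole point of switching from $\IP^\PP_{\start{x}}$ to $\BP^\PP_{\start{x}}$ in \cref{new_measures} was precisely to remove the "freeze once $x \leq 0$" dependence that destroyed stationarity of the increments.
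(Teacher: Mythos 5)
Your proof is correct and follows essentially the same route as the paper's: marginalize out the intermediate coordinates of the relevant cylinder sets using $\sum_{-k \le w \le m} p_w = 1$ (valid since $p' = 0$) to obtain $\BP^\PP_{\start{x}}(Y_j^\IZ = v) = p_v$, then repeat the computation jointly for independence and read off the expectation. The only cosmetic point is that your displayed set identity for $\{Y_j^\IZ = v\}$ omits the cylinders with $z_0 \neq \start{x}$, which is harmless since those are $\BP^\PP_{\start{x}}$-null; the paper handles this via the explicit $\delta_{\start{x},z_0}$ factor, and it establishes only the pairwise independence that its definition of i.i.d.\ requires, whereas you factor the full joint law.
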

\begin{proof}
\noindent{}We first show that the $Y_j^{\IZ}$ are identically distributed. More precisely, we
  prove that for all $u, \start{x} \in \IZ$ and all $j \in \IN$ we have
 $\BP^\PP_{\start{x}}(Y_j^{\IZ} = u)=p_u$. Similar to our handling of multivariate programs in
  \cref{MDP},
  for any random walk program $\PP$ as in
  Def.\ \hyperlink{def-univariate-program}{\arabic{def-univariate-program}}
  we
  define $p_v = 0$ for $v > m$ or $v < -k$. 
  \[\begin{array}{llr}
  &\BP^\PP_{\start{x}}(Y_j^{\IZ} = u)\hspace*{11cm}\\
  =& \BP^\PP_{\start{x}}(X_{j+1}^{\IZ}-X_j^{\IZ} = u)\\
                      =& \BP^\PP_{\start{x}}\left(\{\run{z_0, \ldots} \in \IZ^{\omega} \mid z_{j+1}-z_j=u\}\right)\\              
                             =& \BP^\PP_{\start{x}}\left(\biguplus \limits_{\pi=\run{ z_0, \ldots,
                                 z_{j+1} } \in \IZ^{j+2}, z_{j+1}-z_j = u} \Cyl^\IZ(\pi)\right)
                                             \end{array}\]
\pagebreak
\[\begin{array}{ll@{\hspace*{-2.1cm}}r}
     =& \sum \limits_{\pi= \run{z_0, \ldots z_{j+1}} \in \IZ^{j+2},
      z_{j+1}-z_j = u}\BP^\PP_{\start{x}}\left(\Cyl^\IZ(\pi)\right) & \text{by additivity of
                           prob.\ measures}\\
                         =& \sum \limits_{\pi=\run{z_0, \ldots, z_{j+1}} \in \IZ^{j+2},
                           z_{j+1}-z_j = u} q_{\start{x}}^\PP(\pi) & \text{by \cref{new_measures}}\\
                         =& \sum \limits_{\pi=\run{z_0,z_1, \ldots, z_{j+1}}  \in \IZ^{j+2}, z_0 = \start{x},
                        z_{j+1}-z_j = u} \quad \prod\limits_{0 \leq v \leq j} p_{z_{v+1}-z_v} & \text{by \cref{new_measures}}
                      	\\
                      =& \sum \limits_{\pi=\run{z_0, z_1, \ldots, z_{j}} \in
      \IZ^{j+1}, z_0 = \start{x}} \; \left( 
                      p_u \cdot \prod\limits_{0 \leq v \leq j-1} p_{z_{v+1}-z_v} \right) \\
                      =& \left(\sum \limits_{\pi= \run{z_0,z_1, \ldots, z_{j}} \in \IZ^{j+1},
                        z_0 = \start{x}} \quad
    \prod\limits_{0 \leq v \leq j-1} p_{z_{v+1}-z_v} \right)\cdot p_u \\
    =& \left(\sum \limits_{v_1,\ldots,v_j \in \IZ} p_{v_1} \cdot \ldots \cdot p_{v_j}\right)\cdot p_u \\
                      =& \left(\sum \limits_{v \in \IZ} p_v\right)^j \cdot p_u\\
                      =& 1^j \cdot p_u = p_u. &\text{as
                        $\PP$ does not  have direct termination}
  \end{array}\]
\noindent{}As $p_u$ is independent of $j$, the $Y_j^\IZ$ are identically distributed. Furthermore,  the expected value of $Y_j^\IZ$ under
$\BP^\PP_{\start{x}}$ is
   \[\BE_{\start{x}}^\PP(Y_j^\IZ) = \sum \limits_{-k \leq u \leq m} u \cdot p_u = \mu_\PP,\]
  which is the drift of the program.

  It remains to show the independence of the random variables. Let $j \neq j' \in \IN$ and w.l.o.g.~assume $\dashed{j} > j$.
  \[\begin{array}{ll}
  &\BP^\PP_{\start{x}}(Y_{j}^\IZ= u, Y_{j'}^\IZ = u') \hspace*{8cm}\\
  =& \BP^\PP_{\start{x}}(X_{j+1}^\IZ-X_{j}^\IZ = u, X_{j'+1}^\IZ-X_{j'}^\IZ = u')\\
                      =& \BP^\PP_{\start{x}}\{\run{z_0, \ldots} \in \IZ^{\omega} \mid z_{j+1}-z_{j}=u, z_{j'+1}-z_{j'} = u'\}\\
                      =& \BP^\PP_{\start{x}}\left(\biguplus \limits_{\pi=\run{z_0, \ldots,
                        z_{\dashed{j}+1}} \in \IZ^{\dashed{j}+2}, z_{j+1}-z_{j} = u, z_{j'+1}-z_{j'} = u'} \Cyl^\IZ(\pi)\right)\\
                      =& \sum \limits_{\pi=\run{z_0, \ldots, z_{\dashed{j}+1}}
                        \in \IZ^{\dashed{j}+2}, z_{j+1}-z_{j} = u, z_{j'+1}-z_{j'} =
                        u'}\BP^\PP_{\start{x}}\left(\Cyl^\IZ(\pi)\right)\\ 
                      =& \sum \limits_{\pi= \run{z_0, \ldots, z_{j'+1}} \in
                        \IZ^{\dashed{j}+2}, z_{j+1}-z_{j} = u, z_{j'+1}-z_{j'} = u'}
                      q_{\start{x}}^\PP(\pi)\\
                        =& \sum \limits_{\pi=\run{z_0,z_1, \ldots z_{\dashed{j}+1}} \in
                        \IZ^{\dashed{j}+2}, z_0 = \start{x}, z_{j+1}-z_{j} =
                       u, z_{j'+1}-z_{j'} = u'} \quad \prod\limits_{0 \leq v \leq j'} p_{z_{v+1}-z_v} 	\\
  \end{array}\]
  \pagebreak
  \[\begin{array}{ll}
  =& \sum \limits_{\pi= \run{z_0,z_1, \ldots z_{\dashed{j}+1}} \in
                        \IZ^{\dashed{j}+2},\atop {z_0 = \start{x}, z_{j+1}-z_{j} = u,\atop z_{j'+1}-z_{j'} = u'}}
                                           \hspace*{-.5cm}
  \mbox{\footnotesize $p_u \cdot \left(\prod\limits_{0 \leq v \leq j-1} \hspace*{-.1cm} p_{z_{v+1}-z_v}\right)  \cdot p_{u'} \cdot
    \left(\prod\limits_{j+1 \leq v \leq \dashed{j}-1} \hspace*{-.1cm}
    p_{z_{v+1}-z_v}\right)$}\\
  =& \left(\sum \limits_{\pi=\run{z_0,z_1, \ldots z_{\dashed{j}}} \in
      \IZ^{\dashed{j}+1}, z_0 = \start{x}, z_{j+1} - z_j = u}\quad \prod\limits_{0 \leq v \leq
                        \dashed{j}-1, \, v \neq
                        j} p_{z_{v+1}-z_v}\right) \cdot
    p_{u} \cdot p_{u'}\\
  =& \left(\sum \limits_{v_1,\ldots,v_{j'-1} \in \IZ} p_{v_1} \cdot \ldots \cdot p_{v_{j'-1}}\right)\cdot  p_{u} \cdot p_{u'} \\   
                      =& \left(\sum \limits_{v \in \IZ} p_v\right)^{\dashed{j}-1} \cdot p_{u} \cdot p_{u'}\\
                      =& 1^{\dashed{j}-1} \cdot p_{u} \cdot p_{u'} = p_{u} \cdot p_{u'}\\
                      =&  \BP^\PP_{\start{x}}(Y_{j}^\IZ = u) \cdot \BP^\PP_{\start{x}}(Y_{j'}^\IZ = u'). \hspace*{6.3cm} \qed \hspace*{-6.3cm}
  \end{array}\]
\end{proof}

\noindent
Now we can prove  \cref{termination_decidable} based on the results of \cref{drift_hitting_random_walk} for random walks. 

\thmdecidability*
\begin{proof}
    Due to \cref{i_i_d}, $\mathbf{Y}^\IZ$ is i.i.d.\ w.r.t.\ $(\IZ^{\omega}, \F{F}^\IZ,
    \BP^\PP_{\start{x}})$
    and thus,
    $\mathbf{S}^\IZ = (X_0^\IZ, \mathbf{Y}^\IZ)$ is a random walk w.r.t.\ this probability
    space for any $\start{x} \in
    \IZ$. By \cref{def_random_walk} we have
    $S_j^\IZ = X_0^\IZ + \sum \nolimits_{0 \leq u \leq j-1}Y_u^\IZ = X_j^\IZ$ for any $j \in \IN$. Hence, the
    hitting time $T^{hit}$ for the random walk $\mathbf{S}^\IZ$  as defined in \cref{def_hitting} is exactly the
    termination time $T^{\PP}$. As we proved in \cref{i_i_d} that
    $\BE_{\start{x}}^\PP(Y_0)=\mu_\PP$ holds independent of $\start{x} \in \IZ$,  we can use Lemma
    \ref{drift_hitting_random_walk} for all $\start{x}$. So we get for all $\start{x} \in \IZ$:
    \begin{itemize}
    \item[$\bullet$] If $\mu_\PP\!>0$, then
      $\BP^\PP_{\start{x}}(T^{\PP}\!\!=\!\infty)\overset{\mathrm{Lemma}\;\ref{ert_unaffected}}{=}\IP^\PP_{\start{x}}(T^{\PP}\!\!=\!\infty)>0$,
      i.e., $\PP$ is not AST.  
\item[$\bullet$] 
  Note that as $\PP$ is non-trivial (i.e.,
      $p_0 \neq 1$), we have $\BP^\PP_{\start{x}}(Y_0^\IZ = 0) \neq
      1$.
      So if  $\mu_\PP=0$, then
     \cref{drift_hitting_random_walk} implies 
      $\BP^\PP_{\start{x}}(T^{\PP}=\infty)\overset{\mathrm{Lemma}\;\ref{ert_unaffected}}{=}\IP^\PP_{\start{x}}(T^{\PP}=\infty)=0$
      but $\BE^\PP_{\start{x}}(T^{\PP})
      \overset{\mathrm{Lemma}\;\ref{ert_unaffected}}{=}\IE^\PP_{\start{x}}(T^{\PP})=\infty$,
      i.e., $\PP$ is AST but not PAST.  
      \item[$\bullet$]  If $\mu_\PP<0$, then
        $\BE^\PP_{\start{x}}(T^{\PP})\overset{\mathrm{Lemma}\;\ref{ert_unaffected}}{=}\IE^\PP_{\start{x}}(T^{\PP})<\infty$,
        i.e., $\PP$ is PAST. \hfill \qed
    \end{itemize}
\end{proof}

\begin{example}[Termination of Variations of $\PP^\unisup_{race}$]\label{tortoise_and_hare_decidability}
{\sl
  \noindent{}We showed already in Sect.\ \ref{sec:decidability_of_termination} that the drift of the program $\PP^\unisup_{race}$ in \cref{exmpl:tortoise_and_hare_reduced}
  is
  $-\tfrac{3}{2}<0$.
  So by
  \cref{termination_decidable} this program is PAST,  i.e.,  the hare is expected
    to overtake the
  tortoise in a finite number of iterations.

  Now consider the modified program $\PP$:
   \begin{tabbing}
     \hspace*{4.5cm}\= \hspace*{.2cm}\=\kill
    \>\mbox{\rm \texttt{while}} $(x > 0) \; \{$\\
    \>\>$x=x+1$ \hspace*{.2cm}\=$[\tfrac{6}{11}];$\\[0.1cm]
    \>\>$x=x$\>$[\tfrac{3}{11}];$\\[0.1cm]
    \>\>$x = x-1$\>$[0];$\\[0.1cm]
    \>\>$x=x-2$\>$[\tfrac{1}{11}];$\\[0.1cm]
    \>\>$x = x-3$\>$[0];$\\[0.1cm]
    \>\>$x = x-4$\>$[\tfrac{1}{11}];$\\
    \>$\}$
  \end{tabbing} 
  The distance still increases with probability $\tfrac{6}{11}$ but it decreases by at most $4$.
  Its drift is $\mu_{\PP}=1 \cdot \tfrac{6}{11} +
0 \cdot \tfrac{3}{11} +
(-2)\cdot \tfrac{1}{11} + (-4)\cdot \tfrac{1}{11} =0$. Hence, on average
the distance $x$ between the tortoise and the hare remains unchanged after each loop
iteration. By \cref{termination_decidable} this program is AST but not PAST.
Hence, the hare wins with probability $1$, but the expected number of required loop iterations
is infinite.

Finally, we change the probabilities to obtain the program $\PP'$:
  
  \begin{tabbing}
  \hspace*{4.5cm}\= \hspace*{.2cm}\=\kill
  \>\mbox{\rm \texttt{while}} $(x > 0) \; \{$\\
  \>\>$x=x+1$ \hspace*{.2cm}\=$[\tfrac{6}{11}];$\\[0.1cm]
  \>\>$x=x$\>$[\tfrac{3}{11}];$\\[0.1cm]
  \>\>$x = x-1$\>$[\tfrac{1}{22}];$\\[0.1cm]
  \>\>$x=x-2$\>$[\tfrac{1}{22}];$\\[0.1cm]
  \>\>$x = x-3$\>$[\tfrac{1}{22}];$\\[0.1cm]
  \>\>$x = x-4$\>$[\tfrac{1}{22}];$\\
  \>$\}$
\end{tabbing} 

  \noindent{}Its drift is $\mu_{\PP'} =1\cdot \tfrac{6}{11}+0\cdot \tfrac{3}{11}+
  \tfrac{1}{22} \cdot\sum\nolimits_{-4 \leq j \leq -1} j =\tfrac{1}{11}>0$.
  Thus, $\PP'$ is not AST by \cref{termination_decidable}. So
  there is a positive probability that the hare never catches up with the tortoise
and the race takes forever.}
\end{example}

\corohelp*
\begin{proof}
If $\PP$ has direct termination (i.e., $p' \neq 0$), then $\PP$ and $\PP^\unisup$ are PAST by
\cref{PAST}.
Otherwise, by
  \cref{transformation_preserves_behavior} we can reduce the
  termination of $\PP$ to the termi\-nation  of $\PP^\unisup$ on inputs which are in the
  image of $\uni_\PP$. Note 
 that the termination behavior of $\PP^\unisup$ is the same for all $x > 0$.
Hence, to show that $\PP$ is (P)AST iff $\PP^\unisup$ (P)AST, we
prove that  $\uni_\PP$'s image  also includes positive values.
To see this,
  note that
  $\vec{a}\neq \vec{0}$ implies $\vec{a}\bullet\vec{a} > 0$.
Hence, for any natural number $u > \tfrac{b}{\vec{a}\bullet\vec{a}}$ 
we obtain $\uni_{\PP}(u \cdot
\vec{a})= u \cdot \vec{a}\bullet \vec{a} - b
> \tfrac{b}{\vec{a}\bullet\vec{a}} \cdot  \vec{a}\bullet \vec{a} - b = 0$. 
\qed
\end{proof}

\subsection{Proofs for \cref{sec:computability_of_asymptotic_expected_runtimes}}\label{app:computability_of_asymptotic_expected_runtimes}
We now show that for CP programs $\PP$ without direct termination, one can not only
decide termination, but the construction for the proof of \cref{termination_decidable}
also directly yields
asymptotically exact bounds on
their expected runtime. More precisely, we show that $rt^\PP_{\start{x}}$
is asymptotically linear whenever
$\PP$ is PAST
(and we even provide  actual upper and
lower bounds).
To prove this result, we
use \emph{Wald's Lemma} from probability theory. Again, we first consider random walk
programs and then use the reduction of \cref{sec:Restriction_to_Random_Walk_Programs} to
lift 
our result to arbitrary CP programs.

Recall that if a stochastic process $\mathbf{Y}=(Y_j)_{j \in \IN}$  on a probability space $(\Omega, \F{F}, \IP)$
is i.i.d., then $\IE(Y_0) = \IE(Y_j)$ for all $j \in \IN$. Thus, we  obtain
\[\IE\left(\sum_{0 \leq j \leq c-1} Y_j\right) = \sum_{0 \leq j\leq c-1} \IE(Y_j)
= c \cdot \IE(Y_0) \quad \text{for any constant
$c \in \IN$.}\]
By Wald's Lemma, a similar statement
even holds if instead of the constant $c$ we use a random variable $T$, provided that $T$
is independent from the stochastic process $\mathbf{Y}$.
We use a consequence of Wald's Lemma where $T$ does not need to be independent from the
whole process $\mathbf{Y}$ but 
for every $j$,
the random variable $Y_j$ is independent of whether $T$ is greater or equal to $j+1$.
The required independence can be expressed formally by demanding that
$Y_j$ must be independent
of  $\II_{\{T \geq j+1\}}: \Omega \to \{0,1 \}$, where  $\II_{\{T \geq j+1\}}(\pi)=1$ if
$T(\pi) \geq j+1$ and
$\II_{\{T \geq j+1\}}(\pi) =0$ otherwise.
Then, 
to compute $\IE\left(\sum\nolimits_{0 \leq j \leq T-1} Y_j\right)$, by Wald's Lemma one
can apply $\IE$ to both $T$ and $Y_n$ separately, i.e.,
one can compute $\IE(T) \cdot \IE(Y_0)$.

\begin{restatable}[Consequence of Wald's Lemma, \protect{cf.\ \cite[Lemma
        10.2(9)]{probabilityGrimmett}}]{lemma}{lemmawald}\label{walds_lemma}
  \hspace*{-.27cm} Let $\mathbf{Y}=(Y_j)_{j \in \IN}$ be a stochastic process on a probability
  space $(\Omega, \F{F}, \IP)$ which is i.i.d.\
  and let $T: \Omega \to \overline{\IN}$ be a random
  variable.
 Define the random variable $(\sum \nolimits_{0 \leq j \leq T-1} \!Y_j)\!:\!\Omega\!\to\!\IR, \pi \mapsto \!\sum \nolimits_{0 \leq j \leq T(\pi)-1} \!Y_j(\pi)$.
  If $\IE(Y_0)\!<\!\infty$, $\IE(T)\!< \infty$, and
the random variables $Y_j$ and $\II_{\{T \geq j+1\}}$ are independent for all $j\!\in\!\IN$,
 then
  \[\IE\left(\sum \limits_{0 \leq j \leq T-1} Y_j\right)=\IE(T)\cdot \IE(Y_0). \]   
\end{restatable}
\begin{proof}
  In \cite[Thm.~17.7]{probabilityBauer} it is shown that
  \begin{equation}
    \label{Wald1} \IE\left(\sum \limits_{0 \leq j \leq  T-1} Y_j\right) < \infty,
    \end{equation}
  i.e., the expected value of $\sum\nolimits_{0 \leq j \leq  T-1} Y_j$ exists.
  The proof of \cref{walds_lemma} is similar to the proof
  of  
  \cite[Lemma (9) in Sect.\ 10.2]{probabilityGrimmett}, but
    it is done under different preconditions.
  \[
  \begin{array}{ll@{\hspace*{-4cm}}l}
    &\IE\left(\sum\limits_{0 \leq j \leq T-1}Y_j\right) &\hspace*{9cm}                                        
  \end{array}\]
\pagebreak  
   \[
   \begin{array}{ll@{\hspace*{-4.5cm}}l}   
     =& \IE\left(\sum\limits_{0 \leq j}Y_j\cdot \mathbb{I}_{\{T \geq j+1\}}\right)& \\
     =&
    \sum\limits_{0 \leq j}\IE\left(Y_j\cdot \mathbb{I}_{\{T \geq j+1\}}\right) &
    \text{by the existence \cref{Wald1}, i.e.,}\\[-.3cm]
    &&\text{since this expected value is $< \infty$}\\
              =&
    \sum\limits_{0 \leq j}\IE\left(Y_j\right) \cdot \IE\left(\mathbb{I}_{\{T \geq
      j+1\}}\right) & \text{by independence of $Y_j$ and $\mathbb{I}_{\{T \geq
        j+1\}}$}\\
     =&
    \sum\limits_{0 \leq j}\IE\left(Y_0\right) \cdot \IE\left(\mathbb{I}_{\{T \geq
      j+1\}}\right) & \text{as $\IE(Y_j) = \IE(Y_0)$, since $\mathbf{Y}$ is i.i.d.}\\
                                                =& \IE\left(Y_0\right) \cdot\sum\limits_{0
                                                  \leq j} \IE\left(\mathbb{I}_{\{T \geq
                                                  j+1\}}\right) & \\ 
                                                =& \IE\left(Y_0\right)
                                                \cdot\sum\limits_{0 \leq j}
\left( 0 \cdot \IP\left(\mathbb{I}_{\{T \geq  j+1\}} = 0\right) + 1 \cdot
\IP\left(\mathbb{I}_{\{T \geq  j+1\}} = 1\right)\right)\\
      =& \IE\left(Y_0\right)
                                                \cdot\sum\limits_{0 \leq j}
\IP\left(T \geq  j+1\right)\\
=& \IE\left(Y_0\right) \cdot \sum\limits_{0 \leq j}\; \sum\limits_{j+1 \leq u}
    \IP\left(T = u\right)\\
=& \IE\left(Y_0\right) \cdot \sum\limits_{1 \leq u}\;\sum\limits_{0 \leq j \leq u-1}
    \IP\left(T = u\right)\\
 =& \IE\left(Y_0\right) \cdot \sum\limits_{1 \leq u} u \cdot  \IP\left(T = u\right)\\
   =& \IE\left(Y_0\right) \cdot \IE\left(T\right)\hspace*{8.1cm} \qed \hspace*{-8.1cm}
  \end{array}\]
\end{proof}

\noindent{}In
 our setting, we consider the stochastic process $\mathbf{Y}^\IZ$ from \cref{i_i_d}
  and the termination time $T^{\PP}$. When
regarding $\IP^\PP_{\start{x}}$, $Y_j$ (i.e., the difference between the
$(j+1)$-th and the $j$-th element of a run)
is clearly not independent of the question whether 
the run already terminated in (or before) the $j$-th element.
The reason is that under the probability measure $\IP^\PP_{\start{x}}$, 
the elements of a run do not change anymore after
termination. However, \cref{independence_terminationtime_differences} shows that when
regarding $\BP^\PP_{\start{x}}$ instead,  the independence requirement
of \cref{walds_lemma} is fulfilled.

\begin{restatable}[Independence of $Y_j^\IZ$ and $\II_{\{T^{\PP} \geq j+1\}}$]{lemma}{lemmaindependence}\label{independence_terminationtime_differences}
  Let $\mathbf{Y}^\IZ = (Y_j^\IZ)_{j \in \IN}$ be the stochastic process from \cref{i_i_d}.
Then for any random walk program $\PP$ without direct termination, any
$\start{x}\in \IZ$,
and  any $j \in \IN$, the random variables $Y_j^\IZ$ and $\II_{\{T^{\PP} \geq j+1\}}$ are
independent w.r.t.\ the probability measure $\BP^\PP_{\start{x}}$. 
\end{restatable}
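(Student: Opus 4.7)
The plan is to imitate the direct cylinder-set calculation used in the proofs of \cref{i_i_d} and \cref{ert_unaffected}: express the event $\{Y_j^\IZ=u,\;T^{\PP}\geq j+1\}$ as a disjoint union of cylinder sets, expand $\BP^\PP_{\start{x}}$ via $q^\PP_{\start{x}}$, and exploit the product form of $q^\PP_{\start{x}}$ to factor off the $(j+1)$-th step. Crucially, under $\BP^\PP_{\start{x}}$ we do not truncate at termination, so the full product $\prod_{v} p_{z_{v+1}-z_v}$ is available with no boundary terms.

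First, I would observe that the event $\{T^{\PP}\geq j+1\}$ depends only on the first $j+1$ coordinates of a run, namely the condition $z_0>0,\dots,z_j>0$, while $\{Y_j^\IZ=u\}$ is the additional condition $z_{j+1}=z_j+u$. Thus
\[
\{Y_j^\IZ=u,\;T^{\PP}\geq j+1\}\;=\;\biguplus_{\substack{z_0,\dots,z_{j+1}\in\IZ,\;z_0=\start{x},\\ z_0,\dots,z_j>0,\;z_{j+1}=z_j+u}}\Cyl^\IZ(\langle z_0,\dots,z_{j+1}\rangle),
\]
where the condition $z_0=\start{x}$ comes from $\BP^\PP_{\start{x}}(X_0^\IZ=\start{x})=1$. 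Applying \cref{new_measures} and pulling out the last factor exactly as in the proof of \cref{i_i_d} gives
\[
\BP^\PP_{\start{x}}(Y_j^\IZ=u,\;T^{\PP}\geq j+1)
\;=\;\Bigl(\sum_{\substack{z_1,\dots,z_j\in\IZ_{>0}}}\prod_{0\leq v\leq j-1}p_{z_{v+1}-z_v}\Bigr)\cdot p_u,
\]
with $z_0=\start{x}$ fixed.

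Next, I would identify the two factors separately. The isolated factor $p_u$ equals $\BP^\PP_{\start{x}}(Y_j^\IZ=u)$ by \cref{i_i_d}. The parenthesised sum, on the other hand, is precisely $\BP^\PP_{\start{x}}(T^{\PP}\geq j+1)$: by the same cylinder-set unfolding (without the extra $z_{j+1}$ coordinate), it is $\BP^\PP_{\start{x}}(\biguplus\Cyl^\IZ(\langle\start{x},z_1,\dots,z_j\rangle))$ taken over $z_1,\dots,z_j>0$, which is exactly the event that the loop guard is not violated in any of the first $j$ iterations. Combining both identifications yields
\[
\BP^\PP_{\start{x}}(Y_j^\IZ=u,\;T^{\PP}\geq j+1)\;=\;\BP^\PP_{\start{x}}(Y_j^\IZ=u)\cdot\BP^\PP_{\start{x}}(T^{\PP}\geq j+1),
\]
i.e.\ independence of $Y_j^\IZ$ and $\II_{\{T^{\PP}\geq j+1\}}$ as claimed.

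The only delicate point is the bookkeeping in the cylinder decomposition: one must be careful that the condition ``$T^{\PP}\geq j+1$'' is correctly translated into the inequalities $z_0,\dots,z_j>0$ (and not $z_{j+1}>0$), so that the index $z_{j+1}$ is genuinely free and its contribution $p_{z_{j+1}-z_j}=p_u$ can be cleanly separated from the remaining product. This is also the reason the argument fails under $\IP^\PP_{\start{x}}$ rather than $\BP^\PP_{\start{x}}$: the factor coming from the $(j+1)$-th step would then depend on whether the walk has already terminated, destroying the factorisation.
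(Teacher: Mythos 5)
Your proof is correct and follows essentially the same route as the paper: decompose the joint event into cylinder sets, expand $\BP^\PP_{\start{x}}$ via $q^\PP_{\start{x}}$, and factor off the $(j+1)$-th step as $p_u$, identifying the remaining sum as $\BP^\PP_{\start{x}}(T^{\PP}\geq j+1)$. The only difference is that you verify the product formula only for the event $\{\II_{\{T^{\PP}\geq j+1\}}=1\}$ whereas the paper also computes the case $\II_{\{T^{\PP}\geq j+1\}}=0$ by a second explicit cylinder decomposition; your version still suffices, since that case follows by complementation from $\BP^\PP_{\start{x}}(Y_j^\IZ=u,\;\II=0)=\BP^\PP_{\start{x}}(Y_j^\IZ=u)-\BP^\PP_{\start{x}}(Y_j^\IZ=u,\;\II=1)$.
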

\begin{proof}
  We show that for any $x,y \in \IZ$, we have
  \[\BP^\PP_{\start{x}}\left(Y_j^\IZ =x,\; \II_{\{T^{\PP} \geq j+1\}} = y\right) \; = \;
  \BP^\PP_{\start{x}}\left(Y_j=x\right) \cdot \BP^\PP_{\start{x}}\left(\II_{\{T^{\PP} \geq j+1\}} = y\right).\]
Note that the left- and the right-hand side are both zero whenever $y \notin
\{0,1\}$. Thus,
it is enough to show the claim for $y=0$ and $y=1$.

\medskip

  \noindent \underline{Case 1: $y=0$}
  \[
  \begin{array}{ll@{\hspace*{-3.5cm}}r}
    &\BP^\PP_{\start{x}}\left(Y_j=x,\; \II_{\{T^{\PP} \geq j+1\}} = 0\right)&\hspace*{10cm}
    \end{array}\]
 \pagebreak
 \[\begin{array}{ll@{\hspace*{-3.5cm}}r}
  =&\BP^\PP_{\start{x}}\left(\biguplus\limits_{0 \leq u \leq j}\biguplus \limits_{\pi =
      \run{z_0, \ldots,
      z_{j+1}}
      \in \IZ_{>0}^u \times \IZ_{\leq 0} \times \IZ^{j-u+1},\atop z_{j+1}-z_j = x}
    \Cyl^\IZ(\pi)\right)&\\
                                                          =&\sum\limits_{0 \leq u \leq j}\sum
                                                          \limits_{\pi = \run{z_0, \ldots,
      z_{j+1}}
      \in \IZ_{>0}^u \times \IZ_{\leq 0} \times \IZ^{j-u+1},\atop z_{j+1}-z_j = x}
                                                          \BP^\PP_{\start{x}}\left(\Cyl^\IZ(\pi)\right)&\text{as
                                                            $\BP^\PP_{\start{x}}$ is a prob.\ measure}\\ 
                                                          =&\sum\limits_{0 \leq u \leq j}\sum
                                                          \limits_{\pi =  \run{z_0, \ldots,
      z_{j+1}}
      \in \IZ_{>0}^u \times \IZ_{\leq 0} \times \IZ^{j-u+1},\atop z_{j+1}-z_j = x}
                                                          q^\PP_{\start{x}}(\pi)&\text{by \cref{new_measures}}\\ 
                                                          =&\sum\limits_{0 \leq u \leq
                                                            j}\sum \limits_{\pi = \run{z_0, \ldots,
      z_{j+1}}
      \in \IZ_{>0}^u \times \IZ_{\leq 0} \times \IZ^{j-u+1},\atop z_{j+1}-z_j = x}
                                                          \delta_{\start{x},z_0}\cdot\prod\limits_{0
                                                            \leq v \leq j}
                                                          p_{z_{v+1}-z_v}&\text{by
                                                            \cref{new_measures}}\\
                                                          =&\left(\sum\limits_{0 \leq u
                                                            \leq j}\sum \limits_{\pi =
                                                            \run{z_0, \ldots,
      z_{j}}
      \in \IZ_{>0}^u \times \IZ_{\leq 0} \times \IZ^{j-u}}
                                                          \delta_{\start{x},z_0}\cdot\prod\limits_{0
                                                            \leq v \leq j-1}
                                                          p_{z_{v+1}-z_v}\right)\cdot
                                                          p_x\\                                                    =&\BP^\PP_{\start{x}}\left(Y_j=x\right)\cdot
                                                          \left(\sum\limits_{0 \leq u \leq
                                                            j}\sum
                                                          \limits_{\pi = \run{z_0, \ldots,
                                                            z_{j}} \in \IZ_{>0}^u \times
                                                            \IZ_{\leq 0} \times \IZ^{j-u}}
                                                          \delta_{\start{x},z_0}
                                                          \cdot\prod\limits_{0 \leq v \leq
                                                            j-1} p_{z_{v+1}-z_v}\right)\\
    =&\BP^\PP_{\start{x}}\left(Y_j=x\right)\cdot
                                                          \left(\sum\limits_{0 \leq u \leq
                                                            j}\sum
                                                          \limits_{\pi = \run{z_0, \ldots,
                                                            z_{j}} \in \IZ_{>0}^u \times
                                                            \IZ_{\leq 0} \times \IZ^{j-u}}
                                                          q^\PP_{\start{x}}(\pi)\right)\\
  =&\BP^\PP_{\start{x}}\left(Y_j=x\right)\cdot
                                                          \left(\sum\limits_{0 \leq u \leq
                                                            j}\sum
                                                          \limits_{\pi = \run{ z_0, \ldots,
                                                            z_{j}} \in \IZ_{>0}^u \times
                                                            \IZ_{\leq 0} \times \IZ^{j-u}} \hspace*{-1cm}
                                                        \BP^\PP_{\start{x}}\left(\Cyl^\IZ(\pi)\right)
                                                        \right)&\text{by
                                                            \cref{new_measures}}\\                                                          
                                                          =&\BP^\PP_{\start{x}}\left(Y_j=x\right)\cdot
                                                        \BP^\PP_{\start{x}}\left(\biguplus\limits_{0
                                                          \leq u \leq j}\biguplus
                                                        \limits_{\pi = \run{z_0, \ldots,
                                                            z_{j}} \in \IZ_{>0}^u \times
                                                            \IZ_{\leq 0} \times
                                                            \IZ^{j-u}} \hspace*{-1cm} \Cyl^\IZ(\pi)\right)&\text{prob.\ measure}\\ 
                                                          =&\BP^\PP_{\start{x}}\left(Y_j=x\right)\cdot \BP^\PP_{\start{x}}\left(\II_{\{T^{\PP} \geq j+1\}} = 0\right)
 \end{array}\]

 \bigskip
 
  \noindent \underline{Case 2: $y=1$}
 \[ \begin{array}{ll@{\hspace*{-5cm}}r}    
   &\BP^\PP_{\start{x}}\left(Y_j=x, \; \II_{\{T^{\PP} \geq j+1\}} = 1\right) &\hspace*{11cm}\\
   =&\BP^\PP_{\start{x}}\left(\biguplus \limits_{\pi = \run{z_0, \ldots, z_{j+1}} \in \IZ_{>0}^{j+1}\times\IZ,\atop z_{j+1}-z_j = x} \Cyl^\IZ(\pi)\right)&\\
                                                            =&\sum \limits_{\pi = \run{z_0,
                                                              \ldots, z_{j+1}} \in
                                                              \IZ_{>0}^{j+1}\times\IZ,\atop
                                                              z_{j+1}-z_j =
                                                              x}\BP^\PP_{\start{x}}\left(
                                                            \Cyl^\IZ(\pi)\right)&\text{as $\BP^\PP_{\start{x}}$ is a prob.\ measure}\\
                                                            =&\sum \limits_{\pi = \run{z_0,
                                                              \ldots, z_{j+1}} \in
                                                              \IZ_{>0}^{j+1}\times\IZ,\atop
                                                              z_{j+1}-z_j = x}q^\PP_{\start{x}}(\pi)&\text{by
                                                            \cref{new_measures}}  
                                                        \end{array}\]
\pagebreak 
\[ \begin{array}{ll@{\hspace*{-.4cm}}r}
  
                                                            =&\sum \limits_{\pi = \run{z_0,
                                                              \ldots, z_{j+1}} \in
                                                              \IZ_{>0}^{j+1}\times\IZ,\atop
                                                              z_{j+1}-z_j =
                                                              x}\delta_{\start{x},z_0}\cdot\prod\limits_{0
                                                              \leq v \leq j} p_{z_{v+1}-z_v}&\text{by   \cref{new_measures}}\\
       =&\left(\sum \limits_{\pi =
                                                              \run{z_0,
                                                              \ldots, z_{j}} \in
                                                              \IZ_{>0}^{j+1}}\delta_{\start{x},z_0}\cdot\prod\limits_{0
                                                              \leq v \leq j-1}
                                                            p_{z_{v+1}-z_v}\right) \cdot
                                                            p_x\\
                                                              =&\BP^\PP_{\start{x}}\left(Y_j=x\right)
                                                            \cdot \left(\sum \limits_{\pi
                                                              = \run{z_0, \ldots, z_{j}}
                                                               \in
                                                              \IZ_{>0}^{j+1}}\delta_{\start{x},z_0}\cdot\prod\limits_{v=0}^{j-1}
                                                            p_{z_{v+1}-z_v}\right)&\\
   =&\BP^\PP_{\start{x}}\left(Y_j=x\right)
                                                            \cdot \left(\sum \limits_{\pi
                                                              = \run{z_0, \ldots, z_{j}}
                                                              \in
                                                              \IZ_{>0}^{j+1}}
                                                            q^\PP_{\start{x}}(\pi)\right)&\text{by
                                                            \cref{new_measures}}
                                                            \\                                                      =&\BP^\PP_{\start{x}}\left(Y_j=x\right)
                                                            \cdot \left(\sum \limits_{\pi
                                                              = \run{z_0, \ldots, z_{j}}
                                                             \in
                                                              \IZ_{>0}^{j+1}}\BP^\PP_{\start{x}}\left(\Cyl^\IZ(\pi)\right)\right)&\text{by
                                                            \cref{new_measures}}\\
                                                            =& \BP^\PP_{\start{x}}\left(Y_j=x\right) \cdot
                                                         \BP^\PP_{\start{x}}\left(\biguplus
                                                         \limits_{\pi = \run{z_0, \ldots,
                                                           z_{j}} \in
                                                           \IZ_{>0}^{j+1}}\Cyl^\IZ(\pi)\right)&\text{as $\BP^\PP_{\start{x}}$ is a prob.\ measure}\\ 
                                                            =&\BP^\PP_{\start{x}}\left(Y_j=x\right)
                                                            \cdot
                                                            \BP^\PP_{\start{x}}\left(\II_{\{T^{\PP} \geq j+1\}} = 1\right)
                                                            \hspace*{6.05cm} \qed  \hspace*{-6.05cm}
  \end{array}\]
\end{proof}

\noindent{}Now we can use \cref{walds_lemma} to infer linear upper and lower bounds for the expected
runtime if the random walk program $\PP$ is PAST (i.e., if  $\mu_\PP < 0$).

\begin{restatable}[Bounds on the Expected Runtime of Random Walk Programs]{theorem}{theoremboundsontheruntime}\label{app:Bounds on the Runtime of Programs}
 Let $\PP$ be a random walk program as in
   Def.\ \hyperlink{def-univariate-program}{\arabic{def-univariate-program}} without direct
  termination where 
$\mu_\PP<0$. Then
$rt_{\start{x}}^\PP = 0$ for $\start{x} \leq 0$ and 
  for $\start{x} > 0$,  we have
  \[ -\tfrac{1}{\mu_\PP} \cdot \start{x} \quad \leq \quad rt_{\start{x}}^\PP \quad
  \leq  \quad -\tfrac{1}{\mu_\PP} \cdot \start{x} + \tfrac{1-k}{\mu_\PP}.
\]
So for $\start{x}>0$,  $\PP$'s expected runtime is asymptotically linear, i.e., $rt_{\start{x}}^\PP \in \Theta(\start{x})$.
\end{restatable}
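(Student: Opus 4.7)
The idea is to use Wald's Lemma (Lemma \ref{walds_lemma}) applied to the i.i.d.\ process $\mathbf{Y}^\IZ$ from Lemma \ref{i_i_d} and the termination time $T^\PP$ as stopping variable. If $\start{x}\le 0$, then $rt^\PP_{\start{x}}=0$ is immediate from Corollary \ref{Expected_Runtime_Violating}, so assume $\start{x}>0$. Because Lemma \ref{ert_unaffected} gives $\expecP{\start{x}}{T^\PP}=\BE^\PP_{\start{x}}(T^\PP)$, I switch to the measure $\BP^\PP_{\start{x}}$ throughout, which is the measure under which $\mathbf{Y}^\IZ$ is i.i.d.\ and $\BE^\PP_{\start{x}}(Y_0^\IZ)=\mu_\PP$ (Lemma \ref{i_i_d}).

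First I verify the hypotheses of Lemma \ref{walds_lemma}. The assumption $\mu_\PP<0$ yields, via Theorem \ref{termination_decidable}, that $\PP$ is PAST, so $\BE^\PP_{\start{x}}(T^\PP)=\IE^\PP_{\start{x}}(T^\PP)<\infty$; trivially $\BE^\PP_{\start{x}}(Y_0^\IZ)=\mu_\PP<\infty$; and the required independence of $Y_j^\IZ$ and $\II_{\{T^\PP\ge j+1\}}$ for each $j\in\IN$ is exactly Lemma \ref{independence_terminationtime_differences}. Thus Wald's Lemma applies and gives
\[
\BE^\PP_{\start{x}}\!\left(\sum\nolimits_{0\le j\le T^\PP-1} Y_j^\IZ\right)
\;=\;\BE^\PP_{\start{x}}(T^\PP)\cdot\mu_\PP\;=\;rt^\PP_{\start{x}}\cdot\mu_\PP.
\]

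Next I identify the left-hand side with a displacement of the random walk. Since $X_j^\IZ=X_0^\IZ+\sum_{0\le u\le j-1}Y_u^\IZ$ and $X_0^\IZ=\start{x}$ almost surely under $\BP^\PP_{\start{x}}$, the telescoping sum gives $\sum_{0\le j\le T^\PP-1}Y_j^\IZ=X_{T^\PP}^\IZ-\start{x}$, so
\[
rt^\PP_{\start{x}}\;=\;\tfrac{\BE^\PP_{\start{x}}(X^\IZ_{T^\PP})-\start{x}}{\mu_\PP}\;=\;\tfrac{\start{x}-\BE^\PP_{\start{x}}(X^\IZ_{T^\PP})}{-\mu_\PP}.
\]
The two bounds then follow from pointwise bounds on $X^\IZ_{T^\PP}$: by the definition of $T^\PP$, $X^\IZ_{T^\PP}\le 0$ (giving $-\BE^\PP_{\start{x}}(X^\IZ_{T^\PP})\ge 0$, hence the lower bound $-\tfrac{1}{\mu_\PP}\cdot\start{x}$), while on any run that terminates at step $T^\PP$ we must have $X^\IZ_{T^\PP-1}\ge 1$, and the only possible decrements are between $0$ and $k$, so $X^\IZ_{T^\PP}\ge 1-k$, giving $\BE^\PP_{\start{x}}(X^\IZ_{T^\PP})\ge 1-k$ and hence the upper bound $-\tfrac{1}{\mu_\PP}\cdot\start{x}+\tfrac{1-k}{\mu_\PP}$. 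Note that for $\start{x}\le 0$ I need the measurability conventions that make $X^\IZ_{T^\PP}$ well-defined on $\{T^\PP=\infty\}$; this is handled by invoking Lemma \ref{ert_unaffected} (so only runs with $T^\PP<\infty$ contribute).

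The main obstacle I anticipate is the careful justification that Wald's Lemma may indeed be applied here: the delicate points are (i) reducing to the adapted measure $\BP^\PP_{\start{x}}$ (needed so that $\mathbf{Y}^\IZ$ is truly i.i.d., since under $\IP^\PP_{\start{x}}$ the process is frozen after termination and the $Y_j^\IZ$ are not identically distributed), and (ii) verifying the independence hypothesis for all $j$, which is precisely what Lemma \ref{independence_terminationtime_differences} was engineered to supply. Once these two ingredients are in place, the remainder is a short computation with the telescoping identity and the elementary $(1-k)$-bound on the overshoot at termination. The asymptotic statement $rt^\PP_{\start{x}}\in\Theta(\start{x})$ is then immediate from the linear sandwich.
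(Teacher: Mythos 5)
Your proof is correct and follows essentially the same route as the paper's: Wald's Lemma applied to $\mathbf{Y}^\IZ$ under the adapted measure $\BP^\PP_{\start{x}}$ (via Lemmas \ref{ert_unaffected}, \ref{i_i_d}, and \ref{independence_terminationtime_differences}), followed by the telescoping identity for $X^\IZ_{T^{\PP}}$ and the pointwise bounds $1-k \leq X^\IZ_{T^{\PP}} \leq 0$. The only cosmetic difference is that the paper explicitly sets $X_{T^{\PP}}(\pi)=0$ on $\{T^{\PP}=\infty\}$ and notes $0 \geq 1-k$ (since $\mu_\PP<0$ forces $k\geq 1$), whereas you defer this to a measurability remark; both resolve the same technicality.
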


\begin{proof}
All prerequisites are satisfied to apply Wald's Lemma (\cref{walds_lemma}) for the
stochastic process  $\mathbf{Y}^\IZ$ on the probability space $(\IZ^{\omega}, \F{F}^\IZ,
\BP^\PP_{\start{x}})$ and the termination time $T^{\PP}$:  
By \cref{i_i_d},  $\mathbf{Y}^\IZ$ is i.i.d.\
w.r.t.\  $(\IZ^{\omega}, \F{F}^\IZ,
\BP^\PP_{\start{x}})$
and  $\BE^\IZ_{\start{x}}(Y_0^\IZ) =
\mu_\PP < \infty$.
  Since $\mu_\PP < 0$, \cref{termination_decidable} yields
  that $\PP$ is PAST and hence $rt^\PP_{\start{x}} = \expecP{\start{x}}{T^{\PP}}< \infty$.
  By \cref{ert_unaffected} this implies  $\BE^\PP_{\start{x}}\left(T^{\PP}\right)=\expecP{\start{x}}{T^{\PP}}<
  \infty$.
  Furthermore,  $Y_j^\IZ$ and $\II_{\{T^{\PP} \geq j+1\}}$ are independent by
   \cref{independence_terminationtime_differences}. Thus, 
 \cref{walds_lemma} yields
  \begin{equation}
    \label{Wald Consequence}
\BE^\PP_{\start{x}}\left(\sum_{0 \leq j \leq T^{\PP}-1}
Y_j^\IZ\right)\; = \;\BE_{\start{x}}^\PP(T^{\PP})\cdot \BE_{\start{x}}^\PP(Y_0^\IZ). 
  \end{equation}

  \noindent{}Let the random variable $X_{T^{\PP}}:\Omega\!\to\!\IZ$ map every run $\pi$ to the
  first non-positive value in $\pi$, i.e., to the value of the program variable when $\PP$
terminates, or 0 otherwise. So
  $X_{T^{\PP}}(\pi) = X_{T^{\PP}(\pi)}(\pi)$ if $T^{\PP}(\pi)\!<\!\infty$
and $X_{T^{\PP}}(\pi) = 0$ if $T^{\PP}(\pi)\!=\!\infty$.

To infer linear bounds on the expected value of the termination time
    $\BE^\PP_{\start{x}}(T^{\PP})$ resp.\ $\expecP{\start{x}}{T^{\PP}}$,
    we first infer bounds on $\BE^\PP_{\start{x}}(X_{T^{\PP}})$.
     Clearly, we have $X_{T^{\PP}}(\pi) \leq 0$ for every $\pi \in \Omega$ by the definition of
     the termination time and of $X_{T^{\PP}}$.
Hence, this implies $\BE^\PP_{\start{x}}(X_{T^{\PP}}) \leq 0$, i.e., $0$ is an upper bound for
$\BE^\PP_{\start{x}}(X_{T^{\PP}})$.

To infer a lower bound for $\BE^\PP_{\start{x}}(X_{T^{\PP}})$,
note that if $\start{x} > 0$, then for every run $\pi = \run{z_0, \ldots, z_{j-1}, z_j,
\ldots}$ where $\BP^\PP_{\start{x}}(\Cyl^\IZ(\pi)) = q^\PP_{\start{x}}(\pi) > 0$ and $z_j$ is
the first non-positive value in $\pi$, we have $j \geq 1$ and $z_j$ is at most $k$ smaller
than $z_{j-1}$. Thus, $z_{j-1} \geq 1$ implies $z_j \geq z_{j-1}-k \geq 1-k$. Hence, for
all these runs we have $X_{T^{\PP}}(\pi) = z_j \geq 1-k$.
Moreover, for runs $\pi$ without non-positive values, we also have 
$X_{T^{\PP}}(\pi)  \geq 1-k$,
since $X_{T^{\PP}}(\pi) = 0$ and since $\mu_\PP < 0$ implies $k \geq 1$. Thus, we obtain
$\BE^\PP_{\start{x}}(X_{T^{\PP}}) \geq 1-k$ whenever $\start{x} > 0$.

So to summarize, we get the following upper and lower bounds for
$\BE^\PP_{\start{x}}(X_{T^{\PP}})$ if $\start{x} > 0$:
    \begin{equation}
      \label{value at program end}
      1-k \quad \leq \quad \BE^\PP_{\start{x}}(X_{T^{\PP}})\quad \leq \quad 0
    \end{equation}

    Recall that for every $j \geq 0$ we have
 $X_{j}^\IZ = X_0^\IZ+\sum \nolimits_{0 \leq u \leq j-1} Y_u^\IZ$.
Hence, we also have 
    $X_{T^{\PP}} = X_0^\IZ+\sum \nolimits_{0 \leq u \leq T^{\PP}-1} Y_u^\IZ$.  This implies:
 \[ \begin{array}{rcll}
   \BE^\PP_{\start{x}}(X_{T^{\PP}}) &=&\BE^\PP_{\start{x}}(X_0^\IZ) + \BE^\PP_{\start{x}}\left(\sum_{0
     \leq u \leq T^{\PP}-1} Y_u^\IZ\right)&\\
   &=&\start{x} + \BE^\PP_{\start{x}}(T^{\PP}) \cdot \BE^\PP_{\start{x}}(Y_0^\IZ) &\text{by \cref{Wald Consequence}}\\
   &=&\start{x} + \BE^\PP_{\start{x}}(T^{\PP})\cdot \mu_\PP &\text{by \cref{i_i_d}}\\
    &=&\start{x} + \expecP{\start{x}}{T^{\PP}}\cdot \mu_\PP &\text{by \cref{ert_unaffected}.} 
 \end{array}\]

 \noindent
 Hence, by \cref{value at program end} we obtain
  $-\tfrac{1}{\mu_\PP} \cdot \start{x} \; \leq \; \expecP{\start{x}}{T^{\PP}} \; \leq
  \;  -\tfrac{1}{\mu_\PP} \cdot \start{x} + \tfrac{1-k}{\mu_\PP}$  for any $\start{x} > 0$.
This implies the theorem, since  $rt^\PP_{\start{x}} = \expecP{\start{x}}{T^{\PP}}$.
 \qed 
\end{proof}

\thmboundsgeneralized*
\begin{proof}
  The result directly follows from \cref{transformation_preserves_behavior,app:Bounds on the Runtime of Programs}.
\end{proof}

\section{Proofs for \cref{sec:exact}}\label{app:exact}
\lemmanumberroots*
\begin{proof} 
  \hypertarget{sec:exactProofs}{We}
     use Rouch\'e's Theorem: For a univariate polynomial
     $a_v \cdot x^v + \ldots + a_1 \cdot x +
  a_0$, if there is a number $w \in \IR_{>0}$ and an index $u\in \IN$ with $0 \leq u \leq v$
  such that
\begin{equation}|a_u| \cdot w^u \; >\;\sum_{0 \leq j \leq v,\; j\not=u}|a_j| \cdot w^j,\label{ineq}
	\end{equation}
then the polynomial has exactly $u$ (possibly complex) roots  (counted with multiplicity)
of absolute value less than  $w$.

We now apply Rouch\'e's Theorem to the characteristic polynomial
and proceed by case analysis. First, we consider the case where $p'>0$. Here, we choose $w=1$ and $u=k$.
Then \cref{ineq} becomes 
\[|p_0 -1| \; > \; \sum_{-k \leq  j \leq m, \; j\neq 0}|p_j|.\]
As $|p_0 - 1| = 1 - p_0$ and $|p_j| = p_j$ for all $j$, this is equivalent to
\[1 \; > \; \sum_{-k \leq j \leq m}p_j \; = \; 1-p' \]
which is true since $p'>0$.
So by Rouch\'e's Theorem, the characteristic polynomial $\chi_\PP$ has $k$ roots $\lambda$ with
$|\lambda| < 1$.

However, we would like to conclude that there are no more than $k$ roots $\lambda$ with
$|\lambda| \leq 1$. Thus, we still need to show that $\chi_\PP$ has no root $\lambda$ with
$|\lambda|=1$.
Clearly, $0 = \chi_\PP(\lambda)$ is equivalent to $0 =\sum\nolimits_{-k \leq j \leq m} p_j \cdot
\lambda^{k+j}-\lambda^k$.
If $|\lambda| = 1$ were true, then
$1 = \sum\nolimits_{-k \leq j \leq m} p_j \cdot
\lambda^{j}$ and
\[  1 \; = \; |1| \; \le \; \sum\limits_{-k \leq j\leq m}|p_j|\cdot |\lambda|^j \; = \; \sum\limits_{-k \leq j\leq m} p_j \; = \;
1-p'\]
by using $|p_j| = p_j$. However, this is  a contradiction to $p'>0$.

Now we consider the case where 
	$p'=0$ and thus $\sum\nolimits_{-k \leq j \leq m} p_j=1$.
Our goal is to show that for all small enough $\epsilon > 0$, the inequality \cref{ineq}
holds if we set $w=1+\epsilon$ and $u=k$.	Then \cref{ineq} becomes 
	\[
	|p_0 -1| \cdot w^k \; >\; \sum_{-k \leq j \leq m, \; j\neq 0}|p_j| \cdot w^{j+k}.\]
        As $|p_0 - 1| = 1 - p_0$, $w = 1 + \epsilon$, and $|p_j| = p_j$ for all $j$, this is equivalent to
\[ 	(1-p_0) \cdot (1+\epsilon)^k \; > \; \sum_{-k \leq j \leq m,\;j\neq 0} p_j \cdot
(1+\epsilon)^{j+k}.
\]
Note that\footnote{This notation means that $(1+\epsilon)^j=1+j \cdot \epsilon+
  f(\epsilon)$ for a function $f$ with $f(x) \in \OO(x^2)$. Here, $k$, $m$, and the $p_j$ are
  considered to be constants, i.e.,
we write $\OO(\epsilon^2)$ instead of $(1 - p_0) \cdot \OO(\epsilon^2)$ or $\sum\nolimits_{-k \leq
  j \leq m,\; j\neq 0} p_j\cdot \OO(\epsilon^2)$.}
$(1+\epsilon)^j=1+j \cdot \epsilon+\OO(\epsilon^2)$ for any $j \geq 0$. Hence, we obtain
  \[\begin{array}{ll}
  &(1-p_0)+k \cdot (1-p_0) \cdot \epsilon+\OO(\epsilon^2)\\
  >&\left(\sum_{-k \leq j \leq m, \; j\neq 0} \hspace*{-.2cm} p_j \right)  +  \left(\sum_{-k \leq j \leq
    m,\; j\neq 0} \hspace*{-.2cm} p_j \cdot (j+k) \cdot
    \epsilon\right)  +\OO(\epsilon^2).
  \end{array}
  \]
  By using 
  $\sum\nolimits_{-k \leq j \leq m} p_j=1$, this simplifies to
  \[ k \cdot (1-p_0) \cdot \epsilon+\OO(\epsilon^2) \; >\;
\left( \sum_{-k \leq j \leq m,\; j\neq 0} p_j \cdot (j+k) \cdot
\epsilon \right)+\OO(\epsilon^2).\]
When dividing by $\epsilon > 0$, we get
\[                        
k \cdot (1-p_0)+\OO(\epsilon) \;> \; \left(\sum_{-k \leq j \leq m, \; j\neq 0} p_j \cdot
(j+k) \right)+\OO(\epsilon).\]
To satisfy this, it is sufficient to have 
\[                        
k \cdot (1-p_0) \;> \; \left(\sum_{-k \leq j \leq m, \; j\neq 0} p_j \cdot
(j+k) \right)+\OO(\epsilon).\]
This is equivalent to
\[\begin{array}{rcl}	k &>&\sum_{-k \leq j \leq m}p_j \cdot (j+k)+\OO(\epsilon)\\
	&=&\sum_{-k \leq j \leq m}p_j \cdot j+k+\OO(\epsilon)\\
		&=&\mu_\PP+k+\OO(\epsilon).
			\end{array}\]
Since $\mu_\PP<0$ as $\PP$ is PAST (cf.\ \cref{termination_decidable}), this is true for
all sufficiently small $\epsilon$. Hence, there are exactly $k$ roots of absolute value less than $1+\epsilon$, where $\epsilon$ is sufficiently small, so in particular $k$ roots of absolute value $\le 1$.
\qed 
\end{proof}

\lemmadisregarding*
\begin{proof}
To
 encode the requirement on the $a_{j,u}$,
we  modify
\eqref{constraint1} into a new constraint \eqref{constraint01} which ensures $a_{j,u} = 0$ whenever
$|\lambda_j| > 1$.
More precisely, this new constraint \eqref{constraint01} is a recurrence equation such
that  the
characteristic polynomial $\chi_0$ of its homogeneous part has all the roots of $\chi_\PP$ except those whose absolute
value is greater than 1, i.e., $\chi_0(\lambda) = \prod\nolimits_{1 \leq j \leq c, \; |\lambda_j| \leq
  1} (\lambda-\lambda_j)^{v_j}$.
Thus, we can define the coefficients $q_j \in \IC$ by
	\[\chi_0(\lambda) \quad =\quad \prod_{1 \leq j \leq c,\;  |\lambda_j| \leq 1} (\lambda-\lambda_j)^{v_j}\\
		\quad =\quad \lambda^k-\sum_{-k \leq j \leq -1}q_j \cdot \lambda^{k+j}.\]
        Note that the degree of the polynomial $\chi_0$ is indeed $k$, because by \cref{Number of Roots Lemma} we have
        $\sum\nolimits_{1 \leq j \leq c,\;  |\lambda_j| \leq 1} v_j = k$.

Moreover, the constant add-on of the new recurrence equation is
constructed in such a way that the particular solutions $C_{const}$ resp.\ $C_{lin} \cdot
x$ of \eqref{constraint1} are also solutions of the inhomogeneous recurrence equation. Thus,
 let $D_{const} =
C_{const} \cdot
        \left(1-\sum\nolimits_{-k \leq j \leq -1}q_j\right)$ and 
        $D_{lin} 
        =-C_{lin} \cdot \sum\nolimits_{-k \leq j \leq -1}j \cdot q_j$.
        Instead of \eqref{constraint1}, we now consider the constraint
        \begin{equation}
     		\label{constraint01} f(x) \; = \; \sum_{-k \leq j \leq -1} q_j \cdot f(x+j)+D
                \quad \text{for
                  all $x > 0$,}
\end{equation}
        where we choose $D = D_{const}$ if $p'>0$ and $D = D_{lin}$ if $p'=0$.
      We show the following two claims:
      	\begin{enumerate}[(a)]
	\item There is exactly one function $f:\IZ \to \IC$ which satisfies
         \cref{constraint2} and \cref{constraint01}.
	\item A function $f:\IZ \to \IC$  satisfies  \cref{constraint01} iff
 $f$  satisfies  \cref{constraint1}  (thus, it has
 the form \eqref{sol}) where $a_{j,u} = 0$ whenever
$|\lambda_j| > 1$.
          \end{enumerate}
        These two claims imply the statement of the lemma. To see this, note
        that by (a) there exists a function which satisfies \cref{constraint2} and \cref{constraint01}
and by (b) this function also satisfies \cref{constraint1} and it has $a_{j,u} = 0$ whenever
$|\lambda_j| > 1$. This function is unique, because if there were two different functions $f_1$ and
$f_2$ that satisfy  \cref{constraint2} and \cref{constraint1} and have $a_{j,u} = 0$ whenever
$|\lambda_j| > 1$, then by (b) these two functions would also both satisfy
\cref{constraint01}. But this would be a contradiction to the uniqueness stated in (a).

We now prove the claims (a) and (b). For (a),
note that 
the recurrence equation \eqref{constraint01} is formulated in such a way that $f(x)$
only depends on the values of $f$ on the smaller values $x-1, \ldots, x-k$ (i.e., it is a
recurrence of order $k$). By
 the constraint \eqref{constraint2}, the initial value of $f$ on
negative values is uniquely determined (i.e., $f(0) = f(-1) = \ldots = f(-k+1) =
0$). Hence, by induction on $x$, one can easily prove that there is a single unique function $f:\IZ \to \IC$ that satisfies both
\eqref{constraint2} and \eqref{constraint01}.

For the claim (b), we only have to show that $C_{const}$ is a solution of the inhomogeneous
recurrence equation \cref{constraint01} if $p' > 0$ and $C_{lin} \cdot x$ is a solution of
\cref{constraint01} if $p' = 0$.
Once this is shown, it is clear that all solutions of  \cref{constraint01} result from
adding the particular solution $C_{const}$ resp.\ $C_{lin} \cdot x$  of the
inhomogeneous equation to a solution of the homogeneous variant of \cref{constraint01}
(where $D$ is replaced by 0). Any solution of this homogeneous variant can be represented
as a linear combination of the  solutions $\lambda_j^x \cdot x^u$
where $|\lambda_j| \leq 1$ and $u \in \{0, \ldots, v_j-1\}$. That these are linearly independent solutions
of the homogeneous variant of \cref{constraint01} follows from the fact that $\chi_0$ is
the corresponding characteristic polynomial. Thus,  the solutions of
\cref{constraint01} are all functions of the form 
\eqref{sol} where $a_{j,u} = 0$ whenever
$|\lambda_j| > 1$, which proves (b).

It remains to show that  $C_{const}$  resp.\  $C_{lin} \cdot x$ are particular solutions
of the  inhomogeneous
recurrence equation \cref{constraint01}. If $p' > 0$, then the definition of $D_{const}$
indeed implies
$C_{const} =C_{const} \cdot \sum\nolimits_{-k \leq j \leq -1}q_j +D_{const}$. If $p' = 0$, then we have to show
\begin{equation}
  \label{inhomSolutionConstraint01} C_{lin} \cdot x \; = \;C_{lin} \cdot \sum_{-k \leq j \leq -1}q_j
  \cdot (x+j) + D_{lin}.
  \end{equation}
Since 1 is a root of $\chi_\PP$ (i.e., one of the $\lambda_j$ with $|\lambda_j| \leq
1$ is $\lambda_j = 1$), 1 is also a root of $\chi_0$. So we have $0 = \chi_0(1) = 1 -
\sum\nolimits_{-k \leq j \leq -1}q_j$, which implies $\sum\nolimits_{-k \leq j \leq -1}q_j = 1$.
So \eqref{inhomSolutionConstraint01} is equivalent to
\[ \begin{array}{rcl}
C_{lin} \cdot x &=& 
C_{lin} \cdot (x \cdot \sum_{-k \leq j \leq -1}q_j
+  \sum_{-k \leq j \leq -1}j \cdot q_j) + D_{lin}\\
&=&
C_{lin} \cdot (x 
+  \sum_{-k \leq j \leq -1}j \cdot q_j) + D_{lin}.
\end{array}\]
This holds due to the definition of
$D_{lin}$.
\qed
\end{proof}

\theoremsolution*
\begin{proof}
By  \cref{correctness_of_ert}, the expected runtime $rt^\PP_x$
is the least fixpoint of the  expected runtime transformer
$\C{L}^\PP$, i.e.,
 the smallest function
$f(x):\IZ \to \overline{\IR_{\geq 0}}$
which satisfies \cref{constraint1OLD}, or equivalently,
 the smallest function
which satisfies \cref{constraint2} and \cref{constraint1}.

Since $f$ satisfies \cref{constraint1}, it is a function of the form \cref{sol},
i.e., there exist coefficients $a_{j,u} \in \IC$ such that for all $x > -k$ we have
\[ 
f(x) \;=\; C(x) \, + \, \sum_{1 \leq j \leq c} \;\; \sum_{0\leq u \leq v_j-1}a_{j,u} \cdot
\lambda_j^x \cdot x^u.
\]

\noindent
If we had $a_{j,u}\not=0$ for a coefficient where
$|\lambda_j| > 1$, then
$f(x)$ would not be bounded by a constant (if $p' > 0$) resp.\ by a linear function
(if $p' = 0$).
  Thus, this would contradict
\Cref{PAST}   (if $p' > 0$) resp.\  \Cref{bounds_runtime_constant_probability_programs} (if $p' = 0$).

By \cref{reduction} there is a single unique function $f: \IZ \to \IC$ which satisfies both
\cref{constraint2} and \cref{constraint1}
and has $a_{j,u}=0$  whenever
$|\lambda_j| > 1$. So this function must be the expected runtime (and hence, it maps any integer
to a non-negative real number). Due to \cref{constraint1} the function must be of the form
\cref{sol} for all $x > -k$ but at the same time it also has to satisfy $f(x) = 
0$ for all $x \leq 0$ due to \cref{constraint2}. Therefore, it must satisfy the linear
equations \eqref{initial}.
On the other hand,  the linear equations \eqref{initial}
cannot have more than one solution because otherwise this would yield
two different functions 
that satisfy both
\cref{constraint2} and \cref{constraint1}
and have $a_{j,u}=0$  whenever
$|\lambda_j| > 1$, in contradiction to \cref{reduction}.

If $k=0$, then $p' > 0$ as $\PP$ is PAST.  \cref{Number of Roots Lemma}
implies that $\chi_\PP$ has no root with $|\lambda| \leq 1$ and thus, $rt^\PP_x
= C_{const}   + \sum\nolimits_{1 \leq j \leq c, \; |\lambda_j|\le 1} \ldots =  C_{const}$
for $x > 0$.
\qed
\end{proof}

\coroexactruntime*
\begin{proof}
  If $\PP$ is trivial, then its expected runtime is obvious. Otherwise, by \cref{coro1} one can
  decide if
$\PP$ is PAST and in that case, $\PP^\unisup$ is PAST as well. 
  For any CP program $\PP$, we have
    $rt_{\vec{x}}^\PP = rt_{\uni_{\PP}(\vec{x})}^{\PP^\unisup}$ due to
  \cref{transformation_preserves_behavior}.
  As $rt_{\uni_{\PP}(\vec{x})}^{\PP^\unisup}$ can be computed exactly by \cref{solution},
  this also holds for   $rt_{\vec{x}}^\PP$. \qed
\end{proof}

\noindent{}
As mentioned in \cref{sec:exact},
 \cref{solution,coro3} imply that for any $\startvec{x} \in \IZ^r$,
 the expected runtime $rt^\PP_{\startvec{x}}$ of a CP
program $\PP$ that is PAST and has only \emph{rational} probabilities $p_{\vec{c}_1},\ldots,p_{\vec{c}_n},p' \in \mathbb{Q}$
is always an algebraic number.
This is due to the fact that $rt^\PP_{\startvec{x}}$ can be represented as a
linear combination of algebraic numbers (the roots of the characteristic polynomial $\chi_{\PP^\unisup}$). The
coefficients of this linear combination are the solution of a linear equation system \cref{initial} over algebraic
numbers and hence, they are algebraic numbers themselves. 
Therefore, one could also compute a closed form for the
exact expected runtime $rt^\PP_{\vec{x}}$ using a representation with algebraic numbers
instead of numerical approximations.

As also discussed in \cref{sec:exact}, while the
exact computation of the expected runtime of a random walk program
$\PP$ according to \cref{solution} may yield a
representation of $rt^\PP_x$ with possibly complex number, one can easily obtain a
more intuitive representation of $rt^\PP_x$ that uses real numbers only.

As stated before, for any coefficients $a_{j,u}, a_{j,u}' \in \IC$
with $j \in \{s+1,\ldots,s+t\}$ and
$u \in \{0,\ldots,
v_{j}-1\}$ there exist coefficients $b_{j,u}$ and
$b_{j,u}'$ such that
\[a_{j,u} \cdot \lambda_{j}^x
 + a_{j,u}' \cdot \overline{\lambda_{j}}^x
 \quad = \quad b_{j,u} \cdot \mathrm{Re}(\lambda_{j}^x)
 + b_{j,u}' \cdot \mathrm{Im}(\lambda_{j}^x)
\]
holds for all $x \in \IZ$. More precisely, $b_{j,u} = a_{j,u} + a_{j,u}'$
and $b_{j,u}' = (a_{j,u} - a_{j,u}') \cdot i$.
So  any linear combination of the functions $\lambda_j^x \cdot
x^u$ and $\overline{\lambda_j}^x \cdot
x^u$ can be replaced by a linear combination of the 
functions $\mathrm{Re}(\lambda_j^x) \cdot x^u$ and $\mathrm{Im}(\lambda_j^x)
\cdot x^u$.
In this way, one obtains $k+m$ linearly independent \emph{real} solutions of the
corresponding homogeneous recurrence equation.
Hence, by \cref{solution} we now get the representation of the expected runtime in
\eqref{final}:
\[
  \mbox{\small \hspace*{-.2cm}$rt^\PP_x\!=\!\left\{\begin{array}{lll}
 C(x) &+ \hspace*{-.5cm}
\sum_{1 \leq j \leq s, \; |\lambda_j|\le 1} \; \; \;\sum_{0\leq u \leq v_j-1}a_{j,u} \cdot
\lambda_j^x \cdot x^u\\
&+  \hspace*{-.5cm}
\sum_{s+1 \leq j \leq s+t, \; |\lambda_{j}|\le 1} \; \sum_{0 \leq u \leq
  v_{j}-1} \hspace*{-.3cm} \left(b_{j,u}\!\cdot\!\mathrm{Re}(\lambda_{j}^x)  + 
b_{j,u}'\!\cdot\!\mathrm{Im}(\lambda_{j}^x)\right) \cdot x^u, & \text{for $x > 0$}\\
0, && \text{for $x \leq 0$}
\end{array}\hspace*{-.2cm}
\right.$}\]

\noindent{}Since $rt^\PP_x$ is real-valued,
$\lambda_j^x \in \IR$ for $j \in \{1,\ldots,s\}$, and $\mathrm{Re}(\lambda_{j}^x),
\mathrm{Im}(\lambda_{j}^x) \in \IR$ for   $j \in \{
s+1, \ldots, s+t \}$,  all $a_{j,u}$ for $j \in \{1,\ldots,s\}$ and all $b_{j,u}, b_{j,u}'$ for $j \in \{
s+1, \ldots, s+t \}$ are real numbers.
As $b_{j,u} = a_{j,u} + a_{j,u}'$, this means that $a_{j,u}'$ is the conjugate
 of $a_{j,u}$, i.e., $a_{j,u}' = \overline{a_{j,u}}$ and thus, $b_{j,u} = 2
\cdot \mathrm{Re}(a_{j,u})$ and $b_{j,u}' = -2
\cdot \mathrm{Im}(a_{j,u})$.

As mentioned, to compute   $\mathrm{Re}(\lambda_{j}^x)$ and $\mathrm{Im}(\lambda_{j}^x)$,
we consider the polar representation of the non-real roots
$\lambda_{j}$, i.e., for $j \in \{s+1,\ldots,s+t\}$ let $\lambda_{j} = 
w_j\cdot e^{\theta_j \cdot i}$
with $w_j\in\mathbb R_{>0}$ and $\theta_j\in (0,2\pi)$.
Then $\lambda_{j}^x=w_j^x \cdot e^{\theta_j \cdot i \cdot x}$,
and $\mathrm{Re}(\lambda_{j}^x) = w^x_j \cdot \cos(\theta_j \cdot x)$ and
$\mathrm{Im}(\lambda_{j}^x) =w^x_j \cdot \sin(\theta_j \cdot x)$.

Note that in \cref{algorithm_exact}, one could
        also already use the representation in \cref{final}
        with $\mathrm{Re}(\lambda_{j}^x) = w^x_j \cdot \cos(\theta_j\cdot x)$ and
    $\mathrm{Im}(\lambda_{j}^x) =w^x_j \cdot \sin(\theta_j \cdot x)$ here. Then one would only
        have to solve a system of linear equations over the reals and can compute $b_{j,u}$
        and $b_{j,u}'$ directly. 

}
\end{document}